\documentclass{article}

\usepackage{microtype}
\usepackage{graphicx}
\usepackage{subfigure}
\usepackage{booktabs} %

\usepackage{amsmath,amsfonts,bm}

\def\1{\bm{1}}

\DeclareMathAlphabet{\mathsfit}{\encodingdefault}{\sfdefault}{m}{sl}
\SetMathAlphabet{\mathsfit}{bold}{\encodingdefault}{\sfdefault}{bx}{n}

\DeclareMathOperator*{\argmax}{arg\,max}

\usepackage{hyperref}
\usepackage{url}

\usepackage{breakurl}
\usepackage{xcolor}
\usepackage{url}
\usepackage[T1]{fontenc}
\usepackage{booktabs} %
\usepackage{footnote}
\usepackage{graphicx}
\usepackage{amsthm}
\usepackage{wrapfig}
\usepackage{tikz}
\usepackage{collcell}
\usepackage{adjustbox}
\usepackage{multirow}
\usepackage{xstring}
\usepackage{mathtools}
\usepackage{mleftright}
\usepackage{etoolbox}
\usepackage[normalem]{ulem}
\usepackage[capitalise,noabbrev]{cleveref}
\usepackage{amsthm}
\usepackage{thm-restate}
\usepackage{readarray}
\usepackage{xspace}
\usepackage{float}

\usepackage[linesnumbered,ruled,lined,noend]{algorithm2e}
\makeatletter
\patchcmd\algocf@Vline{\vrule}{\vrule \kern-0.4pt}{}{}
\patchcmd\algocf@Vsline{\vrule}{\vrule \kern-0.4pt}{}{}

\let\cref@old@stepcounter\stepcounter
\def\stepcounter#1{%
  \cref@old@stepcounter{#1}%
  \cref@constructprefix{#1}{\cref@result}%
  \@ifundefined{cref@#1@alias}%
    {\def\@tempa{#1}}%
    {\def\@tempa{\csname cref@#1@alias\endcsname}}%
  \protected@edef\cref@currentlabel{%
    [\@tempa][\arabic{#1}][\cref@result]%
    \csname p@#1\endcsname\csname the#1\endcsname}}
\makeatother

\SetKwComment{Hline}{}{\vspace{-3mm}\textcolor{gray}{\hrule}\vspace{1mm}}
\definecolor{darkgrey}{gray}{0.3}
\definecolor{commentcolor}{gray}{0.5}
\SetKwComment{Comment}{\color{commentcolor}[$\triangleright$\ }{}
\SetCommentSty{}
\SetNlSty{}{\color{darkgrey}}{}
\setlength{\algomargin}{4mm}
\SetKwProg{Fn}{function}{}{}
\SetKwProg{Subr}{subroutine}{}{}
\crefalias{AlgoLine}{line}%
\crefname{algocf}{Algorithm}{Algorithms}

\newtheorem{lemma}{Lemma}

\newtheorem*{theorem*}{Theorem}
\newtheorem{corollary}{Corollary}
\newtheorem{remark}{Remark}

\newcommand{\adam}[1]{}
\newcommand{\apjacob}[1]{}
\newcommand{\noam}[1]{}
\newcommand{\david}[1]{}
\newcommand{\gabri}[1]{}
\newcommand{\jda}[1]{}

\newcommand{\qre}{piKL-HedgeBot}

\newcommand{\bp}{IL Policy\xspace}
\newcommand{\sbot}{RMBot}

\usepackage{hyperref}

\newcommand{\xsubsubsection}[1]{\stepcounter{subsubsection}\noindent\textbf{\thesubsubsection~#1}\quad}

\usepackage[accepted]{icml2022}

\icmltitlerunning{Modeling Strong and Human-Like Gameplay with KL-Regularized Search}

\begin{document}

\twocolumn[
    \icmltitle{Modeling Strong and Human-Like Gameplay with KL-Regularized Search}

    \icmlsetsymbol{equal}{*}

    \begin{icmlauthorlist}
        \icmlauthor{Athul Paul Jacob}{equal,fair,mit}
        \icmlauthor{David J. Wu}{equal,fair}
        \icmlauthor{Gabriele Farina}{equal,cmu}
        \icmlauthor{Adam Lerer}{fair}
        \icmlauthor{Hengyuan Hu}{fair}
        \icmlauthor{Anton Bakhtin}{fair}
        \icmlauthor{Jacob Andreas}{mit}
        \icmlauthor{Noam Brown}{fair}
    \end{icmlauthorlist}

    \icmlaffiliation{fair}{Meta AI Research, New York, NY, USA}
    \icmlaffiliation{mit}{CSAIL, MIT, Cambridge, MA, USA}
    \icmlaffiliation{cmu}{School of Computer Science, Carnegie Mellon University, Pittsburgh, PA, USA}

    \icmlcorrespondingauthor{Athul Paul Jacob}{apjacob@mit.edu}
    \icmlcorrespondingauthor{David J. Wu}{dwu@fb.com}
    \icmlcorrespondingauthor{Gabriele Farina}{gfarina@cs.cmu.edu}
    \icmlcorrespondingauthor{Noam Brown}{noambrown@fb.com}

    \icmlkeywords{Machine Learning, ICML}

    \vskip 0.3in
]
\printAffiliationsAndNotice{*Equal Contribution.}  %

\definecolor{goodcolor}{RGB}{200,200,255}
\definecolor{midcolor}{RGB}{240,240,240}
\definecolor{badcolor}{RGB}{255,200,200}

\newcounter{MinX}
\setcounter{MinX}{0}
\newcounter{MidX}
\setcounter{MidX}{5}
\newcounter{MaxX}
\setcounter{MaxX}{10}

\newcommand{\ApplyPositiveGradient}[1]{%
    \IfDecimal{#1}{%
        \pgfmathsetmacro{\Input}{#1 * 100}%
        \ifdim \Input pt > \value{MidX} pt%
            \pgfmathsetmacro{\PercentColor}{max(min(100.0*pow((\Input - \value{MidX})/(\value{MaxX}-\value{MidX}),0.75),100.0),0.00)}%
            \colorbox{goodcolor!\PercentColor!midcolor}{#1}%
        \else%
            \pgfmathsetmacro{\PercentColor}{max(min(100.0*pow((\value{MidX} - \Input)/(\value{MidX}-\value{MinX}),0.75),100.0),0.00)}%
            \colorbox{badcolor!\PercentColor!midcolor}{#1}%
        \fi%
    }{\bf #1}%
}
\newcommand{\ApplyNegativeGradient}[1]{%
    \IfDecimal{#1}{%
        \pgfmathsetmacro{\Input}{#1 * 100}%
        \ifdim \Input pt > \value{MidX} pt%
            \pgfmathsetmacro{\PercentColor}{max(min(100.0*pow((\Input - \value{MidX})/(\value{MaxX}-\value{MidX}),0.75),100.0),0.00)}
            \colorbox{badcolor!\PercentColor!midcolor}{#1}%
        \else%
            \pgfmathsetmacro{\PercentColor}{max(min(100.0*pow((\value{MidX} - \Input)/(\value{MidX}-\value{MinX}),0.75),100.0),0.00)}%
            \colorbox{goodcolor!\PercentColor!midcolor}{#1}%
        \fi%
    }{\bf #1}%
}
\newcolumntype{R}{>{\collectcell\ApplyPositiveGradient}r<{\endcollectcell}}
\newcolumntype{Q}{>{\collectcell\ApplyNegativeGradient}r<{\endcollectcell}}

\newcommand{\KL}[2]{D_{\mathrm{KL}}(#1\,\|\,#2)}

\begin{abstract}
\vspace{-0.03in}

We consider the task of building strong but human-like policies in multi-agent decision-making problems, given examples of human behavior. Imitation learning is effective at predicting human actions but may not match the strength of expert humans, while self-play learning and search techniques (e.g. AlphaZero) lead to strong performance but may produce policies that are difficult for humans to understand and coordinate with. We show in chess and Go that regularizing search based on the KL divergence from an imitation-learned policy results in higher human prediction accuracy and stronger performance than imitation learning alone. %
We then introduce a novel regret minimization algorithm that is regularized based on the KL divergence from an imitation-learned policy, and show that using this algorithm for search in no-press Diplomacy yields a policy that matches the human prediction accuracy of imitation learning while being substantially stronger. 

\end{abstract}

\vspace{-0.1in}
\section{Introduction}
\vspace{-0.05in}

Self-play AI algorithms have matched or exceeded expert human performance in many games, such as chess~\citep{campbell2002deep,silver2018general}, Go~\citep{silver2016mastering,silver2017mastering}, and poker~\citep{moravvcik2017deepstack,brown2017superhuman,brown2019superhuman}. However, the resulting policies often differ markedly from how humans play~\citep{mcilroy2020aligning}.
This is a serious problem for human-computer interactions that involve cooperation rather than purely competition. In such settings, modeling the other participants accurately is important for success. For example, it is important for a self-driving car at a four-way stop sign to conform to existing human conventions rather than its own self-play solution to the problem~\citep{lerer2019learning}. Moreover, even in purely adversarial games, the alien nature of AI policies makes it difficult for humans to understand and learn from superhuman bots.

The classic approach toward modeling human behavior is imitation learning (IL) on human data. However, evidence in multiple games indicates that IL on expert human data produces policies that are much weaker than actual expert human play in domains with complex strategic planning.
In this paper, we study the problem of producing policies that are both \emph{strong} and \emph{human-like} 
in games with complex strategic planning like chess, Go, Hanabi, and Diplomacy. In all four, we find that conducting search with KL-regularization towards an IL policy matches or exceeds the prior state of the art for prediction accuracy of expert humans while also better matching expert human performance.

In Section~\ref{sec:mcts}, we show that Monte Carlo tree search (MCTS) with a human imitation-learned policy prior and value function surpasses prior state-of-the-art results for human prediction accuracy in chess and Go. As explained by \citet{grill2020monte}, standard MCTS with a policy prior can be viewed as a form of KL-regularized search, optimizing a value function subject to a KL-divergence term with that prior. Although MCTS has been extensively studied for developing strong agents, it has been explored much less in the context of developing human-like agents. %

Section~\ref{sec:rm} builds on these findings and shows how to generalize them to a class of imperfect-information games 
(in which ordinary MCTS is unsound and cannot be applied) via a new algorithm for KL-regularized regret minimization.
We show that existing regret minimization algorithms achieve low accuracy in predicting expert human actions in no-press Diplomacy. We then introduce the first regret minimization algorithm to incorporate a cost term proportional to the KL divergence between the search policy and a human-imitation learned \textbf{anchor policy}. We call this algorithm \textbf{policy-regularized hedge}, or \textbf{piKL-hedge}. We prove that piKL-hedge converges to an equilibrium in which all players' policies are optimal given the joint policies of the players and the cost of deviating from the anchor policy. We then present results in no-press Diplomacy showing that piKL-hedge produces policies that predict human actions as accurately as imitation learning while also improving head-to-head performance in a population of prior agents.

Appendix~\ref{appendix:hanabi} additionally shows that applying KL-regularization toward a human IL policy in the search algorithm SPARTA~\cite{lerer2020improving} produces similar or better human prediction accuracy while greatly improving performance in the benchmark domain of Hanabi~\cite{bard2020hanabi}.

Our experiments demonstrate the benefits of KL-regularized search in all four of chess, Go, no-press Diplomacy, and Hanabi to producing agents that are simultaneously more human-like and closer in strength to actual human experts than purely imitation-learned models. %

\vspace{-0.05in}
\section{Preliminaries}
\vspace{-0.05in}
We study the problem of learning policies for multiplayer games. Here we briefly introduce the key ingredients of both classes of games we study; \cref{sec:mcts} and \cref{sec:rm} give a more formal presentation tailored to individual game types and learning algorithms.

An ($N$-player) game is defined by a \textbf{state space} $S$, an \textbf{action space} $A$, a (deterministic) \textbf{transition function} $T : S \times A^N \to S$, and a collection of \textbf{reward functions} $u_i$. We model the behavior of each player in a game as a \textbf{policy} $\pi_i : S \to \Delta(A)$ (a distribution over actions given states). In every round of a game, each player observes a (possibly incomplete) view $s_i^t$ of the current state. One or more players select actions $a_i^t \sim \pi_i(\cdot \mid s_i^t)$,
then each player receives a reward $u_i^t(s^t, \mathbf{a}^t = a_1^t, \ldots, a_n^t)$, and the game transitions into a new state $s^{t+1} = T(s^t, \mathbf{a}^t)$. 
Each player $i$ aims to maximizes its expected reward, and the optimal policies for doing so may depend on the policies $\pi_{-i} = \{ \pi_1, \ldots, \pi_{i-1}, \pi_{i+1}, \ldots, \pi_N \}$ of the other players.
 
The sequential decision-making problem described above is extremely general, and in this paper we focus on two special cases. In \textbf{perfect-information} games, players make moves sequentially (e.g.,\ $u^1$ and $s^2$ depend only on $a_1$, $u^2$ and $s^3$ depend only on $a^2$, etc.). Many important games, including chess and Go, fall into this category. Next, we study a more general class of \textbf{imperfect-information, simultaneous-action} games that make no assumptions about the dependence of different $u_i$ and $T$ on $\mathbf{a}$; here we focus on games with only a single round, also called matrix games.
Owing to the large differences between these two settings, the tools for computing strong policies are quite different. The remainder of this paper accordingly offers a deeper exploration of each class of games: perfect-information games in \cref{sec:mcts} and imperfect-information games in \cref{sec:rm}.

\jda{the above can probably be cleaned up / made more consistent with the existing literature / our notation in this paper. the treatment of imperfect information in particular is currently sloppy. but I think this is roughly what we're looking for}

\vspace{-0.05in}
\section{Perfect-Information Games: Policy Regularization in Monte Carlo Tree Search}
\vspace{-0.05in}
\label{sec:mcts}

In this section, we focus on developing strong human-like agents for perfect-information games. Monte Carlo tree search (MCTS) has been highly successful for developing strong, but not necessarily human-like agents in this setting, and is a key component of general learning algorithms such as AlphaZero and MuZero capable of achieving superhuman performance in chess, Go, and similar games \citep{silver2018general,schrittwieser2020mastering}. By contrast, for developing human-like agents, the best prior human move prediction accuracies for chess and Go were all achieved with pure imitation learning on human data \cite{mcilroy2020aligning,Cazenave2017,silver2017mastering}.

The state of the art for predicting human moves in chess is the Maia engine created by \citet{mcilroy2020aligning} via pure imitation learning without any search.
However, this approach appears to be of limited effectiveness for modeling sufficiently strong humans. Although the weakest Maia models at low temperatures appear to outperform the players they imitate (due to ``averaging away'' of the imitated players' individual idiosyncratic mistakes~\citep{MaiaGuestPost}) each successive model on data from stronger players improves by much less than the players improve.\footnote{See ratings data at https://lichess.org/@/maia1, https://lichess.org/@/maia5, https://lichess.org/@/maia9} The strongest model, trained to predict human 1900-1999 rated players, even with low temperature appears to be clearly below a 1950-average level of performance in all but the minority of bullet-speed games (in which very little time is available for planning and players are forced to rely more heavily on intuition).
Similarly, in Go, pure imitation-learning agents have not exceeded mid-expert level on various online servers despite being trained on top-expert and master-level games~\citep{Cazenave2017}.

In contrast, search-based reinforcement learning (RL) agents such as AlphaZero that do not use a human policy prior play at a superhuman level, but often in non-human ways that humans find difficult to understand even when given access to interactively query and inspect the agent's analysis~\citep{silver2017mastering,philosophies5040037}.

However, we show in both chess and Go that if the human-learned model is used in MCTS with appropriate parameters, MCTS outperforms those models' human prediction accuracy while simultaneously reducing the shortcomings in those models' strength.

\vspace{-0.05in}
\subsection{Background}
\vspace{-0.05in}
We consider sequential games where each player~$i$ alternatively chooses action $a$ from a policy $\pi_i$ where, $a \sim \pi_i(\cdot\mid s)$. Each action deterministically transitions the game into a new state $s' = T(s, a)$ and gives rewards $u_i(s, a)$. Notationally, we may elide the player $i$ in some places when it is clear that $i$ is the next player to move in the state being considered.

For this work, following \citet{silver2016mastering}, we implement one of the most common modern forms of MCTS, which uses a value function $V$ predicting the expected total future reward
$V_i(s) = \mathbb{E} [\sum_t u_i(s_t,a_t) \mid \pi_1, \pi_2, s_0=s]$ and a policy prior $\tau$ (typically both the outputs of a trained deep neural net) and attempts to produce an improved policy $\pi$.

Each turn, MCTS builds a game tree over multiple iterations rooted at the current state. Each iteration $t$, MCTS explores a single path down the tree by simulating at each successive state $s$ with player $i$ to move the action:
\begin{equation}
    \argmax_a \, Q(s,a) + c_\text{puct}\tau(s,a)\frac{\sqrt{\sum_b N(s,b)}}{N(s,a)+1}
\end{equation}
where $Q(s,a)$ is the estimated expected future reward for $i$ from playing action $a$ in state $s$, the visit count $N(s,a)$ is the number of times $a$ has been explored from $s$, $\tau(s,a)$ is the prior policy probability, and $c_\text{puct}$ is a tunable parameter trading off exploration versus exploitation.

Upon reaching a state $s_t$ not yet seen, MCTS queries the value function $V_i(s_t)$ for each player $i$, and based on $V_i(s_t)$ and any intermediate rewards received, updates all $Q(s,a)$ estimates on the path traversed. The final agent policy is $\pi(s,a) = N(s,a)/\sum_b N(s,b)$ where $s$ is the root state, or optionally we may also have $\pi(s,a) \sim N(s,a)^{1/T}$ where $T$ is a temperature parameter. See also Appendix \ref{appendix:mcts} for a fuller description of MCTS.

\citet{grill2020monte} show that the agent policy $\pi$ computed by this form of MCTS is an approximate solution to the optimization problem:
\begin{equation}
    \argmax_{\pi} \sum_a Q(s,a)\pi(s,a) + \lambda \KL{\tau}{\pi}
\end{equation}
where $\lambda \sim c_\text{puct}\sqrt{N}$ and $N$ is the total number of iterations.

In other words, at every node of the tree recursively, MCTS implicitly optimizes its expected future reward subject to KL regularization of its policy towards the prior policy $\tau$ with strength controlled by $\lambda$. For any fixed computational budget $N$, we can therefore tune $c_\text{puct}$ to vary the strength of that prior, with $c_\text{puct} = \infty$ approximating the prior policy before search, and $c_\text{puct} \rightarrow 0$ approaching a greedy argmax of the $Q$ value estimates.

If our goal is a strong \emph{human-like} agent rather than solely a strong agent, and the KL-regularizing policy is learned from human data, then that policy serves not just as a prior, but also as an \textbf{anchor policy} that regularizing towards is desirable in and of itself. With good choice of $c_\text{puct}$, MCTS can improve that policy while remaining close to human. Our experiments confirm that MCTS improves on the strength and human prediction accuracy of the best existing models in both chess and Go.

\vspace{-0.05in}
\subsection{Experiments in Chess and Go}
\vspace{-0.05in}
\label{sec:experimentschessgo}

\label{sec:improvedpredictionchess}

\begin{figure*}
    \begin{minipage}{9.4cm}
        \begin{table}[H]
            \scalebox{.9}{%
                \setlength{\tabcolsep}{.8mm}
                \def\arraystretch{1.31}
                \small
                \begin{tabular}{@{}rrrrRRRRR@{}}
                    \toprule
                    \multirow{2}{*}{\bf Game} & \multirow{2}{*}{\textbf{Model}} & \multirow{2}{*}{\textbf{ Predicting}} & \multirow{2}{*}{\begin{minipage}{0.9cm}~~\textbf{Raw}\newline\textbf{Model}\end{minipage}} & \multicolumn{5}{c}{\textbf{Model + MCTS}}                                   \\ \cmidrule(l){5-9}
                                              &                                 &                                       &                                            & $c_\text{puct}\!=\!10$                    & ${5}$ & ${2}$ & ${1}$ & ${0.5}$ \\
                    \midrule
                    \setcounter{MinX}{4750}%
                    \setcounter{MidX}{5110}%
                    \setcounter{MaxX}{5200}%
                    Chess                     & Maia1100                        & Rating 1100                           & \ApplyPositiveGradient{51.1}               & 51.2                                      & 51.3  & 50.8  & 49.5  & 47.4    \\
                    \setcounter{MinX}{5000}%
                    \setcounter{MidX}{5240}%
                    \setcounter{MaxX}{5340}%
                    Chess                     & Maia1500                        & Rating 1500                           & \ApplyPositiveGradient{52.4}               & 52.7                                      & 52.9  & 52.8  & 51.9  & 50.1    \\
                    \setcounter{MinX}{5240}%
                    \setcounter{MidX}{5320}%
                    \setcounter{MaxX}{5440}%
                    Chess                     & Maia1900                        & Rating 1900                           & \ApplyPositiveGradient{53.2}               & 53.6                                      & 54.0  & 54.3  & 53.8  & 52.4    \\
                    \midrule
                    \setcounter{MinX}{5470}
                    \setcounter{MidX}{5780}
                    \setcounter{MaxX}{5850}
                    Go                        & \citeauthor{Cazenave2017}       & GoGoD                                 & \ApplyPositiveGradient{54.7}                                                                                             \\
                    Go                        & \citet*{wu2018gonn}             & GoGoD                                 & \ApplyPositiveGradient{55.3}                                                                                             \\
                    Go                        & \citeauthor{Wang2017DANNLTE}             & GoGoD                                 & \ApplyPositiveGradient{57.7}                                                                                             \\
                    Go                        & Ours                            & GoGoD                                 & \ApplyPositiveGradient{57.8}               & 58.1                                      & 58.3  & 58.5  & 58.1  & 57.1    \\ \bottomrule
                \end{tabular}}
            \vspace{-0.1in}
            \caption{\% top-1 test accuracy predicting human moves in chess and Go using MCTS with 50 playouts and various $c_\text{puct}$, or raw model without MCTS, equivalent to infinite $c_\text{puct}$. In chess, first 10 ply per game and moves with < 30s time left excluded, similar to \cite{mcilroy2020aligning}. Standard error is approx 0.1 or less on all values. MCTS on top of current state-of-the-art models improves human prediction accuracy significantly.
                \label{table:chess-and-go-accuracy}
            }
        \end{table}
    \end{minipage}%
    \hfill%
    \begin{minipage}{7.4cm}
        \begin{figure}[H]%
            \centering%
            \begin{tikzpicture}%
                \tikzstyle{lbl}=[rounded corners=.3mm,draw=gray!40,fill=white,inner ysep=.3mm]
                \node at (0, 0) {\includegraphics[scale=.81]{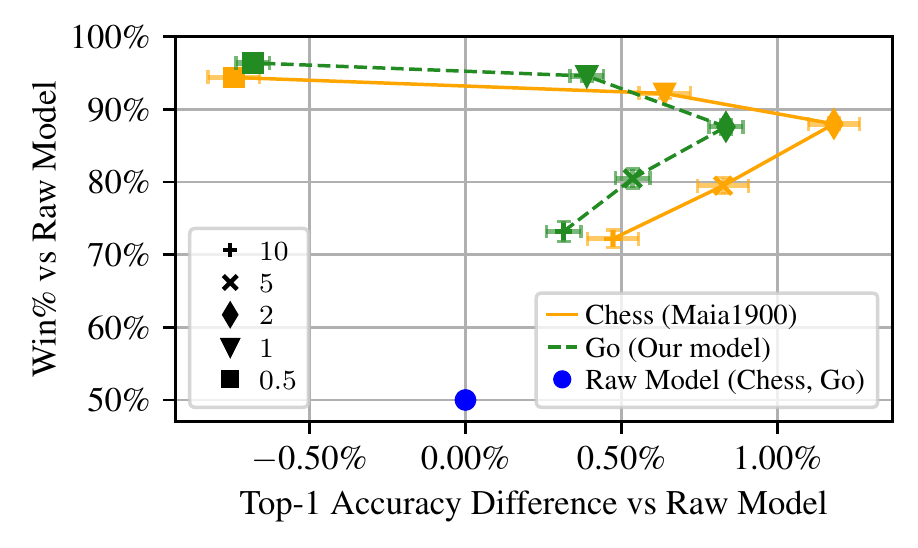}};
                \node[lbl] at (-17.5mm, 4.5mm) {\scalebox{.8}{\bf $c_\text{puct}$}};
                \node[lbl] at (20mm, -1mm) {\scalebox{.6}{\bf Model}};
            \end{tikzpicture}\\[-5mm]
            \caption{\small Improvement in top-1 accuracy of Maia1900 in Chess or our GoGoD model in Go using MCTS, plotted versus winrate of MCTS against the raw model (temperature 1). Error bars indicate $1$ standard error. Many $c_\text{puct}$ values increase both human prediction accuracy and winrate over the raw model in both Chess and Go.}
            \label{fig:gochesscurve}
        \end{figure}
    \end{minipage}
    \vspace{-0.1in}
\end{figure*}

In chess and Go, we ran two main experiments each. First, in chess using the prior state-of-the-art Maia models from \citet{mcilroy2020aligning} and in Go using a model trained on professional games from the GoGoD dataset, we demonstrate that MCTS with that model outperforms the raw model in human prediction accuracy. Second, we also sanity-check that MCTS with the same parameters greatly improves the strength of the same models in chess and Go. %

\xsubsubsection{Data and Model Architecture}
\label{sec:chessgoarchitecture}
In chess, for the human-learned anchor policy we use the pre-trained Maia1100, Maia1500, and Maia1900 models from \citet{mcilroy2020aligning}, achieving state-of-the-art performance on rating-conditional human move prediction. These models follow a standard AlphaZero-like residual block architecture, including both a policy and a value head, and were trained to imitate players in ratings ``buckets'' 1100, 1500, and 1900 respectively, based on roughly 10 million games each from the popular Lichess server (each bucket contains games between players from rating N to N+99). %
\begin{figure}[ht!]
    \centering
    \includegraphics[width=8cm]{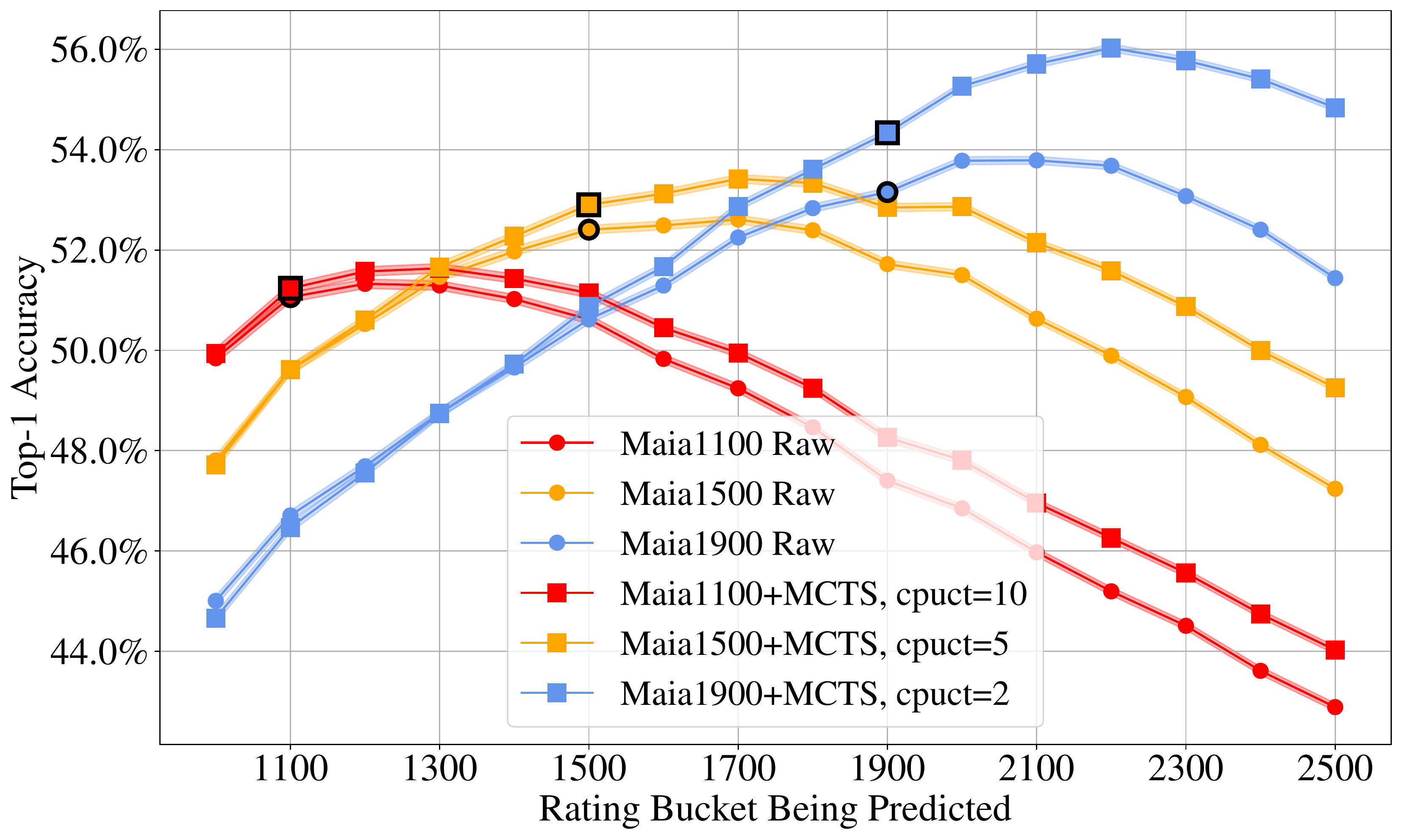}
    \vspace{-0.2in}
    \caption{Top-1 \% test accuracy in chess for Maia models trained to predict moves by players in rating buckets 1100, 1500, 1900, applied to predict \emph{all} rating buckets, with and without MCTS. The black bolded outline indicates where the model is predicting the same rating as it was trained on. Standard error is very small, the slight thin shade around each line. MCTS most improves prediction of a model on players of its target rating and higher.}
    \label{fig:maiabyrating}
    \vspace{-0.1in}
\end{figure}

For Go, we trained a deep neural net on the GoGoD professional game dataset\footnote{https://gogodonline.co.uk/}. We match \citet{Cazenave2017} in using games from 1900 through 2014 for training and 2015-2016 as the test set, with roughly 73000 and 6500 games, respectively.
Our architecture matches the 20-block residual net structure used by some versions of AlphaZero \citep{silver2017mastering}, except adds squeeze-and-excitation layers which have been successful in image processing tasks~\citep{hu2018SE} and self-play learning in chess and Go~\citep{lc0net,minigoSE}. See Appendix \ref{appendix:gotraining} for additional details.

\xsubsubsection{Improved Human Prediction and Strength}
In Table \ref{table:chess-and-go-accuracy} we show the top-1 accuracy of MCTS at predicting players in chess and Go. MCTS on top of each model tested outperforms that model at predicting human moves.

In chess, the benefit provided increases greatly as the rating of players predicted increases, while the optimal choice for $c_\text{puct}$ appears to decrease (allowing increasingly small value differences to affect the search). This is consistent with the intuitive hypothesis that stronger players plan more deeply, increasing the value of explicitly modeling planning, and that they are more sensitive to small future value differences. In Go, despite our baseline model being equal or better than all prior imitation-learning models on the GoGoD human pro games dataset, MCTS improves it yet further.

In Figure \ref{fig:maiabyrating}, for chess we see that while KL-regularized search improves each model's accuracy on players of its target rating, surprisingly, the improvement grows yet larger when each model predicts players of \emph{higher} rating than it was trained on. This suggests that as human players improve, the incremental average change in their behavior resembles or is correlated with the way that highly-regularized search improves the strength of a baseline policy.

Additionally, in Appendix \ref{appendix:piklcrossentropy}, we show that if we apply post-processing to the MCTS policy based on \citet{grill2020monte}, MCTS improves cross entropy with human moves in both chess and Go, not just top-1 accuracy. In other words, not only does policy-regularized search improve the prediction of the top move, but it also better models the overall \emph{distribution} of moves that humans may likely play.

We measure the strength impact of regularized search with 1000 games\footnote{In Go, we also use the open-source KataGo~\citep{Wu2020Go} to determine when the game is over and to score the result. Unlike RL agents, humans which our models imitate universally pass and score the game well before it becomes mechanically scorable, so we use KataGo as a neutral judge.} per $c_\text{puct}$ setting between the raw model policy and the MCTS policy, sampling each at temperature 1.
Figure \ref{fig:gochesscurve} shows the change in human prediction accuracy of MCTS in both chess and Go plotted jointly versus winrate of MCTS against the raw model. Rather than solely a tradeoff between strength and accuracy, most $c_\text{puct}$ values in the range we tested increase \emph{both}, some achieving more than 90\% winrate while still improving human prediction. See Appendix \ref{appendix:chessgo_experiments} for results at lower temperature and evidence that accuracy improves further at longer time controls. 

Although we did not test against humans directly to calibrate, this gives clear evidence that a well-tuned human-regularized MCTS agent would be better able to match the 1900-1999-rated chess players that Maia1900 currently falls hundreds of Elo short of imitating, while simultaneously being more accurate to their move-by-move behavior, and similarly for human-imitation agents in Go.

\vspace{-0.05in}
\section{Imperfect-Information Games: Policy-regularized Regret Minimization}
\vspace{-0.05in}
\label{sec:rm}

While MCTS is a popular search algorithm for perfect-information deterministic games, it is not able to compute optimal policies in imperfect-information games.
Instead, iterative algorithms based on regret minimization are the leading approach to search in imperfect-information games.

\textbf{Hedge}~\citep{littlestone1994weighted,freund1997decision} is an iterative regret minimization algorithm that in general converges to a coarse correlated equilibrium~(CCE)~\citep{hannan1957approximation}. In the special case of two-player zero-sum games, it also converges to a Nash equilibrium~(NE)~\citep{nash1951non}. 

\textbf{Regret Matching~(RM)}~\citep{blackwell1956analog,hart2000simple} is another equilibrium-finding algorithm similar to Hedge that has historically been more popular and that we compare our algorithm to in this paper.

The effectiveness of the implicit KL-regularization in MCTS that we study in \cref{sec:mcts} motivates us to develop an equilibrium-finding algorithm called piKL-Hedge that similarly biases the search towards an anchor policy. In \cref{sec:results_diplomacy}, we show that piKL-Hedge achieves better empirical performance against baseline human-imitation models than Hedge and RM in a large imperfect-information game, as well as much higher human prediction accuracy.

\renewcommand{\vec}[1]{\bm{#1}}
\newcommand{\anch}{\pi_{\text{anchor}}}
\newcommand{\defeq}{\coloneqq}

\vspace{-0.05in}
\subsection{Background}
\vspace{-0.05in}
\label{sec:hedge_background}
We consider a game with $\mathcal{N}$ players where each player~$i$ chooses an action~$a$ from a set of actions $\mathcal{A}_i$. We denote the actions of all players other than~$i$ as $\vec{a}_{-i}$. After all players simultaneously choose an action, player~$i$ receives a reward of $u_i(a, \vec{a}_{-i})$. Players may also choose a probability distribution over actions, where the probability of action~$a$ is denoted $\vec{\pi}_i(a)$ and the vector of probabilities is denoted $\vec{\pi}_i$. We also define the fixed policy that we are interested in biasing player $i$ towards as the anchor policy $\vec{\tau}_i \in \Delta(A_i)$.

Each player~$i$ maintains a \textbf{regret} for each action. The regret on iteration~$t$ is denoted $R_i^t(a)$. Initially, all regrets are zero. On each iteration~$t$ of Hedge, $\vec{\pi}^t_i(a)$ is set according to
\begin{equation} \label{eq:hedge}
    \vec{\pi}^{t}_i(a) \propto
        \exp\big(R^t_i(a)\big)
\end{equation}

Next, each player samples an action $a^*$ from $\mathcal{A}_i$ according to $\pi_i^t$ and all regrets are updated such that
\begin{equation} \label{eq:srm}
R_i^{t+1}(a) = R_i^{t}(a) + u_i(a, \vec{a}^*_{-i}) - \sum_{a' \in \mathcal{A}_i} \vec{\pi}^t_i(a') u_i(a'_i, \vec{a}^*_{-i})
\end{equation}
It is proven that the \emph{average} policy of Hedge over all iterations converges to a NE in two-player zero-sum games and, more broadly, the players' joint policy distribution converges to a CCE as $t \rightarrow \infty$.

We wish to model agents that seek to maximize their expected reward, while at the same time playing ``close'' to the anchor policy. The two goals can be reconciled by defining a composite utility function that adds a penalty based on the ``distance'' between the player policy and their anchor policy, with coefficient $\lambda_i\in[0,\infty)$ scaling the penalty. %

For each player~$i$, we define $i$'s utility as a function of the the agent policy $\vec{\pi}_i \in \Delta(A_i)$ given policies $\vec{\pi}_{-i}$ of all other agents:
\begin{align}\label{eq:regularized C}
\mathcal{U}_{i}(\vec{\pi}_i, \vec{\pi}_{-i}) &\defeq u_i(\vec{\pi}_i, \vec{\pi}_{-i}) - \lambda_i\,\KL{\vec{\pi}_i}{\vec{\tau}_i}. 
\end{align}

When $\lambda$ is large, the utility function is dominated by the KL-divergence term $\lambda_i\,\KL{\vec{\pi}_i}{\vec{\tau}_i}$, and so the agent will naturally tend to play a policy $\vec{\pi}_i$ close to the anchor policy $\vec{\tau}_i$ \footnote{The careful reader may observe that the direction of the KL-divergence term, $\KL{\pi}{\tau}$ is the opposite of the direction implicit in MCTS, $\KL{\tau}{\pi}$. We choose this direction for greater ease of theoretical analysis and implementation in the context of regret minimization; for our use cases we have not found the exact form of the loss to be critical so much as simply doing any reasonable regularized search.}. When $\lambda_i$ is small, the dominating term is the rewards $u_i(\vec{\pi}_i, \vec{a}^t_{-i})$ and so the agent will tend to maximize reward without as closely matching the anchor policy $\vec{\tau}_i$. These statements are made precise in ~\cref{cor:distance} and \cref{cor:exploitability}.

\vspace{-0.05in}
\subsection{No-Regret Learning for Policy-Regularized Utilities}
\vspace{-0.05in}
In this section, we present \cref{algo:noregret}, a no-regret algorithm based on Hedge for any player~$i$ to learn strong policies relative to the regularized utilities defined in~\eqref{eq:regularized C}. 
As we show in \cref{prop:regret} in \cref{sec:proofs}, it guarantees that each player $i$ accumulates sublinear regret (of order $\log T$) with respect to the regularized utility functions:
\[
    \mathcal{U}^t_{i}(\vec{\pi}_i) \defeq \mathcal{U}_{i}(\vec{\pi}_i, \vec{a}_{-i}^t) = u_i(\vec{\pi}_i, \vec{a}^t_{-i}) - \lambda_i\,\KL{\vec{\pi}_i}{\vec{\tau}_i},
\]
no matter the opponents' actions $\vec{a}_{-i}^t$ at each time $t$.

\begin{figure}[ht]
\vspace{-0.05in}
    \begin{minipage}{\columnwidth}
        \SetInd{0.4em}{0.6em}
        \begin{algorithm}[H]\caption{\textsc{piKL-Hedge} (for Player~$i$)}\label{algo:noregret}
            \DontPrintSemicolon
            \KwData{\mbox{~\textbullet~} $A_i$ set of actions for Player~$i$;\newline
                \mbox{~\textbullet~} $u_i$ reward function for Player~$i$;\newline
                \mbox{~\textbullet~} $\eta > 0$ learning rate hyperparameter.}
            \BlankLine
            \Fn{\normalfont\textsc{Initialize}()}{
                $t \gets 0$\;

                \For{\normalfont\textbf{each} action $a \in A_i$}{
                    $\mathsf{CV}^0_i(a) \gets 0$\;
                }
            }
            \Hline{}
            \Fn{\normalfont\textsc{Play}()}{
                $t \gets t + 1$\;
                
                let $\vec{\pi}^t_{i}$ be the policy such that 
                \begin{equation}\label{eq:distribution}
                    \vec{\pi}^t_{i}(a) \propto \exp\mleft\{\frac{\eta\,\mathsf{CV}^{t-1}_i(a) + t\lambda_i\eta\,\log\tau_i(a)}{1+t\lambda_i\eta}\mright\}.
                \end{equation}\;\label{line:pick a}\vspace{-4mm}

                sample an action $a^t \sim \vec{\pi}^t_{i}$\;

                play $a^t$ and observe actions $\vec{a}^t_{-i}$ played by the opponents\;
                
                \For{\normalfont\textbf{each} $a \in A_i$}{
                    $\mathsf{CV}^{t}_i(a) \gets \mathsf{CV}^{t-1}_i(a) + u_i(a, \vec{a}^t_{-i})$\;
                }
            }
        \end{algorithm}
\end{minipage}
\end{figure}

As with many other regret-minimization methods, we consider the \emph{average} policy of each player $i$ over $T$ iterations: %
\begin{equation}\label{eq:avg policy}
    \vspace{-0.03in}
    \bar{\vec{\pi}}_{i}^T \defeq \frac{1}{T}\sum_{t=1}^T \vec{\pi}^t_{i}
    \vspace{-0.03in}
\end{equation}
where $\vec{\pi}^t_{i}$ is defined in~\eqref{eq:distribution}. We take $\bar{\vec{\pi}}^T_i$ to be the final agent policy produced by \qre in no-press Diplomacy (as described in more detail in \cref{sec:results_diplomacy}). %
As shown in \cref{sec:proofs}, the KL-divergence of $\bar{\vec{\pi}}^T_i$ from the anchor policy $\vec{\tau}_i$ converges to be inversely proportional to $\lambda_i$:

\begin{restatable}{theorem}{thmdistancefromtau}\label{cor:distance}
    (piKL stays close to the anchor policy) Upon running \cref{algo:noregret} for $T$ iterations in a multiplayer general-sum game, the policy $\bar{\vec{\pi}}_i^T$ is at a distance
    \[
        \KL{\bar{\vec{\pi}}^T_i}{\vec{\tau}_i} \le \frac{1}{\lambda_i}\mleft(\frac{R_i^T}{T} + D_i\mright),
    \]
    where $D_i$ is any upper bound on possible rewards for Player $i$. 
    In particular, if $\eta > 0$ is set so that $R_i^T = o(T)$, then $\KL{\bar{\vec{\pi}}^T_i}{\vec{\tau}_i} \to D_i/\lambda_i$ as $T \to +\infty$.
\end{restatable}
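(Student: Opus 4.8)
The plan is to derive the bound directly from the no-regret guarantee of \cref{prop:regret}, exploiting the fact that the anchor policy $\vec{\tau}_i$ is itself an admissible comparator against which the regularized regret is measured. Recall that $R_i^T$ is the cumulative regret of \cref{algo:noregret} with respect to the regularized utilities $\mathcal{U}^t_i$, namely
\[
    R_i^T = \max_{\hat{\vec{\pi}}_i \in \Delta(A_i)} \sum_{t=1}^T \mathcal{U}^t_i(\hat{\vec{\pi}}_i) - \sum_{t=1}^T \mathcal{U}^t_i(\vec{\pi}^t_i).
\]
Since the maximum ranges over all policies, specializing the comparator to $\hat{\vec{\pi}}_i = \vec{\tau}_i$ gives $R_i^T \ge \sum_{t} \mathcal{U}^t_i(\vec{\tau}_i) - \sum_t \mathcal{U}^t_i(\vec{\pi}^t_i)$. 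The key simplification is that $\KL{\vec{\tau}_i}{\vec{\tau}_i} = 0$, so the regularization term vanishes for the comparator and $\mathcal{U}^t_i(\vec{\tau}_i) = u_i(\vec{\tau}_i, \vec{a}^t_{-i})$.

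First I would expand both regularized utilities using their definition in \eqref{eq:regularized C} and rearrange to isolate the cumulative KL penalty of the played policies:
\[
    \lambda_i \sum_{t=1}^T \KL{\vec{\pi}^t_i}{\vec{\tau}_i} \le R_i^T + \sum_{t=1}^T \mleft( u_i(\vec{\pi}^t_i, \vec{a}^t_{-i}) - u_i(\vec{\tau}_i, \vec{a}^t_{-i}) \mright).
\]
Next I would bound each reward gap: since $D_i$ upper-bounds the rewards and rewards are nonnegative, every summand satisfies $u_i(\vec{\pi}^t_i, \vec{a}^t_{-i}) - u_i(\vec{\tau}_i, \vec{a}^t_{-i}) \le D_i$, which yields $\lambda_i \sum_t \KL{\vec{\pi}^t_i}{\vec{\tau}_i} \le R_i^T + T D_i$. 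Dividing by $\lambda_i T$ bounds the time-averaged per-step KL penalty by $\frac{1}{\lambda_i}(R_i^T/T + D_i)$.

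Finally I would pass from this average of per-step divergences to the divergence of the averaged policy $\bar{\vec{\pi}}_i^T$ from \eqref{eq:avg policy}. Because $\KL{\cdot}{\vec{\tau}_i}$ is convex in its first argument, Jensen's inequality gives $\KL{\bar{\vec{\pi}}_i^T}{\vec{\tau}_i} \le \frac{1}{T}\sum_t \KL{\vec{\pi}^t_i}{\vec{\tau}_i}$, which chains with the previous estimate to produce the stated bound. The asymptotic claim follows at once: if $\eta$ is chosen so that $R_i^T = o(T)$ (as guaranteed by the $O(\log T)$ bound of \cref{prop:regret}), then $R_i^T/T \to 0$ and the right-hand side tends to $D_i/\lambda_i$.

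The main obstacle is conceptual rather than computational: the crux is recognizing that the regularized-utility regret can be specialized to the anchor comparator, whose self-divergence is zero, which is exactly what makes the KL penalty of the iterates appear on the correct side of the inequality rather than cancel. The only step requiring genuine care is the reward-gap bound, which relies on rewards being nonnegative and upper-bounded by $D_i$; the rest is a routine rearrangement followed by the standard convexity-of-KL argument.
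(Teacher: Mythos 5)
Your proposal is correct and takes essentially the same route as the paper's proof: lower-bound the regret from its definition, make the comparator's KL term vanish, bound the per-step reward gap by $D_i$, and pass to the average policy via convexity of the KL divergence (Jensen). If anything, your explicit specialization of the comparator to $\vec{\tau}_i$ (so that $\KL{\vec{\tau}_i}{\vec{\tau}_i}=0$ exactly) is cleaner than the paper's corresponding step, which drops $-\lambda_i\KL{\vec{\pi}^*_i}{\vec{\tau}_i}$ while keeping the maximum over $\vec{\pi}^*_i$ --- a step whose stated inequality direction is only valid because the final bound holds uniformly in $\vec{\pi}^*_i$, which your argument makes explicit.
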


We can also show (see \cref{sec:proofs}) that in the case of a \emph{two-player zero-sum} game, $\bar{\vec{\pi}}^T_i$ approximates a Nash equilibrium of the original utility functions, with the approximation guarantee controlled by $\lambda$:

\begin{restatable}{theorem}{thmexploitability}\label{cor:exploitability}
    Let $(\bar{\vec{\pi}}_1,\bar{\vec{\pi}}_2)$ be any limit point of the average policies $(\bar{\vec{\pi}}_{1}^T, \bar{\vec{\pi}}_2^T)$ of the players. Almost surely, $(\bar{\vec{\pi}}_1,\bar{\vec{\pi}}_2)$ is a $(\max_{i=1,2}\{\lambda_i\beta_i\})$-approximate Nash equilibrium policy with respect to the original utility functions $u_i$, where $\beta_i$ is as defined in \eqref{eq:def beta i}.
    \jda{seems like we should probably mention that this is vacuous if any 1/tau(a) is bigger than the greatest difference between rewards. when should we expect this not to be the case? e.g.\ with a diplomacy-sized action space it's definitely problematic...} \adam{I think it's ``if any $\lambda log(1/tau(a))$ is bigger...'', which is much less of a big deal because of the log and because $\lambda$ is typically small. Plus I think that it's really only among actions $a$ in the support of the equilibrium, although I'm not sure (Gabriele?)} 
    \jda{oh, if it's only about the support of the equilibrium we should definitely mention that}\gabri{Adam: the proof we have right now would not be strong enough to support that, and I'm not 100\% the result would hold with what you're saying. But I'm happy to try and think about it if you guys think it would make a substantial difference.}\adam{Well, I suspect you could at least restrict $\tau(a)$ to non-dominated actions? because if there's an irrelevant awful action that has $\pi(a)=0$ then the bound goes to infinity but the piKL equilibrium should be the same quality. Anyway, even leaving this aside, imagine that $1/\tau \geq 10^{-10}$, then with $\lambda=1e-2$ we still have $\lambda\log(1/\tau(a)) = 0.23.$}
\end{restatable}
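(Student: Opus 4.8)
The plan is to exploit the fact that the KL-regularized two-player zero-sum game is equivalent to a convex--concave saddle-point problem, and then to translate a Nash equilibrium of that \emph{regularized} game into an approximate Nash equilibrium of the \emph{original} game. Writing $u_1 = -u_2$ and collecting the two players' regularizers, I would introduce the objective
\[
  \Phi(\vec{\pi}_1,\vec{\pi}_2) \defeq u_1(\vec{\pi}_1,\vec{\pi}_2) - \lambda_1\KL{\vec{\pi}_1}{\vec{\tau}_1} + \lambda_2\KL{\vec{\pi}_2}{\vec{\tau}_2},
\]
which is concave in $\vec{\pi}_1$ and convex in $\vec{\pi}_2$, since $u_1$ is bilinear and $\KL{\cdot}{\vec{\tau}_i}$ is convex. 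A profile is a Nash equilibrium of the regularized game \eqref{eq:regularized C} if and only if it is a saddle point of $\Phi$; because the regularizers are strictly convex this saddle point is unique and interior, and the duality gap $\max_{\vec{\pi}_1}\Phi(\vec{\pi}_1,\bar{\vec{\pi}}_2)-\min_{\vec{\pi}_2}\Phi(\bar{\vec{\pi}}_1,\vec{\pi}_2)$ is nonnegative and vanishes exactly there.

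The first main step is to show that any limit point of the average iterates equals this saddle point, almost surely. Here I would run the standard folk-theorem argument, adapted to (i) the regularized utilities and (ii) the fact that \cref{algo:noregret} plays \emph{sampled} actions rather than full distributions. Using the $O(\log T)$ regret bound from \cref{prop:regret} for each player against the realized opponent actions, together with linearity of $u_1$ in each argument, I would rewrite player~$1$'s regret inequality as an upper bound on $\max_{\vec{\pi}_1}\Phi(\vec{\pi}_1,\bar{\vec{\pi}}_2^T)$ and player~$2$'s as a lower bound on $\min_{\vec{\pi}_2}\Phi(\bar{\vec{\pi}}_1^T,\vec{\pi}_2)$; the common bilinear term $\tfrac1T\sum_t u_1(\vec{\pi}_1^t,\vec{\pi}_2^t)$ cancels upon subtraction, leaving $\mathrm{gap}\le (R_1^T+R_2^T)/T + o(1)$. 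Two auxiliary facts are needed: convexity of KL (Jensen) to replace $\tfrac1T\sum_t\KL{\vec{\pi}_i^t}{\vec{\tau}_i}$ by $\KL{\bar{\vec{\pi}}_i^T}{\vec{\tau}_i}$ in the favorable direction, and a martingale concentration argument (bounded increments, hence Azuma--Hoeffding together with Borel--Cantelli) to replace the empirical distribution of sampled opponent actions by the average mixed strategy $\bar{\vec{\pi}}_{-i}^T$, uniformly, up to an almost-surely vanishing error. Since $R_i^T = O(\log T) = o(T)$, the gap tends to $0$ almost surely, and by continuity of $\Phi$ and of the best-response values, any limit point has zero gap and therefore coincides with the regularized Nash.

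The second step is the clean payoff. At the regularized Nash $(\bar{\vec{\pi}}_1,\bar{\vec{\pi}}_2)$, optimality of $\bar{\vec{\pi}}_i$ for the regularized utility gives, for every deviation $\vec{\pi}_i'$,
\[
  u_i(\vec{\pi}_i',\bar{\vec{\pi}}_{-i}) - u_i(\bar{\vec{\pi}}_i,\bar{\vec{\pi}}_{-i})
  \le \lambda_i\KL{\vec{\pi}_i'}{\vec{\tau}_i} - \lambda_i\KL{\bar{\vec{\pi}}_i}{\vec{\tau}_i}
  \le \lambda_i\KL{\vec{\pi}_i'}{\vec{\tau}_i} \le \lambda_i\beta_i,
\]
where I drop the nonnegative term $\lambda_i\KL{\bar{\vec{\pi}}_i}{\vec{\tau}_i}$ and use that $\beta_i$ from \eqref{eq:def beta i} upper-bounds $\sup_{\vec{\pi}}\KL{\vec{\pi}}{\vec{\tau}_i} = \max_a\log(1/\tau_i(a))$, attained at a point mass on the least-likely anchor action. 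Hence no player can gain more than $\lambda_i\beta_i$ by deviating in the original game, yielding the claimed $(\max_{i}\lambda_i\beta_i)$-approximate Nash guarantee.

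I expect the main obstacle to be the first step, specifically the almost-sure, uniform passage from the sampled-action regret guarantee to a statement about the mixed-strategy averages $\bar{\vec{\pi}}_i^T$: this is exactly where the ``almost surely'' qualifier originates and is the only place requiring a genuine concentration argument rather than a deterministic inequality. The convex--concave reformulation and the final deviation bound are then routine.
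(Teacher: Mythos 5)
Your proposal is correct and takes essentially the same route as the paper: you first show (via the $O(\log T)$ regret bound, Azuma--Hoeffding concentration over the sampled opponent actions, convexity of the KL divergence, and Borel--Cantelli) that any limit point is almost surely a Nash equilibrium of the KL-regularized game, exactly as in \cref{prop:ne} and \cref{cor:limit is nash}, and then convert this into a $(\max_i\{\lambda_i\beta_i\})$-approximate Nash equilibrium of the original game by dropping the nonnegative term $\lambda_i\KL{\bar{\vec{\pi}}_i}{\vec{\tau}_i}$ and bounding the deviation's divergence by $\beta_i$. Your saddle-point/duality-gap framing of the first step and your direct use of $\sup_{\vec{\pi}}\KL{\vec{\pi}}{\vec{\tau}_i}=\beta_i$ are only cosmetic repackagings of the paper's sum-of-exploitabilities argument and its entropy-plus-cross-entropy decomposition, respectively.
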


Lastly, we remark that in the special case that $\vec{\tau}_i$ is the uniform policy for all players $i$, the above results imply that \cref{algo:noregret} converges towards a \textbf{quantal response equilibrium}~\citep{mckelvey19956qre}, in which an imperfect agent is modeled as choosing actions with probability exponentially decaying in the amount that each action is worse than the best action(s), given that all other agents behave the same way. Our method can be seen as a generalization that takes into account a human-learned prior for what actions may be more likely.

\vspace{-0.05in}
\subsection{Diplomacy Experiments}
\vspace{-0.05in}
\label{sec:results_diplomacy}
Diplomacy is a benchmark 7-player simultaneous-action game featuring both cooperation and competition. In \cref{appendix:rules_diplomacy}, we summarize the rules of the game.
Using piKL-Hedge, we develop an agent piKL-HedgeBot and show that it improves upon prior approaches for the game.
In \cref{appendix:blotto}, we also illustrate the key features of piKL-Hedge in Blotto, a famous 2-player simultaneous action game.

\xsubsubsection{Algorithms and Models}
\label{sec:piklhedge_results}
In no-press Diplomacy, we compare the different equilibrium search algorithms (RM, Hedge and piKL-Hedge) using the procedure introduced by \citet{gray2020human}. We perform 1-ply lookahead where on each turn, we sample up to 30 of the most likely actions for each player from a policy network trained via imitation learning on human data (IL policy). We then consider the 1-ply subgame consisting of those possible actions where the rewards for a given joint action are given by querying a value network trained on human game data as in \citet{gray2020human}. We play according to the approximate equilibrium computed for that subgame by that algorithm. For piKL-Hedge, the anchor policy is simply the same human-trained policy network. Our baseline policy and value models also contain a few improvements over prior models for no-press Diplomacy, described in Appendices \ref{appendix:diplomacyarchitecture} and \ref{appendix:propval}. 

In our experiments, we label our RM, Hedge, and piKL-Hedge agents as \sbot, HedgeBot, and piKL-HedgeBot respectively. We compare also against SearchBot \cite{gray2020human} (similar to \sbot~but using the models from \citet{gray2020human} rather than our models).
See \cref{appendix:hyperparams} for more details about the hyperparameters used.

\xsubsubsection{Strong, human-like play with piKL-Hedge}
\label{sec:result_strong_human}
Similar to Chess and Go, we compare the human prediction accuracy of RMBot, HedgeBot, piKL-HedgeBot (with different $\lambda$s) to the IL anchor policy, as well as testing their head-to-head strength. In particular, we test their ability to predict human moves in 226 no-press Diplomacy games from a validation set, and measure their score against the IL policy across 700 games each.

In \cref{fig:diplomacysosaccuracycurveL} (Left), we present the average top-1 accuracy of unit orders in each action predicted by these methods as well as their average scores against 6 IL anchor policies. The raw IL model ($\lambda = \infty$) predicts human moves with high accuracy but is weak and achieves low average score. Unregularized Hedge and RM ($\lambda = 0$) achieve high score but low human prediction accuracy. 
By contrast, piKL-HedgeBot with different $\lambda$ achieves a variety of highly favorable combinations of the two.
$\lambda=10^{-1}$ gives about the same top-1 accuracy as the IL policy but improves score by a factor of 1.4x over the IL anchor policy.
$\lambda=10^{-3}$ outperforms unregularized search methods in both score (by \textasciitilde 5\%) and human prediction accuracy (by \textasciitilde 6\%). Mild regularization improves average score, rather than harming it. 

\begin{figure*}
\begin{minipage}[b]{11.8cm}
\begin{figure}[H]
\def\sc{.56}%
\includegraphics[scale=\sc]{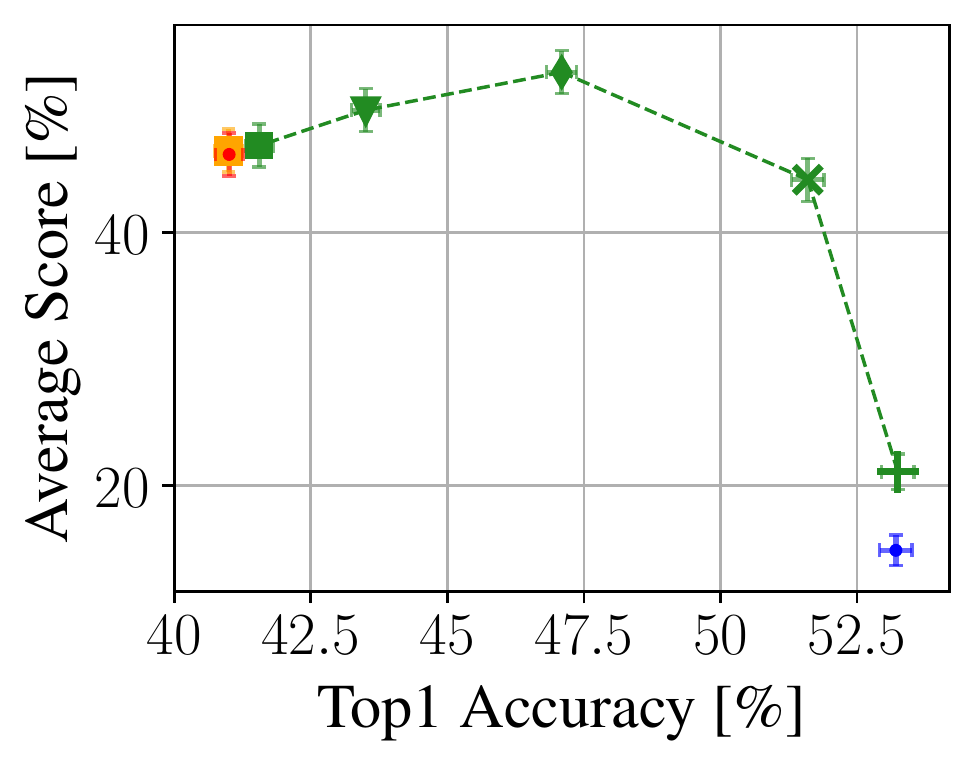}\hfill%
\includegraphics[scale=\sc]{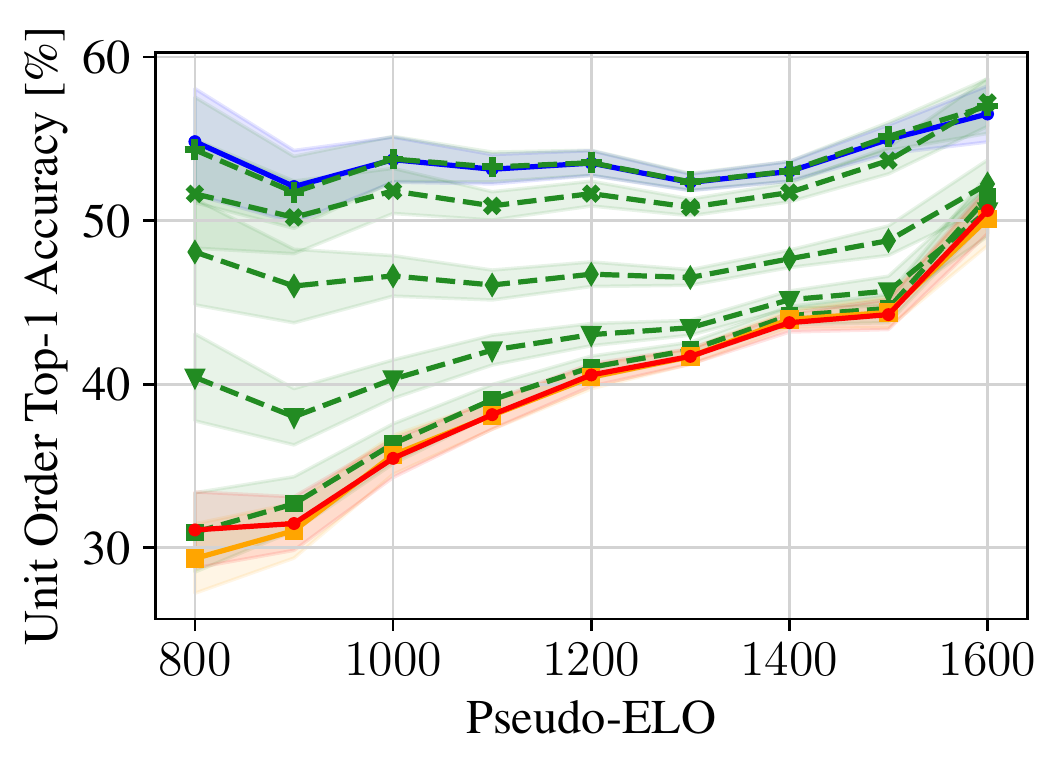}\\[-1mm]
\centering\includegraphics[scale=.57]{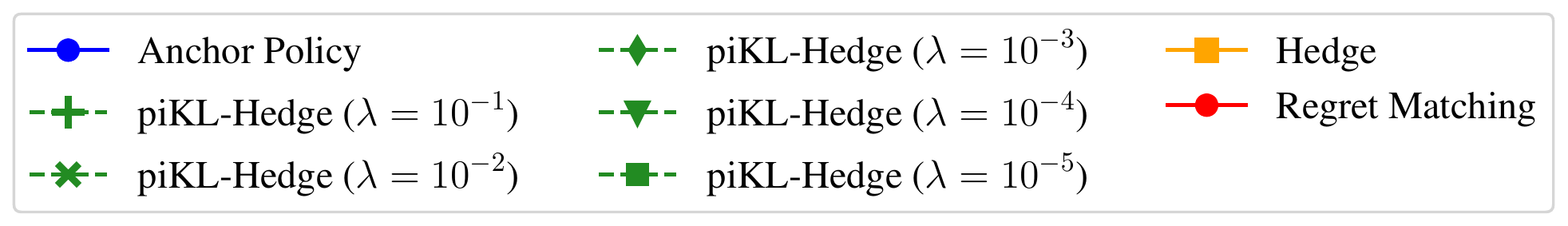}\\[-4mm]
\caption{\small (Left) Average top-1 accuracy of unit orders in each action predicted by the human IL anchor policy, RM, Hedge and piKL-Hedge, versus head to head score against 6 IL anchor policies. \qre~($\lambda = 10^{-1}$) predicts human moves as accurately as the anchor policy while achieving a much higher score. At the same time, \qre~($\lambda = 10^{-3}$) allows for a stronger and more human-like policy than unregularized search methods (hedge, RM). Note that equal performance would imply an average score of 1/7 $\approx 14.3\%$. Error bars indicate 1 standard error.
(Right) Average top-1 accuracy of unit orders in each action predicted by the human policy, RM, and piKL, as a function of pseudo-Elo player rating.
}
\label{fig:diplomacysosaccuracycurveL}
\label{fig:rating_1_nopress_allplaus_order_top1R}
\end{figure}
\end{minipage}\hfill%
\begin{minipage}[b]{5cm}
\begin{table}[H]\small\centering
\scalebox{1}{\begin{tabular}{lr}
 \textbf{Agent} & \makebox[1cm][r]{\textbf{Average Score}}\\
\toprule
DipNet$^\dagger$    & 3.7\% $\pm$ 0.3\% \\
DipNet RL$^\dagger$    & 4.7\% $\pm$ 0.3\% \\
\midrule
Blueprint$^\ddagger$     & 4.9\% $\pm$ 0.3\% \\
BRBot$^\ddagger$       & 16.1\% $\pm$ 0.6\% \\
SearchBot$^\ddagger$  & 13.4\% $\pm$ 0.5\% \\
\midrule
\bp   & 7.9\% $\pm$ 0.4\% \\
\sbot  & 31.3\% $\pm$ 0.7\% \\
\midrule
\qre      &      32.9\% $\pm$ 0.7\% \\
 ($\lambda=10^{-3}$)\\
\bottomrule
\end{tabular}}
\caption{\small Average score achieved by agents in a uniformly sampled pool of other agents. piKL-HedgeBot ($\lambda=10^{-3}$) outperforms all other agents in this setting. DipNet agents from \citep{paquette2019no} use a temperature of 0.1, while \bp and blueprint \citep{gray2020human} use a temperature of 0.5. The $\pm$ shows one standard error. $^\dagger$\citep{paquette2019no}; $^\ddagger$\citep{gray2020human}.}
\label{tab:population}
\end{table}
\end{minipage}
\vspace{-0.05in}
\end{figure*}
We also tested pure RL agents and found they perform poorly in predicting human moves. In particular, the recently proposed DORA and HumanDNVI-NPU algorithms~\cite{bakhtin2021no} achieve top-1 accuracy of only 29.1\% and 37.8\% respectively.

Next, in \cref{fig:rating_1_nopress_allplaus_order_top1R} (Right), we compare the top-1 accuracy of these methods across players of different pseudo-Elo ratings. The pseudo-Elo ($e_i$) for player i is constructed based on the logit rating $s_i$ introduced in \citet{gray2020human}, where, $e_i = \frac{s_i \cdot 400}{\log(10)} + 1000$. The top-1 accuracy for all the search-based policies increases with pseudo-Elo, indicating that they are better at modeling stronger players than weaker players. piKL-Hedge ($\lambda=10^{-1}$) performs just as well as the anchor policy across pseudo-Elos while being significantly stronger than the anchor, while $\lambda=10^{-3}$ is as strong or stronger than Hedge and RM but matches human play far better.

\xsubsubsection{piKL-HedgeBot performs well against a varied pool of agents}
\label{sec:results_piKL_pool}
We also develop a new head-to-head evaluation setting, where rather than testing one agent vs 6x of another agent, all 7 agents per game are sampled uniformly from a pool. %
The 1v6 head-to-head scores used in prior work \cite{gray2020human, bakhtin2021no, anthony2020learning, paquette2019no} indicate whether a population of 6x agents can be invaded by a 1x agent, and hence whether the 6 agents constitute an Evolutionarily Stable Strategy (ESS) \citep{taylor1978evolutionary, smith1982evolution}. By contrast, assigning the 7 agents per game randomly from a pool studies the robustness of an agent to a variety of other agents.%

We experiment with a pool of 8 agents. Five are previously published agents: DipNet, DipNet RL~\cite{paquette2019no}, Blueprint, BRBot, SearchBot~\cite{gray2020human} and three are our agents: IL policy, \sbot~and \qre. Doing well in this population requires playing well with both human-like policies (DipNet, Blueprint) and equilibrium policies (SearchBot, RMBot). Each experiment only compares one lambda value of piKL-HedgeBot for fairness.

The results of these experiment with piKL-HedgeBot ($\lambda=10^{-3}$) is presented in \cref{tab:population}. piKL-HedgeBot ($\lambda=10^{-3}$) outperforms all other agents. 

Overall, unlike Chess and Go, we do not find that piKL-Hedge clearly improves human prediction accuracy over the IL model. However, unlike Chess and Go, we observe that piKL-Hedge does improve playing strength over prior search methods against various past agents. In general-sum games like Diplomacy, it appears that piKL-hedge with a human anchor policy allows for slightly better play in a population containing human-like agents while still doing well against equilibrium-searchers, or alternatively can improve strength over IL to a lesser degree with no cost at all to prediction accuracy.

\vspace{-0.05in}
\section{Related Work}
\vspace{-0.05in}
\label{sec:relatedwork}

\subsection{Regularized Learning and Planning}
\vspace{-0.05in}

Several prior works have explored augmenting reinforcement learning with supervised data from expert demonstrations~\citep{vecerik2017leveraging,nair2018overcoming}. For example, \citet{hester2018deep} augment Deep Q Learning with a margin loss on demonstration data that aims to make $Q(a)$ for each demonstration action higher than that of other actions. KL-regularization has also been used successfully to incorporate expert demonstrations into RL training ~\citep{rudnerpathologies, ng2000algorithms, boularias2011relative, wu2019behavior, peng2019advantage, siegel2019keep}. In these settings, the standard RL objective is augmented by a KL divergence penalty that expresses the dissimilarity between the online policy and a reference policy derived from demonstrations. This helps guide exploration in RL or ameliorate inaccurate modeling of the environment in domains such as robotics. AlphaStar~\cite{vinyals2019grandmaster}, which achieved expert human performance in StarCraft~2, uses self-play RL initialized from supervised policies with a KL penalty term for deviating from the supervised policy during training to "aid in exploration and to preserve strategic diversity". In our work, we use the KL term to better approximate human play during inference-time search.

Prior work has explored entropy-regularized utilities in games. \citet{ling2018game} show that a particular type of entropic regularization in extensive-form games leads to quantal response equilibria. \citet{Cen21:Fast} give fast online optimization algorithms for entropy-regularized utilities in the context of quantal response. \citet{Farina19:Online} regularize utilities with a KL divergence term from a precomputed Nash equilibrium strategy to design agents that trade off game-theoretic safety and exploitation.  In our work, we leverage the KL divergence term towards a human-imitation learned policy instead, to regularize the utilities. And unlike previous works, we empirically study our approach in a much larger imperfect information game.

\vspace{-0.05in}
\subsection{Strong Human-Compatible Policies}
\vspace{-0.05in}
Prior work in multi-agent reinforcement learning has emphasized the importance of human-compatible policies in cooperative multi-agent environments. \citet{lerer2019learning} demonstrate that self-play policies may perform poorly with other agents if they do not conform to the population equilibrium (social conventions), and propose a combination of policy gradient and imitation loss directly on samples of population data. 

Human-compatible policies have also been studied on the benchmark game of Hanabi, where ad hoc play with humans is regarded as an open challenge problem~\cite{bard2020hanabi}. Most work on this challenge has focused on {\it zero-shot} coordination with humans, in which an agent must adapt to human play with no prior experience with human partners \cite{hu2020other,hu2021off,cui2021klevel}. Learning human-compatible policies from a combination of human data and planning are less well-studied in this setting.

\xsubsubsection{MCTS, Chess and Go}
\label{sec:inchessandgo}
Prior work in tree search methods, especially in chess and Go, has typically focused on developing strong agents without concern for accurately modeling human behavior.
For example in Go, a significant body of older work investigates imitation learning (IL) to obtain a baseline policy prior to use with MCTS, but tunes and evaluates the final agent via playing strength alone \cite{tian2016better, Cazenave2017}.

Regarding the use of search for human modeling, recent work in chess by \citet{mcilroy-young2020learning} found that pure IL outperformed all other approaches and that adding MCTS with the parameters of a standard engine significantly harmed human prediction accuracy. However in our work, using a different range of parameters, we show clear results to the contrary. In Go, \citet{Wang2017DANNLTE} report promising results on human prediction and playing strength using search, albeit with a specialized architecture and rollout method. \citet{Baier2018Emulating} report in the card game Spades both excellent human accuracy and playing strength by adding a human policy bias to a variant of MCTS. To our knowledge, our work is the first to demonstrate a clear gain in human prediction accuracy over deep learning models in the highly-studied domains of chess and Go via simple well-established methods of policy-regularized planning.

\xsubsubsection{Diplomacy}
\label{sec:indiplomacy}
Diplomacy is a benchmark 7-player game that involves communication, negotiation, cooperation, and competition in a strategic multi-agent setting. While chess and Go are two-player zero-sum games for which optimal play is well-defined and can be computed through self-play \cite{nash1951non}, Diplomacy has no such guarantees and strong play likely requires modeling other agents, even in the no-press variant where natural-language communication is not allowed~\cite{bakhtin2021no}.

\citet{paquette2019no} showed that neural network IL on human data in no-press Diplomacy can reasonably approximate human play, but that bootstrapping RL from this agent leads to a breakdown in cooperation. \citet{anthony2020learning} developed new RL methods based on fictitious play that improve the performance of agents in no-press Diplomacy, and \citet{gray2020human} showed that an equilibrium-finding regret minimization search procedure on top of human IL models achieves human-level performance in no-press Diplomacy. However, although both methods rely on the human IL model to generate a restricted action set for RL or search, neither contains any explicit regularization when choosing among those actions, and we show that the equilibrium search in \citet{gray2020human} greatly decreases the accuracy of modeling human players. Similarly, \citet{bakhtin2021no} achieved strong results in no-press Diplomacy via self-play RL both from scratch and initialized with a human-learned policy, but we show that the resulting final agents do not ultimately model humans well.

\vspace{-0.05in}
\section{Conclusion}
\vspace{-0.05in}

In this paper, we showed across several domains that regularizing search policies according to a KL-divergence loss with an imitation learned (IL) policy produces policies that maintain high human prediction accuracy while being far stronger than the original learned policy. In chess and Go, applying standard MCTS regularized toward a human-learned policy
achieves state-of-the-art prediction accuracy, surpassing imitation learning, while also winning more than 85\% of games against an IL model. 
We then introduced a novel regret minimization algorithm that is regularized based on the KL divergence from an IL policy. In no-press Diplomacy, this algorithm yields both a policy that predicts human play with the same accuracy as imitation learning alone while increasing win rate against state-of-the-art baselines by a factor of 1.4, or alternately a policy that outperforms unregularized search while achieving much higher human prediction accuracy. We presented in Appendix~\ref{appendix:hanabi} similar successful results for KL-regularized search in Hanabi.

There are several directions for future work, such as extending piKL-Hedge to handle extensive-form games. Additionally, there may be better ways to regularize search than KL-divergence. Finally, it remains to be seen how KL-regularized search performs when combined with RL.

\section*{Author Contributions}

A. P. Jacob was the primary researcher for piKL-hedge and contributed to the direction, experiments, and writing of the entire paper. D. J. Wu was the primary researcher for MCTS in chess and Go, and contributed to the direction, experiments, and writing of the entire paper. G. Farina was the primary formulator of the piKL-hedge algorithm and handled all the theory in the paper. A. Lerer contributed to the direction of the project, the formulation of piKL-hedge, its experimental evaluation, and paper writing. H. Hu was the primary researcher for the extension of piKL to Hanabi covered in Appendix~\ref{appendix:hanabi}. A. Bakhtin contributed to the experimental evaluation of piKL-hedge. J. Andreas contributed to the direction of the project and to paper writing. N. Brown initiated the project and contributed to the direction of the project, the formulation of piKL-hedge, its experimental evaluation, and paper writing.

\bibliography{references}

\begin{thebibliography}{62}
\providecommand{\natexlab}[1]{#1}
\providecommand{\url}[1]{\texttt{#1}}
\expandafter\ifx\csname urlstyle\endcsname\relax
  \providecommand{\doi}[1]{doi: #1}\else
  \providecommand{\doi}{doi: \begingroup \urlstyle{rm}\Url}\fi

\bibitem[Abernethy \& Rakhlin(2009)Abernethy and Rakhlin]{abernethy09:beating}
Abernethy, J. and Rakhlin, A.
\newblock Beating the adaptive bandit with high probability.
\newblock Technical Report UCB/EECS-2009-10, EECS Department, University of
  California, Berkeley, Jan 2009.
\newblock URL
  \url{http://www2.eecs.berkeley.edu/Pubs/TechRpts/2009/EECS-2009-10.html}.

\bibitem[Anderson et~al.(2021)Anderson, McIlroy-Young, Sen, and
  Kleinberg]{MaiaGuestPost}
Anderson, A., McIlroy-Young, R., Sen, S., and Kleinberg, J.
\newblock Introducing maia, a human-like neural network chess engine, 2021.
\newblock URL
  \url{https://lichess.org/blog/X9PUixUAANCqFRSh/introducing-maia-a-human-like-neural-network-chess-engine}.

\bibitem[Anthony et~al.(2020)Anthony, Eccles, Tacchetti, Kram\'{a}r, Gemp,
  Hudson, Porcel, Lanctot, Perolat, Everett, Singh, Graepel, and
  Bachrach]{anthony2020learning}
Anthony, T., Eccles, T., Tacchetti, A., Kram\'{a}r, J., Gemp, I., Hudson, T.,
  Porcel, N., Lanctot, M., Perolat, J., Everett, R., Singh, S., Graepel, T.,
  and Bachrach, Y.
\newblock Learning to play no-press diplomacy with best response policy
  iteration.
\newblock In Larochelle, H., Ranzato, M., Hadsell, R., Balcan, M.~F., and Lin,
  H. (eds.), \emph{Advances in Neural Information Processing Systems},
  volume~33, pp.\  17987--18003. Curran Associates, Inc., 2020.
\newblock URL
  \url{https://proceedings.neurips.cc/paper/2020/file/d1419302db9c022ab1d48681b13d5f8b-Paper.pdf}.

\bibitem[Baier et~al.(2018)Baier, Sattaur, Powley, Devlin, Rollason, and
  Cowling]{Baier2018Emulating}
Baier, H., Sattaur, A., Powley, E., Devlin, S., Rollason, J., and Cowling, P.
\newblock Emulating human play in a leading mobile card game.
\newblock \emph{IEEE Transactions on Games}, PP:\penalty0 1--1, 05 2018.
\newblock \doi{10.1109/TG.2018.2835764}.

\bibitem[Bakhtin et~al.(2021)Bakhtin, Wu, Lerer, and Brown]{bakhtin2021no}
Bakhtin, A., Wu, D., Lerer, A., and Brown, N.
\newblock No-press diplomacy from scratch.
\newblock In \emph{Thirty-Fifth Conference on Neural Information Processing
  Systems}, 2021.

\bibitem[Bard et~al.(2020)Bard, Foerster, Chandar, Burch, Lanctot, Song,
  Parisotto, Dumoulin, Moitra, Hughes, et~al.]{bard2020hanabi}
Bard, N., Foerster, J.~N., Chandar, S., Burch, N., Lanctot, M., Song, H.~F.,
  Parisotto, E., Dumoulin, V., Moitra, S., Hughes, E., et~al.
\newblock The hanabi challenge: A new frontier for ai research.
\newblock \emph{Artificial Intelligence}, 280:\penalty0 103216, 2020.

\bibitem[Blackwell et~al.(1956)]{blackwell1956analog}
Blackwell, D. et~al.
\newblock An analog of the minimax theorem for vector payoffs.
\newblock \emph{Pacific Journal of Mathematics}, 6\penalty0 (1):\penalty0 1--8,
  1956.

\bibitem[Boularias et~al.(2011)Boularias, Kober, and
  Peters]{boularias2011relative}
Boularias, A., Kober, J., and Peters, J.
\newblock Relative entropy inverse reinforcement learning.
\newblock In \emph{Proceedings of the Fourteenth International Conference on
  Artificial Intelligence and Statistics}, pp.\  182--189. JMLR Workshop and
  Conference Proceedings, 2011.

\bibitem[Brown \& Sandholm(2017)Brown and Sandholm]{brown2017superhuman}
Brown, N. and Sandholm, T.
\newblock Superhuman {A}{I} for heads-up no-limit poker: Libratus beats top
  professionals.
\newblock \emph{Science}, pp.\  eaao1733, 2017.

\bibitem[Brown \& Sandholm(2019)Brown and Sandholm]{brown2019superhuman}
Brown, N. and Sandholm, T.
\newblock Superhuman {A}{I} for multiplayer poker.
\newblock \emph{Science}, 365\penalty0 (6456):\penalty0 885--890, 2019.

\bibitem[Campbell et~al.(2002)Campbell, Hoane~Jr, and Hsu]{campbell2002deep}
Campbell, M., Hoane~Jr, A.~J., and Hsu, F.-h.
\newblock Deep {B}lue.
\newblock \emph{Artificial intelligence}, 134\penalty0 (1-2):\penalty0 57--83,
  2002.

\bibitem[Cazenave(2017)]{Cazenave2017}
Cazenave, T.
\newblock Residual networks for computer go.
\newblock \emph{IEEE Transactions on Computational Intelligence and AI in
  Games}, PP:\penalty0 1--1, 03 2017.
\newblock \doi{10.1109/TCIAIG.2017.2681042}.

\bibitem[Cen et~al.(2021)Cen, Wei, and Chi]{Cen21:Fast}
Cen, S., Wei, Y., and Chi, Y.
\newblock Fast policy extragradient methods for competitive games with entropy
  regularization.
\newblock In \emph{Neural Information Processing Systems (NeurIPS)}, 2021.

\bibitem[Cui et~al.(2021)Cui, Hu, Pineda, and Foerster]{cui2021klevel}
Cui, B., Hu, H., Pineda, L., and Foerster, J.
\newblock K-level resoning for zero-shot coordination in hanabi.
\newblock In \emph{Thirty-Fifth Conference on Neural Information Processing
  Systems}, 2021.

\bibitem[Egri-Nagy \& Törmänen(2020)Egri-Nagy and
  Törmänen]{philosophies5040037}
Egri-Nagy, A. and Törmänen, A.
\newblock The game is not over yet—go in the post-alphago era.
\newblock \emph{Philosophies}, 5\penalty0 (4):\penalty0 37--0, 2020.
\newblock ISSN 2409-9287.
\newblock \doi{10.3390/philosophies5040037}.
\newblock URL \url{https://www.mdpi.com/2409-9287/5/4/37}.

\bibitem[Farina et~al.(2019)Farina, Kroer, and Sandholm]{Farina19:Online}
Farina, G., Kroer, C., and Sandholm, T.
\newblock Online convex optimization for sequential decision processes and
  extensive-form games.
\newblock In \emph{AAAI Conference on Artificial Intelligence}, 2019.

\bibitem[Fickinger et~al.(2021)Fickinger, Hu, Amos, Russell, and
  Brown]{rlsearch}
Fickinger, A., Hu, H., Amos, B., Russell, S., and Brown, N.
\newblock Scalable online planning via reinforcement learning fine-tuning.
\newblock \emph{CoRR}, abs/2109.15316, 2021.
\newblock URL \url{https://arxiv.org/abs/2109.15316}.

\bibitem[Freund \& Schapire(1997)Freund and Schapire]{freund1997decision}
Freund, Y. and Schapire, R.~E.
\newblock A decision-theoretic generalization of on-line learning and an
  application to boosting.
\newblock \emph{Journal of computer and system sciences}, 55\penalty0
  (1):\penalty0 119--139, 1997.

\bibitem[Gray et~al.(2020)Gray, Lerer, Bakhtin, and Brown]{gray2020human}
Gray, J., Lerer, A., Bakhtin, A., and Brown, N.
\newblock Human-level performance in no-press diplomacy via equilibrium search.
\newblock In \emph{International Conference on Learning Representations}, 2020.

\bibitem[Grill et~al.(2020)Grill, Altch{\'e}, Tang, Hubert, Valko, Antonoglou,
  and Munos]{grill2020monte}
Grill, J.-B., Altch{\'e}, F., Tang, Y., Hubert, T., Valko, M., Antonoglou, I.,
  and Munos, R.
\newblock Monte-carlo tree search as regularized policy optimization.
\newblock In \emph{International Conference on Machine Learning}, pp.\
  3769--3778. PMLR, 2020.

\bibitem[Hannan(1957)]{hannan1957approximation}
Hannan, J.
\newblock Approximation to bayes risk in repeated play.
\newblock \emph{Contributions to the Theory of Games}, 3:\penalty0 97--139,
  1957.

\bibitem[Hart \& Mas-Colell(2000)Hart and Mas-Colell]{hart2000simple}
Hart, S. and Mas-Colell, A.
\newblock A simple adaptive procedure leading to correlated equilibrium.
\newblock \emph{Econometrica}, 68\penalty0 (5):\penalty0 1127--1150, 2000.

\bibitem[Hester et~al.(2018)Hester, Vecerik, Pietquin, Lanctot, Schaul, Piot,
  Horgan, Quan, Sendonaris, Osband, et~al.]{hester2018deep}
Hester, T., Vecerik, M., Pietquin, O., Lanctot, M., Schaul, T., Piot, B.,
  Horgan, D., Quan, J., Sendonaris, A., Osband, I., et~al.
\newblock Deep q-learning from demonstrations.
\newblock In \emph{Thirty-second AAAI conference on artificial intelligence},
  2018.

\bibitem[Hu et~al.(2020)Hu, Lerer, Peysakhovich, and Foerster]{hu2020other}
Hu, H., Lerer, A., Peysakhovich, A., and Foerster, J.
\newblock “other-play” for zero-shot coordination.
\newblock In \emph{International Conference on Machine Learning}, pp.\
  4399--4410. PMLR, 2020.

\bibitem[Hu et~al.(2021{\natexlab{a}})Hu, Lerer, Brown, and Foerster]{lbs}
Hu, H., Lerer, A., Brown, N., and Foerster, J.~N.
\newblock Learned belief search: Efficiently improving policies in partially
  observable settings.
\newblock \emph{CoRR}, abs/2106.09086, 2021{\natexlab{a}}.
\newblock URL \url{https://arxiv.org/abs/2106.09086}.

\bibitem[Hu et~al.(2021{\natexlab{b}})Hu, Lerer, Cui, Pineda, Wu, Brown, and
  Foerster]{hu2021off}
Hu, H., Lerer, A., Cui, B., Pineda, L., Wu, D., Brown, N., and Foerster, J.
\newblock Off-belief learning.
\newblock In \emph{International Conference on Machine Learning}. PMLR,
  2021{\natexlab{b}}.

\bibitem[Hu et~al.(2018)Hu, Shen, and Sun]{hu2018SE}
Hu, J., Shen, L., and Sun, G.
\newblock Squeeze-and-excitation networks.
\newblock In \emph{2018 IEEE/CVF Conference on Computer Vision and Pattern
  Recognition}, pp.\  7132--7141, 2018.
\newblock \doi{10.1109/CVPR.2018.00745}.

\bibitem[Lai(2018)]{AZFPU}
Lai, M.
\newblock Forum post on alphazero news (post by user matthewlai).
\newblock
  \url{http://talkchess.com/forum3/viewtopic.php?f=2&t=69175&sid=06ca6a966c29743d765c11b13402be8d&start=70#p781765},
  2018.

\bibitem[LC0(2020)]{lc0net}
LC0.
\newblock Leela chess zero information page on "neural network topology".
\newblock \url{https://lczero.org/dev/backend/nn/}, 2020.

\bibitem[Lerer \& Peysakhovich(2019)Lerer and Peysakhovich]{lerer2019learning}
Lerer, A. and Peysakhovich, A.
\newblock Learning existing social conventions via observationally augmented
  self-play.
\newblock In \emph{Proceedings of the 2019 AAAI/ACM Conference on AI, Ethics,
  and Society}, pp.\  107--114. ACM, 2019.

\bibitem[Lerer et~al.(2020)Lerer, Hu, Foerster, and Brown]{lerer2020improving}
Lerer, A., Hu, H., Foerster, J., and Brown, N.
\newblock Improving policies via search in cooperative partially observable
  games.
\newblock In \emph{AAAI Conference on Artificial Intelligence}, 2020.

\bibitem[Ling et~al.(2018)Ling, Fang, and Kolter]{ling2018game}
Ling, C.~K., Fang, F., and Kolter, J.~Z.
\newblock What game are we playing? end-to-end learning in normal and extensive
  form games.
\newblock \emph{International Joint Conferences on Artificial Intelligence
  Organization}, 2018.

\bibitem[Littlestone \& Warmuth(1994)Littlestone and
  Warmuth]{littlestone1994weighted}
Littlestone, N. and Warmuth, M.~K.
\newblock The weighted majority algorithm.
\newblock \emph{Information and computation}, 108\penalty0 (2):\penalty0
  212--261, 1994.

\bibitem[McIlroy-Young et~al.(2020{\natexlab{a}})McIlroy-Young, Sen, Kleinberg,
  and Anderson]{mcilroy2020aligning}
McIlroy-Young, R., Sen, S., Kleinberg, J., and Anderson, A.
\newblock Aligning superhuman ai with human behavior: Chess as a model system.
\newblock In \emph{Proceedings of the 26th ACM SIGKDD International Conference
  on Knowledge Discovery \& Data Mining}, pp.\  1677--1687, 2020{\natexlab{a}}.

\bibitem[McIlroy-Young et~al.(2020{\natexlab{b}})McIlroy-Young, Wang, Sen,
  Kleinberg, and Anderson]{mcilroy-young2020learning}
McIlroy-Young, R., Wang, R., Sen, S., Kleinberg, J., and Anderson, A.
\newblock Learning personalized models of human behavior in chess.
\newblock Technical report, Microsoft, Inc., August 2020{\natexlab{b}}.
\newblock URL
  \url{https://www.microsoft.com/en-us/research/publication/learning-personalized-models-of-human-behavior-in-chess/}.

\bibitem[McKelvey \& Palfrey(1995{\natexlab{a}})McKelvey and
  Palfrey]{mckelvey19956qre}
McKelvey, R.~D. and Palfrey, T.~R.
\newblock Quantal response equilibria for normal form games.
\newblock \emph{Games and Economic Behavior}, 10\penalty0 (1):\penalty0 6--38,
  1995{\natexlab{a}}.
\newblock ISSN 0899-8256.
\newblock \doi{https://doi.org/10.1006/game.1995.1023}.
\newblock URL
  \url{https://www.sciencedirect.com/science/article/pii/S0899825685710238}.

\bibitem[McKelvey \& Palfrey(1995{\natexlab{b}})McKelvey and
  Palfrey]{mckelvey1995quantal}
McKelvey, R.~D. and Palfrey, T.~R.
\newblock Quantal response equilibria for normal form games.
\newblock \emph{Games and economic behavior}, 10\penalty0 (1):\penalty0 6--38,
  1995{\natexlab{b}}.

\bibitem[Morav{\v{c}}{\'\i}k et~al.(2017)Morav{\v{c}}{\'\i}k, Schmid, Burch,
  Lis{\`y}, Morrill, Bard, Davis, Waugh, Johanson, and
  Bowling]{moravvcik2017deepstack}
Morav{\v{c}}{\'\i}k, M., Schmid, M., Burch, N., Lis{\`y}, V., Morrill, D.,
  Bard, N., Davis, T., Waugh, K., Johanson, M., and Bowling, M.
\newblock Deepstack: Expert-level artificial intelligence in heads-up no-limit
  poker.
\newblock \emph{Science}, 356\penalty0 (6337):\penalty0 508--513, 2017.

\bibitem[Nair et~al.(2018)Nair, McGrew, Andrychowicz, Zaremba, and
  Abbeel]{nair2018overcoming}
Nair, A., McGrew, B., Andrychowicz, M., Zaremba, W., and Abbeel, P.
\newblock Overcoming exploration in reinforcement learning with demonstrations.
\newblock In \emph{2018 IEEE International Conference on Robotics and
  Automation (ICRA)}, pp.\  6292--6299. IEEE, 2018.

\bibitem[Nash(1951)]{nash1951non}
Nash, J.
\newblock Non-cooperative games.
\newblock \emph{Annals of mathematics}, pp.\  286--295, 1951.

\bibitem[Ng et~al.(2000)Ng, Russell, et~al.]{ng2000algorithms}
Ng, A.~Y., Russell, S.~J., et~al.
\newblock Algorithms for inverse reinforcement learning.
\newblock In \emph{Icml}, volume~1, pp.\ ~2, 2000.

\bibitem[Paquette et~al.(2019)Paquette, Lu, Bocco, Smith, Satya, Kummerfeld,
  Pineau, Singh, and Courville]{paquette2019no}
Paquette, P., Lu, Y., Bocco, S.~S., Smith, M., Satya, O.-G., Kummerfeld, J.~K.,
  Pineau, J., Singh, S., and Courville, A.~C.
\newblock No-press diplomacy: Modeling multi-agent gameplay.
\newblock In \emph{Advances in Neural Information Processing Systems}, pp.\
  4474--4485, 2019.

\bibitem[Peng et~al.(2019)Peng, Kumar, Zhang, and Levine]{peng2019advantage}
Peng, X.~B., Kumar, A., Zhang, G., and Levine, S.
\newblock Advantage-weighted regression: Simple and scalable off-policy
  reinforcement learning.
\newblock \emph{arXiv preprint arXiv:1910.00177}, 2019.

\bibitem[Rakhlin(2009)]{rakhlin09:lecture}
Rakhlin, A.
\newblock Lecture notes on online learning, 2009.

\bibitem[Rudner et~al.(2021)Rudner, Lu, Osborne, Gal, and
  Teh]{rudnerpathologies}
Rudner, T.~G., Lu, C., Osborne, M., Gal, Y., and Teh, Y.~W.
\newblock On pathologies in kl-regularized reinforcement learning from expert
  demonstrations.
\newblock In \emph{Thirty-Fifth Conference on Neural Information Processing
  Systems}, 2021.

\bibitem[Schrittwieser et~al.(2020)Schrittwieser, Antonoglou, Hubert, Simonyan,
  Sifre, Schmitt, Guez, Lockhart, Hassabis, Graepel,
  et~al.]{schrittwieser2020mastering}
Schrittwieser, J., Antonoglou, I., Hubert, T., Simonyan, K., Sifre, L.,
  Schmitt, S., Guez, A., Lockhart, E., Hassabis, D., Graepel, T., et~al.
\newblock Mastering atari, go, chess and shogi by planning with a learned
  model.
\newblock \emph{Nature}, 588\penalty0 (7839):\penalty0 604--609, 2020.

\bibitem[Siegel et~al.(2019)Siegel, Springenberg, Berkenkamp, Abdolmaleki,
  Neunert, Lampe, Hafner, Heess, and Riedmiller]{siegel2019keep}
Siegel, N., Springenberg, J.~T., Berkenkamp, F., Abdolmaleki, A., Neunert, M.,
  Lampe, T., Hafner, R., Heess, N., and Riedmiller, M.
\newblock Keep doing what worked: Behavior modelling priors for offline
  reinforcement learning.
\newblock In \emph{International Conference on Learning Representations}, 2019.

\bibitem[Silver et~al.(2016)Silver, Huang, Maddison, Guez, Sifre, Van
  Den~Driessche, Schrittwieser, Antonoglou, Panneershelvam, Lanctot,
  et~al.]{silver2016mastering}
Silver, D., Huang, A., Maddison, C.~J., Guez, A., Sifre, L., Van Den~Driessche,
  G., Schrittwieser, J., Antonoglou, I., Panneershelvam, V., Lanctot, M.,
  et~al.
\newblock Mastering the game of go with deep neural networks and tree search.
\newblock \emph{Nature}, 529\penalty0 (7587):\penalty0 484, 2016.

\bibitem[Silver et~al.(2017)Silver, Schrittwieser, Simonyan, Antonoglou, Huang,
  Guez, Hubert, Baker, Lai, Bolton, et~al.]{silver2017mastering}
Silver, D., Schrittwieser, J., Simonyan, K., Antonoglou, I., Huang, A., Guez,
  A., Hubert, T., Baker, L., Lai, M., Bolton, A., et~al.
\newblock Mastering the game of go without human knowledge.
\newblock \emph{Nature}, 550\penalty0 (7676):\penalty0 354, 2017.

\bibitem[Silver et~al.(2018)Silver, Hubert, Schrittwieser, Antonoglou, Lai,
  Guez, Lanctot, Sifre, Kumaran, Graepel, et~al.]{silver2018general}
Silver, D., Hubert, T., Schrittwieser, J., Antonoglou, I., Lai, M., Guez, A.,
  Lanctot, M., Sifre, L., Kumaran, D., Graepel, T., et~al.
\newblock A general reinforcement learning algorithm that masters chess, shogi,
  and go through self-play.
\newblock \emph{Science}, 362\penalty0 (6419):\penalty0 1140--1144, 2018.

\bibitem[Siu et~al.(2021)Siu, Pe{\~{n}}a, Chang, Chen, Zhou, Lopez, Palko, and
  Allen]{siu2021}
Siu, H.~C., Pe{\~{n}}a, J., Chang, K.~C., Chen, E., Zhou, Y., Lopez, V.~J.,
  Palko, K., and Allen, R.~E.
\newblock Evaluation of human-ai teams for learned and rule-based agents in
  hanabi.
\newblock \emph{CoRR}, abs/2107.07630, 2021.
\newblock URL \url{https://arxiv.org/abs/2107.07630}.

\bibitem[Smith(1982)]{smith1982evolution}
Smith, J.~M.
\newblock \emph{Evolution and the Theory of Games}.
\newblock Cambridge university press, 1982.

\bibitem[Taylor \& Jonker(1978)Taylor and Jonker]{taylor1978evolutionary}
Taylor, P.~D. and Jonker, L.~B.
\newblock Evolutionary stable strategies and game dynamics.
\newblock \emph{Mathematical biosciences}, 40\penalty0 (1-2):\penalty0
  145--156, 1978.

\bibitem[Tian(2019)]{ELFLZFPU}
Tian, Y.
\newblock Github thread for elf opengo, "[suggestion] clarify fpu in paper".
\newblock \url{https://github.com/pytorch/ELF/issues/140}, 2019.

\bibitem[Tian \& Zhu(2016)Tian and Zhu]{tian2016better}
Tian, Y. and Zhu, Y.
\newblock Better computer go player with neural network and long-term
  prediction, 2016.

\bibitem[Troisi(2019)]{minigoSE}
Troisi, S.
\newblock Github thread for minigo "[experiment] squeeze and excitation".
\newblock \url{https://github.com/tensorflow/minigo/issues/683}, 2019.

\bibitem[Vecerik et~al.(2017)Vecerik, Hester, Scholz, Wang, Pietquin, Piot,
  Heess, Roth{\"o}rl, Lampe, and Riedmiller]{vecerik2017leveraging}
Vecerik, M., Hester, T., Scholz, J., Wang, F., Pietquin, O., Piot, B., Heess,
  N., Roth{\"o}rl, T., Lampe, T., and Riedmiller, M.
\newblock Leveraging demonstrations for deep reinforcement learning on robotics
  problems with sparse rewards.
\newblock \emph{arXiv preprint arXiv:1707.08817}, 2017.

\bibitem[Vinyals et~al.(2019)Vinyals, Babuschkin, Czarnecki, Mathieu, Dudzik,
  Chung, Choi, Powell, Ewalds, Georgiev, et~al.]{vinyals2019grandmaster}
Vinyals, O., Babuschkin, I., Czarnecki, W.~M., Mathieu, M., Dudzik, A., Chung,
  J., Choi, D.~H., Powell, R., Ewalds, T., Georgiev, P., et~al.
\newblock Grandmaster level in starcraft ii using multi-agent reinforcement
  learning.
\newblock \emph{Nature}, 575\penalty0 (7782):\penalty0 350--354, 2019.

\bibitem[Wang et~al.(2017)Wang, Wang, Wang, and Gao]{Wang2017DANNLTE}
Wang, J., Wang, W., Wang, R., and Gao, W.
\newblock Beyond monte carlo tree search: Playing go with deep alternative
  neural network and long-term evaluation.
\newblock \emph{Proceedings of the AAAI Conference on Artificial Intelligence},
  31\penalty0 (1), Feb. 2017.
\newblock URL \url{https://ojs.aaai.org/index.php/AAAI/article/view/10749}.

\bibitem[Wu(2018)]{wu2018gonn}
Wu, D.
\newblock Go neural net sandbox.
\newblock \url{https://github.com/lightvector/GoNN#raw-neural-net-results},
  2018.

\bibitem[Wu(2020)]{Wu2020Go}
Wu, D.
\newblock Accelerating self-play learning in go.
\newblock In \emph{AAAI-20 Workshop on Reinforcement Learning in Games}, 2020.

\bibitem[Wu et~al.(2019)Wu, Tucker, and Nachum]{wu2019behavior}
Wu, Y., Tucker, G., and Nachum, O.
\newblock Behavior regularized offline reinforcement learning.
\newblock 2019.

\end{thebibliography}
\bibliographystyle{icml2022}

\onecolumn
\appendix

\section{Proofs}
\label{sec:proofs}
\newcommand{\numberthis}[1]{\refstepcounter{equation}\tag{\theequation}\label{#1}}

\def\utilde{\tilde{u}}

In this Appendix, we present detailed proofs of \cref{prop:regret} and \cref{cor:exploitability}.

\subsection{Known results}

We start by recalling a few standard results. First, we recall the follow-the-regularized-leader (FTRL) algorithm, one of the most well-studied algorithms in online optimization. At every time $t$, the FTRL algorithm instantiated with domain $\mathcal{X}$, $1$-strongly-convex regularizer $\phi: \mathcal{X} \to \mathbb{R}$, and learning rate $\eta > 0$, produces iterates according to
\begin{equation}\tag{FTRL}\label{eq:ftrl}
    \vec{x}^{t+1} = \argmax_{\vec{x} \in \mathcal{X}} \mleft\{ -\frac{\phi(\vec{x})}{\eta} + \sum_{\tau=1}^t \ell^\tau(\vec{x}) \mright\},
\end{equation}
where $\vec{\ell}^1,\dots,\vec{\ell}^{t}$ are the convex utility functions gave as feedback by the environment. The FTRL algorithm guarantees the following regret bound.

\begin{lemma}[\citet{rakhlin09:lecture}, Corollary~7]\label{lem:ftrl}
    The iterates $\vec{x}^t \in \mathcal{X}$ produced by the FTRL algorithm set up with constant step size $\eta > 0$ and $1$-strongly convex regularizer $\phi$ satisfy the regret bound
    \[
        \sum_{t=1}^T \ell^t(\vec{u}) - \ell^t(\vec{x}^t) \le \frac{\phi(\vec{u})}{\eta} + \sum_{t=1}^T \ell^t(\vec{x}^{t+1}) - \ell^t(\vec{x}^t) \qquad\forall\,\vec{u} \in \mathcal{X}.
        \numberthis{eq:ftrl bound}
    \]
\end{lemma}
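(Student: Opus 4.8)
The plan is to prove the FTRL regret bound via the classical \emph{Follow-the-Leader / Be-the-Leader} (FTL--BTL) argument, which is elementary and, notably, does not actually require the strong convexity of $\phi$ (strong convexity only enters downstream, when one wants to further control the stability term left on the right-hand side). The starting point is to fold the regularizer into the utility sequence: I would define $\ell^0 \defeq -\phi/\eta$ and observe that, with this convention, the FTRL iterate can be written uniformly as $\vec{x}^{t+1} = \argmax_{\vec{x} \in \mathcal{X}} \sum_{\tau=0}^t \ell^\tau(\vec{x})$ for every $t \ge 0$; for $t=0$ this recovers $\vec{x}^1 = \argmin_{\mathcal{X}}\phi$, and for $t \ge 1$ it matches \eqref{eq:ftrl}. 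In words, $\vec{x}^{t+1}$ is exactly the \emph{leader} after rounds $0,\dots,t$.

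First I would establish the Be-the-Leader lemma: for any functions $\ell^0,\dots,\ell^T$ and the leader sequence above,
\[
    \sum_{t=0}^T \ell^t(\vec{x}^{t+1}) \ge \sum_{t=0}^T \ell^t(\vec{u}) \qquad \forall\, \vec{u} \in \mathcal{X}.
\]
This follows by induction on $T$. The base case $T=0$ is immediate from $\vec{x}^1 \in \argmax \ell^0$. For the inductive step, I would apply the inductive hypothesis at the particular point $\vec{u} = \vec{x}^{T+1}$, add $\ell^T(\vec{x}^{T+1})$ to both sides, and then use that $\vec{x}^{T+1}$ maximizes $\sum_{t=0}^T \ell^t$. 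No convexity or continuity is needed here, only that each argmax is attained.

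Next I would unfold the $t=0$ term. Writing out the BTL inequality gives
\[
    -\frac{\phi(\vec{x}^1)}{\eta} + \sum_{t=1}^T \ell^t(\vec{x}^{t+1}) \ge -\frac{\phi(\vec{u})}{\eta} + \sum_{t=1}^T \ell^t(\vec{u}),
\]
and rearranging yields $\sum_{t=1}^T [\ell^t(\vec{u}) - \ell^t(\vec{x}^{t+1})] \le \frac{1}{\eta}\big(\phi(\vec{u}) - \phi(\vec{x}^1)\big)$. Since $\vec{x}^1 = \argmin_{\mathcal{X}}\phi$, I would argue without loss of generality (by shifting $\phi$ by a constant, which leaves every argmax unchanged) that $\min_{\mathcal{X}}\phi = 0$, so that $-\phi(\vec{x}^1)/\eta \le 0$ may be dropped; this is the one small bookkeeping subtlety to get right.

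Finally, I would convert the bound, which presently compares $\vec{u}$ against the \emph{look-ahead} iterate $\vec{x}^{t+1}$, into the desired comparison against the actual iterate $\vec{x}^t$ by adding and subtracting $\ell^t(\vec{x}^t)$ term by term:
\[
    \sum_{t=1}^T \big(\ell^t(\vec{u}) - \ell^t(\vec{x}^t)\big) = \underbrace{\sum_{t=1}^T \big(\ell^t(\vec{u}) - \ell^t(\vec{x}^{t+1})\big)}_{\le\, \phi(\vec{u})/\eta} + \sum_{t=1}^T \big(\ell^t(\vec{x}^{t+1}) - \ell^t(\vec{x}^t)\big),
\]
which is exactly \eqref{eq:ftrl bound}. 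The only genuinely delicate point is the BTL induction together with the handling of the $t=0$ regularizer term; everything else is algebraic rearrangement. In particular, the ``stability'' differences $\ell^t(\vec{x}^{t+1}) - \ell^t(\vec{x}^t)$ are deliberately left intact, since it is in bounding \emph{them} that $1$-strong convexity of $\phi$ would be invoked in applications.
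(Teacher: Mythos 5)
Your proof is correct, but there is nothing in the paper to compare it against line by line: the paper does not prove this lemma at all, it imports it from \citet{rakhlin09:lecture} (Corollary~7) as a known result. The FTL--BTL argument you give is the standard proof of that cited result: fold the regularizer in as a round-zero utility $\ell^0 = -\phi/\eta$, prove Be-the-Leader by induction (your induction is sound, and you are right that it needs no convexity, only attainment of each argmax), then add and subtract $\ell^t(\vec{x}^t)$ to pass from the look-ahead iterate to the played iterate. Your remark that $1$-strong convexity is never invoked is also accurate; in this paper it matters only downstream, where \cref{lem:norm step} is used to control the stability differences $\ell^t(\vec{x}^{t+1})-\ell^t(\vec{x}^t)$ that the lemma deliberately leaves on the right-hand side.

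The one place you should be more careful is the step you call a ``bookkeeping subtlety,'' because it is not a harmless normalization. What your argument actually proves is
\[
    \sum_{t=1}^T \bigl(\ell^t(\vec{u}) - \ell^t(\vec{x}^t)\bigr) \;\le\; \frac{\phi(\vec{u})-\phi(\vec{x}^1)}{\eta} + \sum_{t=1}^T \bigl(\ell^t(\vec{x}^{t+1}) - \ell^t(\vec{x}^t)\bigr),
    \qquad \phi(\vec{x}^1)=\min_{\mathcal{X}}\phi.
\]
Shifting $\phi$ so that $\min_{\mathcal{X}}\phi=0$ leaves the iterates unchanged but changes the quantity $\phi(\vec{u})$ appearing in the bound, so it proves the inequality for the shifted regularizer, not for the original one; the two statements coincide only when $\min_{\mathcal{X}}\phi\ge 0$. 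The distinction matters here: as literally written, the lemma is false for regularizers taking negative values, which is exactly the situation in this paper, where $\phi$ is the negative entropy $\varphi(\vec{x})=\sum_a \vec{x}(a)\log\vec{x}(a)\le 0$. (Take all $\ell^t\equiv 0$ and $\vec{u}$ uniform: every iterate is uniform, the left side is $0$, while the claimed right side is $-\log|A_i|/\eta<0$.) The corrected form you derived, with $\phi(\vec{u})-\min_{\mathcal{X}}\phi$ in the numerator, is what the paper implicitly uses anyway: the $\log|A_i|/\eta$ term in \cref{prop:regret} is precisely the bound $\varphi(\vec{u})-\min_{\mathcal{X}}\varphi\le\log|A_i|$. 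So state your conclusion as the displayed inequality above (or add the explicit hypothesis $\phi\ge 0$); with that adjustment your proof is complete and, if anything, sharper than the statement it was meant to establish.
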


In the analysis of \cref{algo:noregret} we will also make use of the following technical lemma, a proof of which can be obtained starting using the same techniques as~\citet[Lemma~A.4]{abernethy09:beating}

\begin{lemma}\label{lem:abe}
    Let $\vec{p}\in\Delta(A)$ be a distribution over a discrete set $A$, $\vec{q}\in\mathbb{R}^{|A|}$ be a vector, and $D > 0$ be any constant such that $\max_{a,a'\in A} \{\vec{q}(a) - \vec{q}(a')\} \le M$. Then,
    \[
        \frac{\sum_{a \in A} \vec{p}(a)\cdot \exp\{-\vec{q}(a)\}^2}{
            \mleft(\sum_{a\in A} \vec{p}(a) \cdot \exp\{-\vec{q}(a)\}\mright)^2
        } - 1 \le \frac{\exp\{2M\}}{M^2} \sum_{a\in A} p(a) \vec{q}(a)^2.
    \]
\end{lemma}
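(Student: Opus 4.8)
The plan is to recognize the left-hand side as the squared coefficient of variation of the random variable $X \defeq \exp\{-\vec q(a)\}$ when $a \sim \vec p$. Writing $Z \defeq \E_{\vec p}[X] = \sum_a \vec p(a)\exp\{-\vec q(a)\}$, one has
\[
\frac{\sum_a \vec p(a)\exp\{-\vec q(a)\}^2}{Z^2} - 1 = \frac{\E_{\vec p}[X^2] - (\E_{\vec p}[X])^2}{(\E_{\vec p}[X])^2} = \frac{\Var_{\vec p}(X)}{(\E_{\vec p}[X])^2}.
\]
The first fact I would exploit is that this quantity is invariant under the shift $\vec q \mapsto \vec q + c$ (adding a constant to every coordinate), since such a shift multiplies both $\E_{\vec p}[X^2]$ and $(\E_{\vec p}[X])^2$ by $e^{-2c}$. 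The quantity $\Var_{\vec p}(\vec q)$ and the range bound $M$ are likewise shift-invariant, whereas $\sum_a \vec p(a)\vec q(a)^2$ is not, but merely dominates $\Var_{\vec p}(\vec q)$. Hence it suffices to prove the stronger, fully shift-invariant inequality with $\Var_{\vec p}(\vec q)$ on the right, and then conclude via $\Var_{\vec p}(\vec q) \le \sum_a \vec p(a)\vec q(a)^2$.

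To prove the variance version I would center the vector: set $Y \defeq \vec q - \E_{\vec p}[\vec q]$, so that $\E_{\vec p}[Y] = 0$, and the shift-invariance lets me evaluate the left-hand side with $Y$ in place of $\vec q$. Because $\E_{\vec p}[Y]=0$ while the range of $Y$ equals that of $\vec q$ (at most $M$), the interval containing the support of $Y$ straddles $0$ and has length at most $M$, which forces $|Y| \le M$ pointwise. After centering, Jensen's inequality gives $\E_{\vec p}[e^{-Y}] \ge e^{-\E_{\vec p}[Y]} = 1$, so the denominator is at least $1$ and the left-hand side is bounded above by $\E_{\vec p}[e^{-2Y}] - 1$.

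It then remains to bound $\E_{\vec p}[e^{-2Y}] - 1$. I would write $e^{-2Y} - 1 = -2Y + h(-2Y)$ with $h(x) \defeq e^x - 1 - x \ge 0$, so that taking expectations and using $\E_{\vec p}[Y]=0$ annihilates the linear term, leaving $\E_{\vec p}[e^{-2Y}]-1 = \E_{\vec p}[h(-2Y)]$. The crucial estimate — and the place where the factor $1/M^2$ appears — is the elementary fact that $g(x) \defeq (e^x - 1 - x)/x^2$ is monotonically increasing on $\mathbb{R}$; since $-2Y \le 2M$, this gives $h(-2Y) \le g(2M)\,(2Y)^2 = \frac{e^{2M}-1-2M}{M^2}\,Y^2$. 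Taking expectations yields $\E_{\vec p}[e^{-2Y}]-1 \le \frac{e^{2M}-1-2M}{M^2}\,\E_{\vec p}[Y^2] = \frac{e^{2M}-1-2M}{M^2}\,\Var_{\vec p}(\vec q)$, and bounding $e^{2M}-1-2M \le e^{2M}$ together with $\Var_{\vec p}(\vec q)\le \sum_a \vec p(a)\vec q(a)^2$ completes the argument.

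The main obstacle is the monotonicity of $g$, which is exactly what supplies the $1/M^2$ in the target bound. I would establish it by examining the sign of $g'(x) = N(x)/x^3$ with $N(x) = e^x(x-2)+x+2$: one checks $N(0)=0$ and $N'(0)=0$, while $N''(x) = xe^x$ changes sign only at $0$, so $N'\ge 0$ and $N$ is increasing; hence $N(x)$ has the sign of $x$ and $g'(x)\ge 0$ everywhere. The only other point requiring care is the pointwise bound $|Y|\le M$, which hinges on reading the hypothesis $\max_{a,a'}\{\vec q(a)-\vec q(a')\}\le M$ as a bound on the range of $\vec q$.
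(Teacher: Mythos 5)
Your proof is correct, and it takes a genuinely different, more self-contained route than the paper. The paper normalizes by subtracting the \emph{minimum}, so that $\tilde{\vec{q}} = \vec{q} - q_\text{min} \in [0,M]$, rewrites the left-hand side as the coefficient-of-variation-type ratio $\mathbb{E}[e^{-2X}]/(\mathbb{E}[e^{-X}])^2 - 1$ for the induced random variable $X$, and then invokes Lemma~A.4 of \citet{abernethy09:beating} as a black box to obtain the bound $\frac{e^{2M}-2M-1}{M^2}\,\mathrm{Var}(X)$ before relaxing to the statement. You instead center at the \emph{mean}, and that choice is what makes your two elementary steps available: Jensen's inequality gives $\mathbb{E}[e^{-Y}]\ge 1$, so the denominator can be discarded outright (this would fail under min-shifting, where $X\ge 0$ forces $\mathbb{E}[e^{-X}]\le 1$ and the bound goes the wrong way), and the remaining quantity $\mathbb{E}[e^{-2Y}]-1=\mathbb{E}[h(-2Y)]$ is then controlled by the classical Bernstein-type pointwise estimate resting on monotonicity of $g(x)=(e^x-1-x)/x^2$, which you verify from scratch (your sign analysis of $N(x)=e^x(x-2)+x+2$ is right). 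In effect you re-prove, rather than cite, the same variance inequality the paper imports, arriving at the slightly sharper intermediate bound $\frac{e^{2M}-1-2M}{M^2}\mathrm{Var}_{\vec{p}}(\vec{q})$. Your version has two side benefits: the constant is marginally better, and, more substantively, because your bound is stated in terms of the shift-invariant quantity $\mathrm{Var}_{\vec{p}}(\vec{q})$, the passage to the stated right-hand side $\sum_a \vec{p}(a)\vec{q}(a)^2$ is immediate; the paper's final display instead bounds by $\mathbb{E}[X^2]$ of the \emph{min-shifted} vector, which matches the statement only after noting that the variance appearing one step earlier is shift-invariant --- a point left implicit there. Your handling of the one delicate point on your path, deducing $|Y|\le M$ pointwise from the range hypothesis together with $\mathbb{E}_{\vec{p}}[Y]=0$, is also correct.
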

\begin{proof}
    Let $q_\text{min} \defeq \min_{a\in A} \vec{q}(a)$ and $\tilde{\vec{q}}(a) \defeq \vec{q}(a) - q_\text{min}$ be a shifted version of $\vec{q}(a)$ so that $0 \le \tilde{\vec{q}}(a) \le M$ for all $a \in A$. Let now $X$ denote a random variable with value $\tilde{\vec{q}}(a)$ with probability $\vec{p}(a)$ for all $a \in A$. Then,
    \begin{align*}
        \frac{\sum_{a \in A} \vec{p}(a)\cdot \exp\{-\vec{q}(a)\}^2}{
            \mleft(\sum_{a\in A} \vec{p}(a) \cdot \exp\{-\vec{q}(a)\}\mright)^2
        } - 1 &= \frac{\exp\{q_\text{min}\}^2\sum_{a \in A} \vec{p}(a)\cdot\exp\{\vec{q}(a)\}^2}{
            \exp\{q_\text{min}\}^2\mleft(\sum_{a\in A} \vec{p}(a) \cdot \exp\{-\vec{q}(a)\}\mright)^2
        } - 1\\
        &= \frac{\sum_{a \in A} \vec{p}(a)\cdot\exp\{-\tilde{\vec{q}}(a)\}^2}{
            \mleft(\sum_{a\in A} \vec{p}(a) \cdot \exp\{-\tilde{\vec{q}}(a)\}\mright)^2
        } - 1\\
        &= \frac{\mathbb{E}[\exp(-X)]}{[\mathbb{E}\exp(-X)]^2} - 1. 
    \end{align*}
    Applying Lemma~A.4 from \citet{abernethy09:beating} we obtain
    \[
        \frac{\sum_{a \in A} \vec{p}(a)\cdot \exp\{-\vec{q}(a)\}^2}{
            \mleft(\sum_{a\in A} \vec{p}(a) \cdot \exp\{-\vec{q}(a)\}\mright)^2
        } - 1 \le \frac{\exp\{2M\} - 2M - 1}{M^2} (\mathbb{E}[X^2] - \mathbb{E}[X]^2) \le \frac{\exp\{2M\}}{M^2}\mathbb{E}[X^2],
    \]
    which is exactly the statement.
\end{proof}

\subsection{Bounding the distance between the iterates of \cref{algo:noregret}}

\begin{lemma}
    At all times $t$ and for all players $i$, the policies $\vec{\pi}_i^t$ produced by the FTRL algorithm set up with constant step size $\eta$ and negative entropy regularizer $\varphi(\vec{x}) \defeq \sum_{a\in A_i} \vec{x}(a)\log\vec{x}(a)$, when observing the utilities $\mathcal{U}^t_i$, match the policies $\vec{\pi}_i^t$ produced by \cref{algo:noregret}.
\end{lemma}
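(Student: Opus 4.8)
The plan is to compute the FTRL iterate in closed form and verify term-by-term that it coincides with \eqref{eq:distribution}. First I would instantiate the template \eqref{eq:ftrl} with domain $\Delta(A_i)$, regularizer $\varphi$, and step size $\eta$, so that the policy at round $t$ is
\[
\vec{\pi}^t_i = \argmax_{\vec{x}\in\Delta(A_i)} \mleft\{ -\frac{\varphi(\vec{x})}{\eta} + \sum_{\tau} \mathcal{U}^\tau_i(\vec{x}) \mright\}.
\]
The key structural observation, which I would make explicit at the outset, is that each utility $\mathcal{U}^\tau_i(\vec{x}) = u_i(\vec{x},\vec{a}^\tau_{-i}) - \lambda_i\KL{\vec{x}}{\vec{\tau}_i}$ splits into two pieces of very different character: the reward piece $u_i(\vec{x},\vec{a}^\tau_{-i}) = \sum_a \vec{x}(a)\,u_i(a,\vec{a}^\tau_{-i})$ is linear in $\vec{x}$ but depends on the opponents' realized actions, whereas the penalty $-\lambda_i\KL{\vec{x}}{\vec{\tau}_i}$ is a fixed function of $\vec{x}$ that is known in advance. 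Consequently, when forming $\vec{\pi}^t_i$ the linear pieces may only be summed over the $t-1$ rounds whose feedback has been observed, yielding exactly $\sum_a \vec{x}(a)\,\mathsf{CV}^{t-1}_i(a)$, while the deterministic KL penalty is incurred on all $t$ rounds up to and including the current one, contributing the factor $t\lambda_i$ appearing in \eqref{eq:distribution}.

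Next I would rewrite the penalty using $\KL{\vec{x}}{\vec{\tau}_i} = \varphi(\vec{x}) - \sum_a \vec{x}(a)\log\tau_i(a)$ and collect terms. The $\varphi$ contribution of the penalty merges with the regularizer, so the objective becomes
\[
-\mleft(\frac{1}{\eta} + t\lambda_i\mright)\varphi(\vec{x}) + \sum_{a} \vec{x}(a)\mleft(\mathsf{CV}^{t-1}_i(a) + t\lambda_i\log\tau_i(a)\mright),
\]
i.e.\ a single negative-entropy-regularized linear program over the simplex with effective regularization weight $\frac1\eta + t\lambda_i$. I would then solve it in closed form: maximizing $-c\,\varphi(\vec{x}) + \langle \vec{g},\vec{x}\rangle$ over $\Delta(A_i)$ has the unique softmax solution $\vec{x}(a)\propto\exp(\vec{g}(a)/c)$, which follows from a one-line Lagrangian/KKT argument (stationarity gives $\log \vec{x}(a) = \vec{g}(a)/c - 1 - \mu/c$, and the normalization constraint fixes the multiplier $\mu$). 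Substituting $c = \frac1\eta + t\lambda_i$ and $\vec{g}(a) = \mathsf{CV}^{t-1}_i(a) + t\lambda_i\log\tau_i(a)$ and clearing the denominator by $\eta$ reproduces \eqref{eq:distribution} exactly, completing the identification.

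I expect the only real obstacle to be the bookkeeping of the first paragraph, namely justifying why the KL coefficient is $t\lambda_i$ rather than $(t-1)\lambda_i$. This is precisely where predictability of the penalty matters: because $-\lambda_i\KL{\cdot}{\vec{\tau}_i}$ requires no observation of $\vec{a}^t_{-i}$, it can be, and in \cref{algo:noregret} implicitly is, folded into the round-$t$ optimization before $a^t$ is played; equivalently one runs FTRL with the time-varying regularizer $\frac1\eta\varphi + t\lambda_i\KL{\cdot}{\vec{\tau}_i}$ against the purely linear reward feedback. Everything else is routine: linearity of $u_i$ in $\vec{x}$, the additive splitting of the KL divergence, and the standard simplex-softmax computation. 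A useful sanity check to include is the base case $t=1$, where $\mathsf{CV}^0_i\equiv 0$ and the formula collapses to $\vec{\pi}^1_i(a)\propto \tau_i(a)^{\,\lambda_i\eta/(1+\lambda_i\eta)}$, confirming that even before any reward feedback the iterate is already biased toward the anchor, as the KL-on-the-current-round interpretation predicts.
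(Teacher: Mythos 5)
Your proof is correct, and its computational engine is the same as the paper's: split $\KL{\vec{\pi}}{\vec{\tau}_i}$ into the negative entropy $\varphi(\vec{\pi})$ minus the linear cross term $\sum_{a}\vec{\pi}(a)\log\vec{\tau}_i(a)$, fold the entropy piece into the FTRL regularizer, and solve the resulting entropy-regularized linear problem over the simplex in closed form via softmax. The genuine difference is the timing convention, and on this point your version is tighter than the paper's own argument. The paper instantiates vanilla FTRL, so its iterate $\vec{\pi}^{t+1}_i$ maximizes $\sum_{t'=1}^{t}\mathcal{U}^{t'}_i(\vec{\pi})-\varphi(\vec{\pi})/\eta$ and comes out proportional to $\exp\{(\eta\,\mathsf{CV}^{t}_i(a)+t\lambda_i\eta\log\vec{\tau}_i(a))/(1+t\lambda_i\eta)\}$, i.e., the reward sum over $t$ observed rounds paired with anchor weight $t\lambda_i$; but \eqref{eq:distribution} pairs the reward sum over $t-1$ rounds with anchor weight $t\lambda_i$, so the paper's expression agrees with \cref{algo:noregret} only up to an off-by-one in the anchor coefficient, a mismatch its proof silently absorbs when asserting that the iterates ``coincide.'' Your bookkeeping---counting the round-$t$ KL penalty inside the round-$t$ objective because it is deterministic and requires no observation of $\vec{a}^t_{-i}$, equivalently running FTRL with the time-varying regularizer $\varphi/\eta+t\lambda_i\KL{\cdot}{\vec{\tau}_i}$ against purely linear reward feedback---reproduces \eqref{eq:distribution} exactly at every $t$, and is therefore the reading under which the lemma is literally true; your $t=1$ check, $\vec{\pi}^1_i(a)\propto\vec{\tau}_i(a)^{\lambda_i\eta/(1+\lambda_i\eta)}$, confirms the identification. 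What each approach buys: the paper's is the literal reading of ``FTRL observing $\mathcal{U}^t_i$,'' and its off-by-one is harmless downstream since the two exponents differ by a relative perturbation of order $1/t$ that does not affect the asymptotic regret analysis; yours pins down the exact identity and, in doing so, explains precisely why the algorithm's anchor coefficient is $t\lambda_i$ rather than $(t-1)\lambda_i$.
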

\begin{proof}
    Plugging the particular choices of utilities and regularizer into~\eqref{eq:ftrl}, we obtain
    \begin{align*}
        \vec{\pi}_{i}^{t+1} &= \argmax_{\vec{\pi}\in\Delta(A_i)}\mleft\{ \mleft(\sum_{t'=1}^{t}\mathcal{U}_{i}^{t'}(\vec{\pi})\mright) - \frac{1}{\eta}\sum_{a\in A_i} \vec{\pi}(a) \log \vec{\pi}(a)\mright\}\\
        &= \argmax_{\vec{\pi}\in\Delta(A_i)}\mleft\{\eta\mleft( \sum_{t'=1}^{t}\sum_{a \in A_i} \vec{\pi}(a)\,u_i(a, \vec{a}_{-i}^{t'}) - \lambda_i\,\vec{\pi}(a) \log\mleft(\frac{\vec{\pi}(a)}{\vec{\tau}_i(a)}\mright)\mright) - \sum_{a\in A_i} \vec{\pi}(a) \log \vec{\pi}(a)\mright\}\\
        &= \argmax_{\vec{\pi}\in\Delta(A_i)}\mleft\{ \eta\sum_{a\in A_i}\mleft(t\lambda_i \log\vec{\tau}_i(a) + \sum_{t'=1}^{t} u_i(a, \vec{a}_{-i}^{t'})\mright)\vec{\pi}(a) - (1+t\lambda_i\eta) \sum_{a\in A_i}\vec{\pi}(a)\log\vec{\pi}(a)\mright\}\\
        &= \argmax_{\vec{\pi}\in\Delta(A_i)}\mleft\{ \frac{\eta}{1+t\lambda_i\eta}\sum_{a\in A_i}\mleft(t\lambda_i \log\vec{\tau}_i(a) + \sum_{t'=1}^{t} u_i(a, \vec{a}_{-i}^{t'})\mright)\vec{\pi}(a) - \sum_{a\in A_i}\vec{\pi}(a)\log\vec{\pi}(a)\mright\}.\numberthis{eq:argmax}
    \end{align*}
    A well-known closed form solution to the above entropy-regularized problem is given by the softmax function. In particular, let
    \[
        \vec{w}_i^{t+1}(a) \defeq \frac{\eta}{1+t\lambda_i\eta} \mleft(t\lambda_i \log \vec{\tau}_i(a) + \sum_{t'=1}^t \utilde_i(a, \vec{a}_{-i}^{t'})\mright)\qquad\forall a\in A_i.\numberthis{eq:def w}
    \]
    Then,
    \[
        \vec{\pi}_i^{t+1}(a) = \frac{\exp\{\vec{w}_i^{t+1}(a)\}}{\sum_{a'\in A_i} \exp\{\vec{w}_i^{t+1}(a')\}} \qquad\forall a \in A_i,
    \]
    which coincides with the iterate produced by \cref{algo:noregret}.
\end{proof}

The next observation shows that the iterates $\vec{\pi}_i$ do not change if the utility function $u_i$ is first shifted to be in the range $[0,D_i]$.

\begin{remark}
    Consider the shifted utilities
    \[
        \utilde_i(a,\vec{a}_{-i}^t) \defeq u_i(a, \vec{a}_{-i}^{t'}) - \min_{\vec{a}\in A_1\times \dots\times A_n} u_i(\vec{a}) \in [0, D_i]\numberthis{eq:def utilde}
    \]
    and let $\vec{v}$ be defined as \eqref{eq:def w} using $\utilde_i$ in place of $u_i$, that is,
    \[
        \vec{v}_i^{t+1}(a) \defeq \frac{\eta}{1+t\lambda_i\eta} \mleft(t\lambda_i \log \vec{\tau}_i(a) + \sum_{t'=1}^t \utilde_i(a, \vec{a}_{-i}^{t'})\mright)\qquad\forall a\in A_i.\numberthis{eq:def v}
    \]
    Then, the iterates $\vec{\pi}_i$ can be equivalently expressed as
    \[
        \vec{\pi}_i^{t+1}(a) = \frac{\exp\{\vec{v}_i^{t+1}(a)\}}{\sum_{a'\in A_i} \exp\{\vec{v}_i^{t+1}(a')\}} \qquad\forall a \in A_i.
    \]
\end{remark}
\begin{proof}
    Let $\gamma \defeq \\min_{\vec{a}\in A_1\times \dots\times A_n} u_i(\vec{a})$ denote the minimum utility that Player~$i$ can get against the actions of the opponents. Since the argmax of a function does not change if a constant is added to the objective, from~\eqref{eq:argmax} we can write
    \begin{align*}
        \vec{\pi}_{i}^{t+1} &= \argmax_{\vec{\pi}\in\Delta(A_i)}\mleft\{ -\frac{\eta}{1+t\lambda_i\eta}\gamma + \frac{\eta}{1+t\lambda_i\eta}\sum_{a\in A_i}\mleft(t\lambda_i \log\vec{\tau}_i(a) + \sum_{t'=1}^{t} u_i(a, \vec{a}_{-i}^{t'})\mright)\vec{\pi}(a) - \sum_{a\in A_i}\vec{\pi}(a)\log\vec{\pi}(a)\mright\}\\
        &= \argmax_{\vec{\pi}\in\Delta(A_i)}\mleft\{ \frac{\eta}{1+t\lambda_i\eta}\sum_{a\in A_i}\mleft(t\lambda_i \log\vec{\tau}_i(a) + \sum_{t'=1}^{t} (u_i(a, \vec{a}_{-i}^{t'}) - \gamma)\mright)\vec{\pi}(a) - \sum_{a\in A_i}\vec{\pi}(a)\log\vec{\pi}(a)\mright\}\\
        &= \argmax_{\vec{\pi}\in\Delta(A_i)}\mleft\{ \sum_{a\in A_i}\vec{v}_i^{t+1}(a)\vec{\pi}(a) - \sum_{a\in A_i}\vec{\pi}(a)\log\vec{\pi}(a)\mright\},
    \end{align*}
    where the second equality follows from the fact that $\vec{\pi}\in\Delta(A_i)$. 
    
    A solution is again given by softmax function
    \[
        \vec{\pi}_i^{t+1}(a) = \frac{\exp\{\vec{v}_i^{t+1}(a)\}}{\sum_{a'\in A_i} \exp\{\vec{v}_i^{t+1}(a')\}} \qquad\forall a \in A_i.\numberthis{eq:pi fn v}
    \]
\end{proof}

In the rest of the proof we will use~\eqref{eq:pi fn v} to analyze the iterates $\vec{\pi}_i$ produced by the algorithm.

\begin{lemma}\label{lem:norm step}
    Let $\eta \le 1/(\lambda_i\beta_i + 2D_i)$. Then, at all times $t$,
    \[
        \|\vec{\pi}_i^{t+1} - \vec{\pi}_i^t\|_{\nabla^2 \varphi(\vec{\pi}_i^t)} \le \frac{\sqrt{3}\,e}{t\lambda_i\eta}.
    \]
\end{lemma}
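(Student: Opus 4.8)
The plan is to use the explicit softmax form of the iterates together with \cref{lem:abe}, which is clearly engineered for exactly this computation. First I would invoke the preceding remark, which expresses the iterate of \cref{algo:noregret} as $\vec{\pi}_i^{t+1}(a) \propto \exp\{\vec{v}_i^{t+1}(a)\}$ with $\vec{v}_i^{t+1}$ as in \eqref{eq:def v}. Setting $\vec{q}(a) \defeq \vec{v}_i^t(a) - \vec{v}_i^{t+1}(a)$ and dividing numerator and denominator of the softmax by $\sum_{a'}\exp\{\vec{v}_i^t(a')\}$ rewrites the update in multiplicative-weights form $\vec{\pi}_i^{t+1}(a) = \vec{\pi}_i^t(a)\exp\{-\vec{q}(a)\}/Z$, where $Z \defeq \sum_{a'}\vec{\pi}_i^t(a')\exp\{-\vec{q}(a')\}$. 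Since $\nabla^2\varphi(\vec{\pi}_i^t) = \operatorname{diag}(1/\vec{\pi}_i^t(a))$, the squared quantity to bound is $\sum_a (\vec{\pi}_i^{t+1}(a)-\vec{\pi}_i^t(a))^2/\vec{\pi}_i^t(a) = \sum_a \vec{\pi}_i^t(a)(\exp\{-\vec{q}(a)\}/Z - 1)^2$. Expanding the square and using $\sum_a\vec{\pi}_i^t(a)\exp\{-\vec{q}(a)\}=Z$ (so the cross term equals $-2$) shows this equals exactly the left-hand side of \cref{lem:abe} with $\vec{p}=\vec{\pi}_i^t$. This reduction is the conceptual heart of the argument.

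It then remains to control the range $\rho \defeq \max_{a,a'}\{\vec{q}(a)-\vec{q}(a')\}$ and the weighted sum $\sum_a\vec{\pi}_i^t(a)\vec{q}(a)^2$. For the range, write $c_s \defeq \eta/(1+s\lambda_i\eta)$ and $S^s(a)\defeq s\lambda_i\log\vec{\tau}_i(a)+\sum_{t'=1}^s\utilde_i(a,\vec{a}_{-i}^{t'})$, so that $\vec{v}_i^{t+1}(a)=c_tS^t(a)$ and $\vec{q}(a)=(c_{t-1}-c_t)S^{t-1}(a)-c_t(S^t(a)-S^{t-1}(a))$. Taking the difference across two actions, I would group the two $\log\vec{\tau}_i$ contributions together: their combined coefficient is $\lambda_i[(c_{t-1}-c_t)(t-1)-c_t]=-\lambda_i\eta/[(1+t\lambda_i\eta)(1+(t-1)\lambda_i\eta)]$, of magnitude at most $\lambda_i c_t$. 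The utility contributions are bounded separately using $\utilde_i\in[0,D_i]$ together with the elementary inequality $(c_{t-1}-c_t)(t-1)\le c_t$, giving at most $2D_ic_t$. Using the defining property of $\beta_i$ to bound the spread of $\log\vec{\tau}_i$, this yields $\rho\le c_t(\lambda_i\beta_i+2D_i)=\eta(\lambda_i\beta_i+2D_i)/(1+t\lambda_i\eta)$, whence the step-size hypothesis $\eta\le 1/(\lambda_i\beta_i+2D_i)$ forces $\rho\le 1/(1+t\lambda_i\eta)\le 1$.

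To finish, I would exploit the shift-invariance of the \cref{lem:abe} left-hand side (adding a constant to $\vec{q}$ leaves the softmax, hence the whole expression, unchanged) to replace $\vec{q}$ by its $\vec{\pi}_i^t$-centered version $\hat{\vec{q}}(a)\defeq\vec{q}(a)-\sum_{a'}\vec{\pi}_i^t(a')\vec{q}(a')$, which has the same range $\rho$ and satisfies $|\hat{\vec{q}}(a)|\le\rho$, so that $\sum_a\vec{\pi}_i^t(a)\hat{\vec{q}}(a)^2\le\rho^2$. Applying \cref{lem:abe} to $\hat{\vec{q}}$ with the parameter $M=1$ (a legitimate choice since $\rho\le 1$) collapses the prefactor $\exp\{2M\}/M^2$ to $e^2$, while the tight estimate $\rho\le 1/(1+t\lambda_i\eta)\le 1/(t\lambda_i\eta)$ controls the sum. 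Combining gives $\|\vec{\pi}_i^{t+1}-\vec{\pi}_i^t\|_{\nabla^2\varphi(\vec{\pi}_i^t)}^2\le e^2\rho^2\le e^2/(t\lambda_i\eta)^2\le 3e^2/(t\lambda_i\eta)^2$, and taking square roots yields the claim (the factor $\sqrt{3}$ is slack).

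The hard part is the range estimate. Bounding $S^{t-1}(a)-S^{t-1}(a')$ and the one-step increment $S^t(a)-S^{t-1}(a)$ independently double-counts the anchor term and produces $2c_t(\lambda_i\beta_i+D_i)$, i.e. a coefficient $2\lambda_i\beta_i$ on the anchor spread; this would only be $\le 1$ under the stronger condition $\eta\le 1/(2\lambda_i\beta_i+2D_i)$. The cancellation between the $(c_{t-1}-c_t)S^{t-1}$ and $c_tS^{t-1}$ pieces is precisely what halves the anchor coefficient to $\lambda_i\beta_i$ and makes the stated step-size condition exactly sufficient; getting this grouping right is the one delicate step, and everything else is routine.
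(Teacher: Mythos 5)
Your proposal is correct and follows essentially the same route as the paper's proof: the same multiplicative-weights rewriting of the iterates, the same expansion of the local norm into the left-hand side of \cref{lem:abe}, and the same crucial grouping of the anchor terms yielding the combined coefficient $-\lambda_i\eta/[(1+t\lambda_i\eta)(1+(t-1)\lambda_i\eta)]$ and hence a range bound of at most $\eta(\lambda_i\beta_i+2D_i)$. The only deviation is in the final step and is harmless: where the paper applies \cref{lem:abe} with $M=\eta(\lambda_i\beta_i+2D_i)$ and then bounds $\sum_a \vec{\pi}_i^t(a)\,\vec{\xi}_i^t(a)^2 \le \max_a \vec{\xi}_i^t(a)^2$ via a three-term square expansion (the source of its $\sqrt{3}$), you exploit shift-invariance to center $\vec{q}$ and take $M=1$, which gives the marginally sharper constant $e$ in place of $\sqrt{3}\,e$.
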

\begin{proof}
    At all times $t$, introduce the vector $\vec{\xi}^t_i \in \mathbb{R}^{|A_i|}$ defined as
    \begin{equation}
        \vec{\xi}^{t}_i(a) \defeq \frac{\eta}{1+t\lambda_i\eta} \mleft(  - \lambda_i \vec{v}^t(a) + \lambda_i \log\vec{\tau}_i(a) + \utilde_i(a, \vec{a}_{-i}^t) \mright) \qquad\forall a\in A_i.\label{eq:def xi}
    \end{equation}
    At all times $t$ and for all $a$, it holds that
    \begin{align*}
        \vec{v}_i^{t+1}(a) &= \frac{\eta}{1+t\lambda_i\eta}\mleft(\frac{1+(t-1)\lambda_i\eta}{\eta}\vec{v}^t(a) + \lambda_i \log\vec{\tau}_i(a) + \utilde_i(a, \vec{a}_{-i}^t)\mright)\\
        &= \frac{\eta}{1+t\lambda_i\eta}\mleft(\frac{1+t\lambda_i\eta}{\eta}\vec{v}^t(a) - \lambda_i \vec{v}^t(a) + \lambda_i \log\vec{\tau}_i(a) + \utilde_i(a, \vec{a}_{-i}^t)\mright)\\
        &= \vec{v}^t(a) + \vec{\xi}_i^t(a).\numberthis{eq:v t plus 1}
    \end{align*}
    Substituting \eqref{eq:v t plus 1} we can write
    \[
        \vec{\pi}_i^{t+1}(a) = \frac{\exp\{\vec{v}_i^t(a)\}\cdot\exp\{\vec{\xi}_i^t(a)\}}{\sum_{a'\in A_i}\exp\{\vec{v}_i^t(a')\}\cdot\exp\{\vec{\xi}_i^t(a')\}} = \frac{\vec{\pi}_i^t(a) \exp\{\vec{\xi}_i^t(a)\}}{\sum_{a' \in A_i} \vec{\pi}_i^t(a') \exp\{\vec{\xi}_i^t(a')\}}.
        \numberthis{eq:pi t plus 1}
    \]
    Expanding the definition of the local norm induced by $\nabla^2 \varphi(\vec{\pi}_i^t)$ we find
    \begin{align*}
        \|\vec{\pi}_i^{t+1} - \vec{\pi}_i^t\|^2_{\nabla^2 \varphi(\vec{\pi}_i^t)} &= \sum_{a\in A_i} \frac{1}{\vec{\pi}_i^t(a)} \mleft(\vec{\pi}_i^{t+1}(a) - \vec{\pi}_i^t(a)\mright)^2\\
        &= \sum_{a\in A_i} \vec{\pi}_i^t(a) \mleft( \frac{\exp\{\vec{\xi}_i^t(a)\}}{\sum_{a' \in A_i} \vec{\pi}_i^t(a') \exp\{\vec{\xi}_i^t(a')\}} -1\mright)^2 \numberthis{eq:norm expansion 1}\\
        &= \sum_{a\in A_i} \vec{\pi}_i^t(a) \mleft[ \mleft(\frac{\exp\{\vec{\xi}_i^t(a)\}}{\sum_{a' \in A_i} \vec{\pi}_i^t(a') \exp\{\vec{\xi}_i^t(a')\}}\mright)^2 - 2\mleft(\frac{\exp\{\vec{\xi}_i^t(a)\}}{\sum_{a' \in A_i} \vec{\pi}_i^t(a') \exp\{\vec{\xi}_i^t(a')\}}\mright) +1\mright]\\
        &=\frac{\sum_{a\in A_i} \vec{\pi}_i^t(a)\exp\{\vec{\xi}_i^t(a)\}^2}{\mleft(\sum_{a' \in A_i} \vec{\pi}_i^t(a') \exp\{\vec{\xi}_i^t(a')\}\mright)^2} - 2\frac{\sum_{a\in A_i} \vec{\pi}_i^t(a)\exp\{\vec{\xi}_i^t(a)\}}{\sum_{a' \in A_i} \vec{\pi}_i^t(a') \exp\{\vec{\xi}_i^t(a')\}} +\sum_{a\in A_i} \vec{\pi}_i^t(a)\\
        &=\frac{\sum_{a\in A_i} \vec{\pi}_i^t(a)\exp\{\vec{\xi}_i^t(a)\}^2}{\mleft(\sum_{a' \in A_i} \vec{\pi}_i^t(a') \exp\{\vec{\xi}_i^t(a')\}\mright)^2} -1\numberthis{eq:norm expansion 2},
    \end{align*}
    where~\eqref{eq:norm expansion 1} follows from substituting~\eqref{eq:pi t plus 1}. We now apply \cref{lem:abe}, applied with $\vec{q} = \vec{\xi}_i^t$, $\vec{p} = \vec{\pi}_i^t$, and $A = A_i$. First, we study the range $D_i = \max_{a,a'\in A_i} \{\vec{\xi}_i^t(a) - \vec{\xi}_i^t(a')\}$ used in the statement of the Lemma. In particular, using~\eqref{eq:v t plus 1} we have
    \allowdisplaybreaks\begin{align*}
        \max_{a,a'\in A_i} \{\vec{\xi}_i^t(a) - \vec{\xi}_i^t(a')\} &= \max_{a,a'\in A_i} \{\vec{v}_i^{t+1}(a) - \vec{v}_i^{t}(a) - \vec{v}_i^{t+1}(a') + \vec{v}_i^{t}(a')\}\\
        &=\max_{a,a'\in A_i} \mleft\{(\log\vec{\tau}_i(a)-\log\vec{\tau}_i(a')) \cdot \mleft(\frac{t\lambda_i\eta}{1+t\lambda_i\eta} - \frac{(t-1)\lambda_i\eta}{1+(t-1)\lambda_i\eta}\mright) \mright.\\&\hspace{5cm}
            +\ \frac{\eta}{1+t\lambda_i\eta}\mleft(\sum_{t'=1}^t \utilde_i(a, \vec{a}_{-i}^{t'}) - \utilde_i(a',\vec{a}_{-i}^{t'})\mright) \\&\hspace{5cm}
            \mleft. -\ \frac{\eta}{1+(t-1)\lambda_i\eta}\mleft(\sum_{t'=1}^{t-1} \utilde_i(a, \vec{a}_{-i}^{t'}) - \utilde_i(a',\vec{a}_{-i}^{t'})\mright)\mright\}\\
        &=\max_{a,a'\in A_i} \mleft\{\frac{\lambda_i\eta}{(1+t\lambda_i\eta)(1+(t-1)\lambda_i\eta)}(\log\vec{\tau}_i(a)-\log\vec{\tau}_i(a')) \mright.\\&\hspace{4cm}
            +\ \frac{\lambda_i\eta^2}{(1+t\lambda_i\eta)(1+(t-1)\lambda_i\eta)}\mleft(\sum_{t'=1}^{t-1} \utilde_i(a, \vec{a}_{-i}^{t'}) - \utilde_i(a',\vec{a}_{-i}^{t'})\mright)\\&\hspace{4cm}
            \mleft.+\ \frac{\eta}{1+t\lambda_i\eta}(\utilde_i(a,\vec{a}_{-i}^t) - \utilde_i(a',\vec{a}_{-i}^t))\mright\}\\
        &\le\max_{a,a'\in A_i} \mleft\{\frac{\lambda_i\eta}{(1+t\lambda_i\eta)(1+(t-1)\lambda_i\eta)}(\log\vec{\tau}_i(a)-\log\vec{\tau}_i(a')) \mright\}\\&\hspace{2cm}
            +\ \max_{a,a'\in A_i} \mleft\{\frac{\lambda_i\eta^2}{(1+t\lambda_i\eta)(1+(t-1)\lambda_i\eta)}\mleft(\sum_{t'=1}^{t-1} \utilde_i(a, \vec{a}_{-i}^{t'}) - \utilde_i(a',\vec{a}_{-i}^{t'})\mright)\mright\}\\&\hspace{4cm}
            +\ \max_{a,a'\in A_i} \mleft\{\frac{\eta}{1+t\lambda_i\eta}(\utilde_i(a,\vec{a}_{-i}^t) - \utilde_i(a',\vec{a}_{-i}^t))\mright\}\\
        &\le \eta(\lambda_i\beta_i + 2D_i),\numberthis{eq:xi bound}
    \end{align*}
    where the first inequality follows from upper bounding the max of a sum with the sum of max of each term, and the second inequality follows from noting that
    \[
    \frac{\lambda_i\eta}{(1+t\lambda_i\eta)(1+(t-1)\lambda_i\eta)} \le \frac{\eta}{1+t\lambda_i\eta} \le \eta,
    \] and
    \[\frac{\lambda_i\eta^2}{(1+t\lambda_i\eta)(1+(t-1)\lambda_i\eta)} \le \frac{\eta}{t}
    \mleft(\sum_{t'=1}^{t-1} \utilde_i(a, \vec{a}_{-i}^{t'}) - \utilde_i(a',\vec{a}_{-i}^{t'})\mright) \le \frac{\lambda_i\eta^2}{t\lambda_i\eta}\cdot tD_i = \eta D_i.
    \]
    Applying \cref{lem:abe} to the right-hand side of~\eqref{eq:norm expansion 2} using the bound on the range of $\vec{\xi}_i^t$ shown in~\eqref{eq:xi bound} yields
    \begin{align*}
        \|\vec{\pi}_i^{t+1} - \vec{\pi}_i^t\|^2_{\nabla^2 \varphi(\vec{\pi}_i^t)} &\le \frac{\exp\{2 \eta(\lambda_i\beta_i + 2D_i)\}}{\eta^2(\lambda_i\beta_i + 2D_i)^2} \sum_{a\in A_i} \vec{\pi}_i^t(a)\mleft(\vec{\xi}_i^t(a)\mright)^2.\numberthis{eq:after lemma 2}
    \end{align*}
    Using the fact that any convex combination of values is upper bounded by the maximum value, we can further bound the right-hand side of~\eqref{eq:after lemma 2} as
    \begin{align*}
        \|\vec{\pi}_i^{t+1} - \vec{\pi}_i^t\|^2_{\nabla^2 \varphi(\vec{\pi}_i^t)}
        &\le \frac{\exp\{2 \eta(\lambda_i\beta_i + 2D_i)\}}{\eta^2(\lambda_i\beta_i + 2D_i)^2}\max_{a\in A_i} (\vec{\xi}_i^t(a))^2\\
        &= \frac{\exp\{2 \eta(\lambda_i\beta_i + 2D_i)\}}{\eta^2(\lambda_i\beta_i + 2D_i)^2}\max_{a\in A_i} \mleft(\vec{v}_i^{t+1}(a) - \vec{v}_i^t(a)\mright)^2,
    \end{align*}
    where the equality follows from~\eqref{eq:v t plus 1}. Hence, expanding the definition of $\vec{v}^t_i$ and $\vec{v}_i^{t+1}$,
    \begin{align*}
        \|\vec{\pi}_i^{t+1} - \vec{\pi}_i^t\|^2_{\nabla^2 \varphi(\vec{\pi}_i^t)}
        &\le\frac{\exp\{2 \eta(\lambda_i\beta_i + 2D_i)\}}{\eta^2(\lambda_i\beta_i + 2D_i)^2}\max_{a\in A_i} \mleft\{ \mleft(\frac{t\lambda_i\eta}{1+t\lambda_i\eta} - \frac{(t-1)\lambda_i\eta}{1+(t-1)\lambda_i\eta}\mright)\log\vec{\tau}_i(a) \mright.\\&\hspace{5cm}
                    \mleft.+\ \frac{\eta}{1+t\lambda_i\eta}\sum_{t'=1}^t \utilde_i(a, \vec{a}_{-i}^{t'}) - \frac{\eta}{1+(t-1)\lambda_i\eta}\sum_{t'=1}^{t-1} \utilde_i(a, \vec{a}_{-i}^{t'})\mright\}^2\\
        &= \frac{\exp\{2 \eta(\lambda_i\beta_i + 2D_i)\}}{\eta^2(\lambda_i\beta_i + 2D_i)^2}\max_{a\in A_i} \mleft\{\frac{\lambda_i\eta}{(1+t\lambda_i\eta)(1+(t-1)\lambda_i\eta)}\log\vec{\tau}_i(a) + \frac{\eta}{1+t\lambda_i\eta}\utilde_i(a,\vec{a}_{-i}^{t})\mright.\\&\hspace{6.9cm} - \mleft.\frac{\lambda_i\eta^2}{(1+t\lambda_i\eta)(1+(t-1)\lambda_i\eta)}\sum_{t'=1}^{t-1}\utilde_i(a,\vec{a}_{-i}^{t'})\mright\}^{\!2}\\
        &\le 3\frac{\exp\{2 \eta(\lambda_i\beta_i + 2D_i)\}}{\eta^2(\lambda_i\beta_i + 2D_i)^2}\max_{a\in A_i} \mleft\{\mleft(\frac{\lambda_i\eta}{(1+t\lambda_i\eta)(1+(t-1)\lambda_i\eta)}\log\vec{\tau}_i(a)\mright)^2 + \mleft(\frac{\eta}{1+t\lambda_i\eta}\utilde_i(a,\vec{a}_{-i}^{t})\mright)^2 \mright.\\&\hspace{6.7cm} + \mleft.\mleft(\frac{\lambda_i\eta^2}{(1+t\lambda_i\eta)(1+(t-1)\lambda_i\eta)}\sum_{t'=1}^{t-1}\utilde_i(a,\vec{a}_{-i}^{t'})\mright)^2\mright\}\\  
        &= \frac{3}{(1+t\lambda_i\eta)^2}\frac{\exp\{2 \eta(\lambda_i\beta_i + 2D_i)\}}{\eta^2(\lambda_i\beta_i + 2D_i)^2}\max_{a\in A_i} \mleft\{\mleft(\frac{\lambda_i\eta}{1+(t-1)\lambda_i\eta}\log\vec{\tau}_i(a)\mright)^2 + \mleft({\eta}\utilde_i(a,\vec{a}_{-i}^{t})\mright)^2 \mright.\\&\hspace{6.7cm} + \mleft.\mleft(\frac{\lambda_i\eta^2}{(1+(t-1)\lambda_i\eta)}\sum_{t'=1}^{t-1}\utilde_i(a,\vec{a}_{-i}^{t'})\mright)^2\mright\}\\
        &\le \frac{3}{(1+t\lambda_i\eta)^2}\frac{\exp\{2 \eta(\lambda_i\beta_i + 2D_i)\}}{\eta^2(\lambda_i\beta_i + 2D_i)^2}(\lambda^2\eta^2\beta_i^2 + 2\eta^2 D_i^2) \\
        &\le 3\frac{\exp\{2\eta(\lambda_i\beta_i + 2D_i)\}}{(1+t\lambda_i\eta)^2}\\
        &\le 3\frac{\exp\{2\eta(\lambda_i\beta_i + 2D_i)\}}{(t\lambda_i\eta)^2} = \mleft(\sqrt{3}\cdot\frac{\exp\{\eta(\lambda_i\beta_i + 2D_i)\}}{t\lambda_i\eta}\mright)^2.
    \end{align*}
    Using the hypothesis that $\eta \le 1/(\lambda_i \beta_i + 2D_i)$ and taking square roots yields the statement.
\end{proof}

\subsection{Completing the analysis}

\begin{restatable}{proposition}{propregret}\label{prop:regret}
    Fix a player $i \in \{1, \dots, n\}$. The regret
    \[
        R_{i}^T \defeq \max_{\vec{\pi}^* \in \Delta(A_i)}\sum_{t=1}^T\mathcal{U}_{i}^t(\vec{\pi}^*) - \sum_{t=1}^T\mathcal{U}_{i}^t(\vec{\pi}^t_{i})
    \]
    incurred up to any time $T$ by policies $\vec{\pi}^t_{i}$ defined in~\eqref{eq:distribution} where the learning rate is set to any value $0 < \eta \le 1/(\lambda_i\beta_i + 2D_i)$, satisfies
    \[
        R_{i}^T \le \frac{\log |A_i|}{\eta} + \frac{3\,e ( 1+ \log T)}{\lambda_i\eta}(D_i+\lambda_i\beta_i+\lambda_i\sqrt{|A_i|}),
    \]
    where $D_i$ is any upper bound on the range of the possible rewards of Player~$i$, and
    \begin{equation}
        \beta_i \defeq \max_{a\in A_i} \log(1/\vec{\tau}(a)).
        \label{eq:def beta i}
    \end{equation}
\end{restatable}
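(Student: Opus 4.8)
The plan is to recognize \cref{algo:noregret} as an instance of the follow-the-regularized-leader scheme~\eqref{eq:ftrl} and then read the bound off from the generic FTRL guarantee in \cref{lem:ftrl}. By the lemma establishing that the iterates of \cref{algo:noregret} coincide with those of FTRL run on the regularized utilities $\ell^t = \mathcal{U}_i^t$ with negative-entropy regularizer $\varphi(\vec{x}) = \sum_{a\in A_i}\vec{x}(a)\log\vec{x}(a)$, and since $\varphi$ is $1$-strongly convex with respect to $\|\cdot\|_1$ on $\Delta(A_i)$, \cref{lem:ftrl} applies. It is also convenient to first observe that $R_i^T$ is unchanged if the rewards $u_i$ are shifted to lie in $[0,D_i]$, since such a shift changes $\mathcal{U}_i^t(\vec{\pi}^*)$ and $\mathcal{U}_i^t(\vec{\pi}_i^t)$ by the same constant; so I assume $u_i\in[0,D_i]$ throughout.

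Applying \cref{lem:ftrl} with $\vec{u}=\vec{\pi}^*$, and using the regularizer shifted by its minimum (which leaves the iterates unchanged), gives
\[
R_i^T \le \frac{\log|A_i|}{\eta} + \sum_{t=1}^T\mleft(\mathcal{U}_i^t(\vec{\pi}_i^{t+1}) - \mathcal{U}_i^t(\vec{\pi}_i^t)\mright),
\]
where I used that the range of the negative entropy over $\Delta(A_i)$ is exactly $\log|A_i|$. This already produces the first term of the claim, so the whole proof reduces to bounding the stability sum. For each term I would use concavity of $\mathcal{U}_i^t$ together with Cauchy--Schwarz in the local norm induced by $\nabla^2\varphi(\vec{\pi}_i^t) = \mathrm{diag}(1/\vec{\pi}_i^t(a))$:
\[
\mathcal{U}_i^t(\vec{\pi}_i^{t+1}) - \mathcal{U}_i^t(\vec{\pi}_i^t) \le \mleft\langle \nabla\mathcal{U}_i^t(\vec{\pi}_i^t),\, \vec{\pi}_i^{t+1}-\vec{\pi}_i^t\mright\rangle \le \|\nabla\mathcal{U}_i^t(\vec{\pi}_i^t)\|_{t,*}\,\|\vec{\pi}_i^{t+1}-\vec{\pi}_i^t\|_{\nabla^2\varphi(\vec{\pi}_i^t)},
\]
where $\|\vec{g}\|_{t,*}^2 = \sum_a \vec{\pi}_i^t(a)\,\vec{g}(a)^2$ is the dual local norm. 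The second factor is at most $\sqrt{3}\,e/(t\lambda_i\eta)$ by \cref{lem:norm step} (this is exactly where the step-size restriction $\eta\le 1/(\lambda_i\beta_i+2D_i)$ is needed), so it only remains to bound the dual norm of the gradient.

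The gradient is $\nabla\mathcal{U}_i^t(\vec{\pi}_i^t)(a) = u_i(a,\vec{a}_{-i}^t) - \lambda_i\log\vec{\pi}_i^t(a) + \lambda_i\log\vec{\tau}_i(a) - \lambda_i$. Two observations make this tractable: the constant $-\lambda_i$ may be dropped, because adding a multiple of $\vec{1}$ to the gradient does not change its inner product with the simplex difference $\vec{\pi}_i^{t+1}-\vec{\pi}_i^t$; and although $-\lambda_i\log\vec{\pi}_i^t(a)$ is unbounded as $\vec{\pi}_i^t(a)\to 0$, it is harmless once weighted by $\vec{\pi}_i^t(a)$ in the dual norm. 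Using $(x+y+z)^2\le 3(x^2+y^2+z^2)$,
\[
\|\nabla\mathcal{U}_i^t(\vec{\pi}_i^t)\|_{t,*}^2 \le 3\mleft(\sum_a\vec{\pi}_i^t(a)\,u_i(a,\vec{a}_{-i}^t)^2 + \lambda_i^2\sum_a\vec{\pi}_i^t(a)\log^2\vec{\pi}_i^t(a) + \lambda_i^2\sum_a\vec{\pi}_i^t(a)\log^2\vec{\tau}_i(a)\mright).
\]
The first sum is at most $D_i^2$ since $u_i\in[0,D_i]$; the third is at most $\lambda_i^2\beta_i^2$ by the definition of $\beta_i$; and the middle sum is at most $\lambda_i^2|A_i|$, because $x\log^2 x\le 1$ on $(0,1]$ so each of the $|A_i|$ summands is at most $1$. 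Taking square roots and using $\sqrt{a^2+b^2+c^2}\le a+b+c$ yields $\|\nabla\mathcal{U}_i^t(\vec{\pi}_i^t)\|_{t,*}\le \sqrt{3}\,(D_i+\lambda_i\beta_i+\lambda_i\sqrt{|A_i|})$.

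Multiplying the two factors gives a per-iteration bound of $\frac{3e}{t\lambda_i\eta}(D_i+\lambda_i\beta_i+\lambda_i\sqrt{|A_i|})$; summing over $t$ and using $\sum_{t=1}^T 1/t\le 1+\log T$ reproduces the second term of the claim, and adding $\log|A_i|/\eta$ finishes the proof. I expect the genuinely delicate step to be the dual-local-norm bound on the gradient --- in particular taming the $-\lambda_i\log\vec{\pi}_i^t(a)$ contribution, whose pointwise blow-up is controlled only after the $\vec{\pi}_i^t(a)$-weighting and the elementary inequality $x\log^2 x\le 1$; this weighting is also the origin of the otherwise surprising $\sqrt{|A_i|}$ factor. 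Everything else is bookkeeping on top of the already-established \cref{lem:norm step}.
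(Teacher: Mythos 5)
Your proposal is correct and follows essentially the same route as the paper's proof: reduce to the FTRL guarantee of \cref{lem:ftrl}, bound each stability term $\mathcal{U}_i^t(\vec{\pi}_i^{t+1})-\mathcal{U}_i^t(\vec{\pi}_i^t)$ by a first-order inequality (your concavity of $\mathcal{U}_i^t$ is the same computation as the paper's convexity of $\varphi$ applied to the KL term) followed by generalized Cauchy--Schwarz in the local norms $\|\cdot\|_{\nabla^2\varphi(\vec{\pi}_i^t)}$ and $\|\cdot\|_{\nabla^{-2}\varphi(\vec{\pi}_i^t)}$, control the dual gradient norm by $\sqrt{3}\,(D_i+\lambda_i\beta_i+\lambda_i\sqrt{|A_i|})$ via $x\log^2 x\le 1$, invoke \cref{lem:norm step} for the step norm, and sum $1/t\le 1+\log T$. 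Your explicit handling of the utility shift, the dropped constant $-\lambda_i\vec{1}$ in the gradient, and the regularizer's $\log|A_i|$ range are all consistent with (and slightly more careful than) the paper's write-up.
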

\begin{proof}
    Let
    \[
        \vec{q}^t_i \defeq \Big(\utilde_i(a, \vec{a}^t_{-i})\Big)_{a\in A_i},
    \]
    and note that, by definition of the regularized utilities $\mathcal{U}_i$,
    \begin{align*}
        \mathcal{U}_i^t(\vec{\pi}^{t+1}_i) - \mathcal{U}_i^t(\vec{\pi}^t_i)
        &=  \vec{q}_i^\top \mleft(\vec{\pi}^{t+1}_i - \vec{\pi}^t_i\mright) - \lambda_i \KL{\vec{\pi}^{t+1}_i}{\vec{\tau}} + \lambda_i\KL{\vec{\pi}^t_i}{\vec{\tau}}\\
        &=  \vec{q}_i^\top \mleft(\vec{\pi}^{t+1}_i - \vec{\pi}^t_i\mright) - \lambda_i \varphi(\vec{\pi}^{t+1}_i) + \lambda_i\varphi(\vec{\pi}^t_i) - \lambda_i \nabla\varphi(\vec{\tau}_i)^\top (\vec{\pi}^t_i - \vec{\pi}^{t+1}_i)\\
        &=  \mleft(\vec{q}_i - \nabla \varphi(\vec{\tau}_i)\mright)^\top \mleft(\vec{\pi}^{t+1}_i - \vec{\pi}^t_i\mright) - \lambda_i \varphi(\vec{\pi}^{t+1}_i) + \lambda_i\varphi(\vec{\pi}^t_i)\\
        &\le  \mleft(\vec{q}_i + \nabla \varphi(\vec{\tau}_i)\mright)^\top \mleft(\vec{\pi}^{t+1}_i - \vec{\pi}^t_i\mright) - \lambda_i \varphi(\vec{\pi}^{t}_i) - \lambda_i \nabla\varphi(\vec{\pi}^t_i)^\top (\vec{\pi}^{t+1}_i - \vec{\pi}^t_i) + \lambda_i\varphi(\vec{\pi}^t_i)\\
        &=  \mleft(\vec{q}_i + \lambda_i\nabla \varphi(\vec{\tau}_i) - \lambda_i\nabla\varphi(\vec{\pi}^t_i)\mright)^\top \mleft(\vec{\pi}^{t+1}_i - \vec{\pi}^t_i\mright)\\
        &\le \mleft\|\vec{q}_i + \lambda_i\nabla \varphi(\vec{\tau}_i) - \lambda_i\nabla\varphi(\vec{\pi}^t_i)\mright\|_{\nabla^{-2}\varphi(\vec{\pi}^t_i)}\cdot \mleft\|\vec{\pi}^{t+1}_i - \vec{\pi}^t_i\mright\|_{\nabla^2\varphi(\vec{\pi}^t_i)},
    \end{align*}    
    where the first inequality follows by convexity and the second inequality by the generalized Cauchy-Schwarz inequality with the primal-dual norm pair $\|\cdot\|_{\nabla^2\varphi(\vec{\pi}^t_i)}$ and $\|\cdot\|_{\nabla^{-2}\varphi(\vec{\pi}^t_i)}$.
    A bound for the second term in the product if given by \cref{lem:norm step}. We now bound the first norm. First,
    \begin{align*}
        \mleft\|\vec{q}_i + \lambda_i\nabla \varphi(\vec{\tau}_i) - \lambda_i\nabla\varphi(\vec{\pi}^t_i)\mright\|_{\nabla^{-2}\varphi(\vec{\pi}^t_i)}^2 &= \sum_{a\in A_i} \vec{\pi}^t(a) \cdot\mleft( \tilde u_i(a, \vec{a}_{-i}^t) + \lambda_i \log \vec{\tau}_i(a) - \lambda_i \log\vec{\pi}^t_i(a)\mright)^2        \\
        &\le 3\sum_{a\in A_i} \vec{\pi}^t(a) \cdot\mleft( \tilde u_i(a, \vec{a}_{-i}^t)^2 + \lambda_i^2 (\log \vec{\tau}_i(a))^2 + \lambda_i^2 (\log\vec{\pi}^t_i(a))^2\mright)\\
        &\le 3\mleft(D_i^2 + \lambda_i^2\beta_i^2 + \lambda_i^2|A_i|\mright)\\
        &\le 3\mleft(D_i + \lambda_i\beta_i + \lambda_i \sqrt{|A_i|}\mright)^2,
    \end{align*}
    where the second inequality follows from the fact that $x \log^2 x \le 1$ for all $x\in [0,1]$.
    Taking square roots, we find
    \[
        \mleft\|\vec{q}_i + \lambda_i\nabla \varphi(\vec{\tau}_i) - \lambda_i\nabla\varphi(\vec{\pi}^t_i)\mright\|_{\nabla^{-2}\varphi(\vec{\pi}^t_i)} \le \sqrt{3}\mleft(D_i + \lambda_i\beta_i + \lambda_i \sqrt{|A_i|}\mright).
    \]
    So, using~\cref{lem:norm step}, we can write
    \[
    \mathcal{U}_i^t(\vec{\pi}^{t+1}_i) - \mathcal{U}_i^t(\vec{\pi}^t_i) \le \frac{3\,e}{t\lambda_i\eta}(D_i+\lambda_i\beta_i+\lambda_i\sqrt{|A_i|}).
    \]
    Plugging in the above expression into \cref{lem:ftrl} yields
    \begin{align*}
        R^T_i &\le \frac{\log |A_i|}{\eta} + \sum_{t=1}^T \frac{3\,e}{t\lambda_i\eta}(D_i+\lambda_i\beta_i+\lambda_i\sqrt{|A_i|})\\
        &\le \frac{\log |A_i|}{\eta} + \frac{3\,e ( 1+ \log T)}{\lambda_i\eta}(D_i+\lambda_i\beta_i+\lambda_i\sqrt{|A_i|}),
    \end{align*}
    which is the statement.
\end{proof}

\subsection{Proof of \cref{cor:distance}}
\thmdistancefromtau*
\begin{proof}
    By definition of regret, 
    \begin{align*}
        \frac{1}{T}R_i^T &= \frac{1}{T}\max_{{\vec{\pi}}^*_i\in\Delta(A_i)}\mleft\{\sum_{t=1}^T \mathcal{U}^t_i({\vec{\pi}}^*_i) - \mathcal{U}^t_i({\vec{\pi}}^t_i) \mright\}\\
        &=\max_{{\vec{\pi}}^*_i\in\Delta(A_i)}\mleft\{\mleft(\frac{1}{T}\sum_{t=1}^T u_i({\vec{\pi}}^*_i, \vec{\pi}_{-i}^t)\mright) - \mleft(\frac{1}{T}\sum_{t=1}^T u_i({\vec{\pi}}^t_i, \vec{\pi}_{-i}^t)\mright) - \frac{\lambda_i}{T}\sum_{t=1}^T \KL{\vec{\pi}_i^*}{\vec{\tau}_i} + \frac{\lambda_i}{T}\sum_{t=1}^T \KL{\vec{\pi}_i^t}{\vec{\tau}_i} \mright\}\\
        &=\max_{{\vec{\pi}}^*_i\in\Delta(A_i)}\mleft\{\mleft(\frac{1}{T}\sum_{t=1}^T u_i({\vec{\pi}}^*_i, \vec{\pi}_{-i}^t)\mright) - \mleft(\frac{1}{T}\sum_{t=1}^T u_i({\vec{\pi}}^t_i, \vec{\pi}_{-i}^t)\mright) - \lambda_i \KL{\vec{\pi}_i^*}{\vec{\tau}_i} + \frac{\lambda_i}{T}\sum_{t=1}^T \KL{\vec{\pi}_i^t}{\vec{\tau}_i} \mright\}\\
        &\ge\max_{{\vec{\pi}}^*_i\in\Delta(A_i)}\mleft\{\mleft(\frac{1}{T}\sum_{t=1}^T u_i({\vec{\pi}}^*_i, \vec{\pi}_{-i}^t)\mright) - \mleft(\frac{1}{T}\sum_{t=1}^T u_i({\vec{\pi}}^t_i, \vec{\pi}_{-i}^t)\mright) + \frac{\lambda_i}{T}\sum_{t=1}^T \KL{\vec{\pi}_i^t}{\vec{\tau}_i} \mright\}\\
        &\ge\max_{{\vec{\pi}}^*_i\in\Delta(A_i)}\mleft\{\mleft(\frac{1}{T}\sum_{t=1}^T u_i({\vec{\pi}}^*_i, \vec{\pi}_{-i}^t)\mright) - \mleft(\frac{1}{T}\sum_{t=1}^T u_i({\vec{\pi}}^t_i, \vec{\pi}_{-i}^t)\mright) + \lambda_i  \KL{\bar{\vec{\pi}}_i^T}{\vec{\tau}_i} \mright\}\\
        &=\max_{{\vec{\pi}}^*_i\in\Delta(A_i)}\mleft\{\mleft(\frac{1}{T}\sum_{t=1}^T (u_i({\vec{\pi}}^*_i, \vec{\pi}_{-i}^t)- u_i({\vec{\pi}}^t_i, \vec{\pi}_{-i}^t)\mright) + \lambda_i  \KL{\bar{\vec{\pi}}_i^T}{\vec{\tau}_i} \mright\}\\
        &\ge \max_{{\vec{\pi}}^*_i\in\Delta(A_i)}\mleft\{\mleft(\frac{1}{T}\sum_{t=1}^T -D_i\mright) + \lambda_i  \KL{\bar{\vec{\pi}}_i^T}{\vec{\tau}_i} \mright\} = -D_i + \lambda_i  \KL{\bar{\vec{\pi}}_i^T}{\vec{\tau}_i},
    \end{align*}
    where the first inequality holds since the KL divergence is nonnegative, the second inequality by convexity of the KL divergence, and the third inequality by definition of $D_i$. Rearranging yields the inequality in the statement.
\end{proof}
    
\subsection{Relationship with Nash Equilibrium}

In this subsection, we show that when all players play according to \cref{algo:noregret} in a \emph{two-player zero-sum} game, then the average policies $\bar{\vec{\pi}}_{i}^T$ converge to a Nash equilibrium \emph{of the regularized game whose utilities are $\mathcal{U}_i$}. 

\begin{restatable}{proposition}{propne}\label{prop:ne}
    For any $T \in \mathbb{N}$, $\eta > 0$, and $\delta \in (0,1)$, define the quantity
    \[
        \xi^T(\delta) \defeq
        \frac{R_1^T+ R_2^T}{T} + (\max_i D_i)\sqrt{\frac{32}{T}\log\frac{2\max_i |A_i|}{\delta}}.
    \]
    Upon running \cref{algo:noregret} for any $T$ iterations with learning rate $\eta > 0$, the average policies
    $\bar{\vec{\pi}}_{i}^T$ of each player form a $\xi^T(\delta)$-approximate Nash equilibrium with respect to the regularized utility functions $\mathcal{U}_{i}$ with probability at least $1-\delta$, for any $\delta \in (0,1)$.
\end{restatable}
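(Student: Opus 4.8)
The plan is to combine the per-player regret bound of \cref{prop:regret} with a standard folk-theorem argument for saddle-point problems, while carefully handling the fact that \cref{algo:noregret} plays \emph{sampled} pure actions $\vec{a}_{-i}^t$ rather than full distributions. The first observation is that, although the regularized game with payoffs $\mathcal{U}_i$ is not literally zero-sum (indeed $\mathcal{U}_1+\mathcal{U}_2 = -\lambda_1\KL{\vec{\pi}_1}{\vec{\tau}_1}-\lambda_2\KL{\vec{\pi}_2}{\vec{\tau}_2}\le 0$), its only non-zero-sum contributions are the two regularizers, each depending on a single player's own strategy. Writing $u\defeq u_1=-u_2$, the pair $(\vec{\pi}_1,\vec{\pi}_2)$ is a saddle point of $\Psi(\vec{\pi}_1,\vec{\pi}_2)\defeq u(\vec{\pi}_1,\vec{\pi}_2)-\lambda_1\KL{\vec{\pi}_1}{\vec{\tau}_1}+\lambda_2\KL{\vec{\pi}_2}{\vec{\tau}_2}$, concave in $\vec{\pi}_1$ and convex in $\vec{\pi}_2$; an $\epsilon$-approximate Nash equilibrium of the regularized game is exactly a pair whose saddle gap $\epsilon_1+\epsilon_2$ is at most $\epsilon$, where $\epsilon_i$ is player $i$'s best-response gain against the average opponent.

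The second step is a deterministic bound relating this gap to the \emph{full-information} regrets $\tilde R_i^T\defeq\max_{\vec{\pi}^*}\sum_t \mathcal{U}_i(\vec{\pi}^*,\vec{\pi}_{-i}^t)-\sum_t\mathcal{U}_i(\vec{\pi}_i^t,\vec{\pi}_{-i}^t)$, in which the opponent plays its mixed strategy $\vec{\pi}_{-i}^t$. Since $u$ is bilinear, linearity in the opponent's argument gives $\max_{\vec{\pi}^*}\mathcal{U}_i(\vec{\pi}^*,\bar{\vec{\pi}}_{-i}^T)=\frac1T\max_{\vec{\pi}^*}\sum_t\mathcal{U}_i(\vec{\pi}^*,\vec{\pi}_{-i}^t)$, so summing $\tilde R_1^T$ and $\tilde R_2^T$ makes the bilinear terms $\sum_t u(\vec{\pi}_1^t,\vec{\pi}_2^t)$ cancel exactly. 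The surviving regularizer terms obey $\lambda_i\KL{\bar{\vec{\pi}}_i^T}{\vec{\tau}_i}\le\frac{\lambda_i}{T}\sum_t\KL{\vec{\pi}_i^t}{\vec{\tau}_i}$ by convexity of the KL divergence (Jensen), which is precisely the inequality needed to conclude $\epsilon_1+\epsilon_2\le\frac1T(\tilde R_1^T+\tilde R_2^T)$.

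The third and most delicate step passes from $\tilde R_i^T$ to the sampled regret $R_i^T$ controlled by \cref{prop:regret}, whose definition uses the realized actions $\vec{a}_{-i}^t$. Conditioned on the history up to time $t$, the differences $\mathcal{U}_i(\vec{\pi},\vec{\pi}_{-i}^t)-\mathcal{U}_i(\vec{\pi},\vec{a}_{-i}^t)=u(\vec{\pi},\vec{\pi}_{-i}^t)-u(\vec{\pi},\vec{a}_{-i}^t)$ form a martingale difference sequence of range $D_i$, since $\vec{\pi}_i^t$ is fixed before $\vec{a}_{-i}^t$ is drawn and $\mathbb{E}[u(\vec{\pi},\vec{a}_{-i}^t)]=u(\vec{\pi},\vec{\pi}_{-i}^t)$. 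Applying Azuma--Hoeffding to the running (distribution-played) term, and to the best-response term on each of the $|A_i|$ pure actions—valid because the relevant quantity is linear in the deviation, so its maximum over $\Delta(A_i)$ is attained at a vertex—and then union-bounding over both players and both terms (producing the $2\max_i|A_i|$ inside the log), yields $\frac1T(\tilde R_1^T+\tilde R_2^T)\le\frac{R_1^T+R_2^T}{T}+(\max_i D_i)\sqrt{\frac{32}{T}\log\frac{2\max_i|A_i|}{\delta}}$ with probability at least $1-\delta$. Combined with the deterministic bound of the second step this gives $\epsilon_1+\epsilon_2\le\xi^T(\delta)$, the claim. I expect the bookkeeping in this last step—keeping the union bound over pure best responses uniform and tracking the constant $32$ through the Azuma bounds for all the martingales—to be the main obstacle, whereas the saddle-point cancellation and the Jensen step become routine once the separable structure of the regularizers is isolated.
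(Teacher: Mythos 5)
Your proposal is correct and follows essentially the same route as the paper's proof: the same deterministic step (summing the two mixed-opponent regrets so the bilinear terms cancel, applying Jensen to the KL terms, using linearity of $\mathcal{U}_i$ in the opponent's policy, and invoking nonnegativity of each best-response gap), and the same probabilistic step (Azuma--Hoeffding plus a union bound over pure deviations, justified exactly as you say by linearity of the deviation term in $\vec{\pi}^*$). The only difference is bookkeeping: the paper folds the played term and the deviation term into a single martingale difference sequence $w^t$ of range $2D_i$, so only $|A_i|$ Azuma events per player are needed and the substitution $\delta\mapsto\delta/2$ after union-bounding over the two players yields exactly the stated $\sqrt{32}$ and $2\max_i|A_i|$, whereas your separate treatment of the running term would give $2+|A_1|+|A_2|$ events and a marginally larger logarithm factor.
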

\begin{proof}
    \newcommand{\ut}[2]{\mathcal{U}_{i}(#1, #2)}
    \newcommand{\utp}[2]{\mathcal{U}_{-i}(#2, #1)}
    \newcommand{\tildepi}[1]{{\vec{\pi}}^t_{#1}}
    \newcommand{\pit}{\vec{\pi}^t_{i}}
    Fix a player $i\in\{1,2\}$, and any policy $\vec{\pi}^*\in\Delta(A_i)$, and introduce the discrete-time stochastic process 
    \begin{align*}
        w^t &\defeq \Big(\ut{\vec{\pi}^*}{\tildepi{-i}} - \ut{\pit}{\tildepi{-i}}\Big) - \Big(\ut{\vec{\pi}^*}{a_{-i}^t} - \ut{\pit}{a_{-i}^t}\Big).
    \end{align*}
    Since the opponent player $-i$ plays according to \cref{algo:noregret}, its action $a_{-i}^t$ at all times $t$ is selected by sampling (unbiasedly) an action from the policy $\vec{\pi}^t_{-i}$. Therefore, $w^t$ is a martingale difference sequence. Furthermore, by expanding the definition of $\mathcal{U}_i$, the absolute value of $w^t$ satisfies
    \begin{align*}
        |w^t| &= \mleft|\Big(u_i(\vec{\pi}^*,  \tildepi{-i}) - u_i(\pit, \tildepi{-i})\Big) - \Big(u_i(\vec{\pi}^*, a_{-i}^t) - u_i(\pit, a_{-i}^t)\Big)\mright| \\
        &\le \mleft|u_i(\vec{\pi}^*,  \tildepi{-i}) - u_i(\pit, \tildepi{-i})\mright| - \mleft|u_i(\vec{\pi}^*, a_{-i}^t) - u_i(\pit, a_{-i}^t)\mright| \le 2D_i.
    \end{align*}
    Hence, using Azuma-Hoeffding's inequality, for any $\delta \in (0,1)$,
    \begin{align*}
        1-\delta &\le \mathbb{P}\mleft[ \sum_{t=1}^T w^t \le D_i\sqrt{8T \log \frac{1}{\delta}}\mright]\\
        &= \mathbb{P}\mleft[ \mleft(\sum_{t=1}^T \ut{\vec{\pi}^*}{\tildepi{-i}} - \sum_{t=1}^T \ut{\pit}{\tildepi{-i}}\mright) - \mleft(\sum_{t=1}^T u_i(\vec{\pi}^*, a^t_{-i}) - \sum_{t=1}^T \ut{\pit}{a_{-i}^t}\mright) \le \sqrt{8T \log \frac{1}{\delta}}\mright]\\
        &= \mathbb{P}\mleft[ \sum_{t=1}^T \ut{\vec{\pi}^*}{\tildepi{-i}} - \sum_{t=1}^T \ut{\pit}{\tildepi{-i}} \le R^T_{i} + D_i\sqrt{8T \log \frac{1}{\delta}}\mright],
    \end{align*}
    where $R^T_{i}$ is as defined in \cref{prop:regret}.
    Since the above expression holds for any $\vec{\pi}^*\in\Delta(A_i)$, in particular, using the union bound on each $a \in A_i$,
    \begin{equation}\label{eq:regret tilde}
        \mathbb{P}\mleft[ \max_{\vec{\pi}^* \in \Delta(A_i)}\sum_{t=1}^T \ut{\vec{\pi}^*}{\tildepi{-i}} - \sum_{t=1}^T \mathcal{U}_i(\vec{\pi}_{i}^t, {\vec{\pi}}_{-i}^t) \le R^T_{i} + D_i\sqrt{8T \log \frac{|A_i|}{\delta}}\mright] \ge 1-\delta
    \end{equation}
    for any player $i\in\{1,2\}$ and any $\delta\in (0,1)$.
    
    Summing Inequality~\eqref{eq:regret tilde} for $i\in\{1,2\}$ and using the union bound, we can further write
    \begin{align}
        &\mathbb{P}\mleft[ \max_{\vec{\pi}_1^* \in \Delta(A_1)}\mleft\{\sum_{t=1}^T \mathcal{U}_1(\vec{\pi}_1^*,\tildepi{2})\mright\} +
        \max_{\vec{\pi}_2^* \in \Delta(A_2)}\mleft\{\sum_{t=1}^T \mathcal{U}_2(\tildepi{1},\vec{\pi}_2^*)\mright\}
        - \mleft(\sum_{t=1}^T \mathcal{U}_1(\vec{\pi}_{1}^t, {\vec{\pi}}_{2}^t) + \mathcal{U}_2(\vec{\pi}_{1}^t, {\vec{\pi}}_{2}^t)\mright)\mright. \nonumber\\
        &\hspace{7cm}\mleft.\le R^T_{1} + R^T_2 + (\max_i D_i)\sqrt{32T \log \frac{\max_i |A_i|}{\delta}}\mright] \ge 1-2\delta.\nonumber
    \end{align}
    Dividing by $T$ and noting that for any player $i\in\{1,2\}$
    \begin{align*}
        \frac{1}{T} \sum_{t=1}^T \ut{\vec{\pi}^*}{\tildepi{-i}} &= \mathcal{U}_{i}\mleft(\vec{\pi}^*, \frac{1}{T}\sum_{t=1}^T \tildepi{-i}\mright) = \mathcal{U}_{i}\mleft(\vec{\pi}^*, \bar{\vec{\pi}}^T_{-i}\mright)
    \end{align*}
    further yields
    \begin{align}
        &\mathbb{P}\mleft[ \max_{\vec{\pi}_1^* \in \Delta(A_1)}\mleft\{\mathcal{U}_1(\vec{\pi}_1^*,\bar{\vec{\pi}}^T_{2})\mright\} +
        \max_{\vec{\pi}_2^* \in \Delta(A_2)}\mleft\{\mathcal{U}_2(\bar{\vec{\pi}}^T_1,\vec{\pi}_2^*)\mright\}
        - \frac{1}{T}\mleft(\sum_{t=1}^T \mathcal{U}_1(\vec{\pi}_{1}^t, {\vec{\pi}}_{2}^t) + \mathcal{U}_2(\vec{\pi}_{1}^t, {\vec{\pi}}_{2}^t)\mright)\mright. \nonumber\\
        &\hspace{10cm}\mleft.\le \frac{R^T_{1} + R^T_2}{T} + D_i\sqrt{\frac{32}{T} \log \frac{\max_i |A_i|}{\delta}}\mright] \ge 1-2\delta.\label{eq:regret sum 2}
    \end{align}
    We now analyze the term in parenthesis, that is, 
    \[
        (\clubsuit) \defeq -\frac{1}{T}\mleft(\sum_{t=1}^T \mathcal{U}_1(\vec{\pi}_{1}^t, {\vec{\pi}}_{2}^t) + \mathcal{U}_2(\vec{\pi}_{1}^t, {\vec{\pi}}_{2}^t)\mright)
    \]
    Plugging in the definition of $\mathcal{U}_1$ and $\mathcal{U}_2$, that is,
    \begin{align*}
        \mathcal{U}_1(\vec{\pi}_1,\vec{\pi}_2) &\defeq u_1(\vec{\pi}_1,\vec{\pi}_2) - \lambda_1 \KL{\vec{\pi}_1}{\vec{\tau}_1}\\
        \mathcal{U}_2(\vec{\pi}_1,\vec{\pi}_2) &\defeq u_2(\vec{\pi}_1,\vec{\pi}_2) - \lambda_2 \KL{\vec{\pi}_2}{\vec{\tau}_2} = -u_1(\vec{\pi}_1,\vec{\pi}_2) + \lambda_2 \KL{\vec{\pi}_2}{\vec{\tau}_2}
    \end{align*}
    into~$(\clubsuit)$ yields
    \begin{align}
        (\clubsuit) = -\frac{1}{T}\mleft(\sum_{t=1}^T \mathcal{U}_1(\vec{\pi}_{1}^t, {\vec{\pi}}_{2}^t) + \mathcal{U}_2(\vec{\pi}_{1}^t, {\vec{\pi}}_{2}^t)\mright) &= \frac{1}{T}\mleft(\sum_{t=1}^T\lambda_1 \KL{\tildepi{1}}{\vec{\tau}_1} + \lambda_2\KL{\tildepi{2}}{\vec{\tau}_2}\mright) \nonumber\\
        &\ge \lambda_1 \KL{\bar{\vec{\pi}}_1^T}{\vec{\tau}_1} + \lambda_2 \KL{\bar{\vec{\pi}}_2^T}{\vec{\tau}_2}\label{eq:step1x}\\
        &= -\mleft(\mathcal{U}_1(\bar{\vec{\pi}}_1^T, \bar{\vec{\pi}}_2^T) + \mathcal{U}_2(\bar{\vec{\pi}}_1^T, \bar{\vec{\pi}}_2^T)\mright),\label{eq:step2x}
    \end{align}
    where~\eqref{eq:step1x} follows from convexity of the KL divergence function, and~\eqref{eq:step2x} follows again from the definition of $\mathcal{U}_1$ and $\mathcal{U}_2$. Substituting~\eqref{eq:step2x} back into~\eqref{eq:regret sum 2}, we find
    \begin{align}
        &\mathbb{P}\mleft[ \max_{\vec{\pi}_1^* \in \Delta(A_1)}\mleft\{\mathcal{U}_1(\vec{\pi}_1^*,\bar{\vec{\pi}}^T_{2}) - \mathcal{U}_1(\bar{\vec{\pi}}^T_1, \bar{\vec{\pi}}^T_2)\mright\} +
        \max_{\vec{\pi}_2^* \in \Delta(A_2)}\mleft\{\mathcal{U}_2(\bar{\vec{\pi}}^T_1,\vec{\pi}_2^*) - \mathcal{U}_2(\bar{\vec{\pi}}^T_1, \bar{\vec{\pi}}^T_2)\mright\}\mright. \nonumber\\
        &\hspace{7cm}\mleft.
        \le \frac{R^T_{1} + R^T_2}{T} + (\max_i D_i)\sqrt{\frac{32}{T} \log \frac{\max_i |A_i|}{\delta}}\mright] \ge 1-2\delta.
    \end{align}
    Since
    \[
    \max_{\vec{\pi}_1^* \in \Delta(A_1)}\mleft\{\mathcal{U}_1(\vec{\pi}_1^*,\bar{\vec{\pi}}^T_{2}) - \mathcal{U}_1(\bar{\vec{\pi}}^T_1, \bar{\vec{\pi}}^T_2)\mright\} \ge 0,\quad\text{and}\quad
    \qquad
    \max_{\vec{\pi}_2^* \in \Delta(A_2)}\mleft\{\mathcal{U}_2(\bar{\vec{\pi}}^T_1,\vec{\pi}_2^*) - \mathcal{U}_2(\bar{\vec{\pi}}^T_1, \bar{\vec{\pi}}^T_2)\mright\} \ge 0,
    \]
    the inequality above in particular implies that 
    \begin{align}
        &\mathbb{P}\mleft[ \max\mleft\{\max_{\vec{\pi}_1^* \in \Delta(A_1)}\mleft\{\mathcal{U}_1(\vec{\pi}_1^*,\bar{\vec{\pi}}^T_{2}) - \mathcal{U}_1(\bar{\vec{\pi}}^T_1, \bar{\vec{\pi}}^T_2)\mright\},
        \max_{\vec{\pi}_2^* \in \Delta(A_2)}\mleft\{\mathcal{U}_2(\bar{\vec{\pi}}^T_1,\vec{\pi}_2^*) - \mathcal{U}_2(\bar{\vec{\pi}}^T_1, \bar{\vec{\pi}}^T_2)\mright\}\mright\}\mright. \nonumber\\
        &\hspace{7cm}\mleft.
        \le \frac{R^T_{1} + R^T_2}{T} + (\max_i D_i)\sqrt{\frac{32}{T} \log \frac{\max_i |A_i|}{\delta}}\mright] \ge 1-2\delta,
    \end{align}
    which is equivalent to the statement after making the variable substitution $\delta \defeq \delta'/2$.
\end{proof}

In particular, when $\eta \le 1/(\lambda_i\beta_i + 2D_i)$ for both players $i\in\{1,2\}$, \cref{prop:ne} implies that the average strategy profile is a $O(1/\sqrt{T})$-Nash equilibrium with respect to the regularized utility functions $\mathcal{U}_i$. 

A standard application of the Borel-Cantelli lemma enables to convert from the high-proability guarantees of \cref{prop:ne} at finite time to almost-sure convergence in the limit. Specifically,
\begin{corollary}\label{cor:limit is nash}
 Let $(\bar{\vec{\pi}}_1,\bar{\vec{\pi}}_2)$ be any limit point of the average policies $(\bar{\vec{\pi}}_{1}^T, \bar{\vec{\pi}}_2^T)$ of the players. Almost surely, $(\bar{\vec{\pi}}_1,\bar{\vec{\pi}}_2)$ is a Nash equilibrium with respect to the regularized utility functions $\mathcal{U}_{1},\mathcal{U}_2$, respectively.
\end{corollary}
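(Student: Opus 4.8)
The plan is to upgrade the finite-time, high-probability guarantee of \cref{prop:ne} to an almost-sure statement via the first Borel--Cantelli lemma, and then convert vanishing exploitability into the exact Nash property through a continuity and compactness argument. To organize this, I would first introduce the regularized exploitability functional
\[
    \Phi(\vec{\pi}_1, \vec{\pi}_2) \defeq \max\Big\{\max_{\vec{\pi}_1^* \in \Delta(A_1)}\{\mathcal{U}_1(\vec{\pi}_1^*, \vec{\pi}_2) - \mathcal{U}_1(\vec{\pi}_1, \vec{\pi}_2)\},\ \max_{\vec{\pi}_2^* \in \Delta(A_2)}\{\mathcal{U}_2(\vec{\pi}_1, \vec{\pi}_2^*) - \mathcal{U}_2(\vec{\pi}_1, \vec{\pi}_2)\}\Big\},
\]
so that $(\vec{\pi}_1,\vec{\pi}_2)$ is an exact Nash equilibrium of the regularized game precisely when $\Phi(\vec{\pi}_1,\vec{\pi}_2)=0$, and the conclusion of \cref{prop:ne} reads exactly $\mathbb{P}[\Phi(\bar{\vec{\pi}}_1^T, \bar{\vec{\pi}}_2^T) > \xi^T(\delta)] \le \delta$ for every $\delta\in(0,1)$.

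Next I would instantiate the confidence parameter with a summable schedule, e.g.\ $\delta_T \defeq 1/T^2$, and verify that $\xi^T(\delta_T)\to 0$. Assuming $\eta \le 1/(\lambda_i\beta_i + 2D_i)$ as required by \cref{prop:regret}, we have $R_i^T = O(\log T)$, so $(R_1^T + R_2^T)/T \to 0$; meanwhile substituting $\delta_T = 1/T^2$ replaces $\log(2\max_i|A_i|/\delta)$ by $\log(2T^2\max_i|A_i|) = O(\log T)$, so the stochastic term is $O(\sqrt{(\log T)/T}) \to 0$. Since $\sum_T \mathbb{P}[\Phi(\bar{\vec{\pi}}_1^T, \bar{\vec{\pi}}_2^T) > \xi^T(\delta_T)] \le \sum_T 1/T^2 < \infty$, the first Borel--Cantelli lemma guarantees that, almost surely, only finitely many of these events occur. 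Hence almost surely $\Phi(\bar{\vec{\pi}}_1^T, \bar{\vec{\pi}}_2^T) \le \xi^T(\delta_T)$ for all sufficiently large $T$, and since $\xi^T(\delta_T)\to 0$ we conclude $\Phi(\bar{\vec{\pi}}_1^T, \bar{\vec{\pi}}_2^T)\to 0$ almost surely.

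Finally I would pass to a limit point. On the almost-sure event above, along any subsequence $T_k$ with $(\bar{\vec{\pi}}_1^{T_k}, \bar{\vec{\pi}}_2^{T_k}) \to (\bar{\vec{\pi}}_1, \bar{\vec{\pi}}_2)$, continuity of $\Phi$ yields $\Phi(\bar{\vec{\pi}}_1, \bar{\vec{\pi}}_2) = \lim_k \Phi(\bar{\vec{\pi}}_1^{T_k}, \bar{\vec{\pi}}_2^{T_k}) = 0$, so the limit point is a Nash equilibrium of the regularized game. The step requiring the most care, and what I expect to be the main obstacle, is establishing continuity of $\Phi$. This reduces to continuity of each $\mathcal{U}_i$ on the compact product of simplices together with the fact that maximizing a continuous function over a fixed compact set yields a continuous value function (Berge's maximum theorem). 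The bilinear term $u_i$ is plainly continuous; the delicate ingredient is the term $\KL{\vec{\pi}_i}{\vec{\tau}_i}$, which is finite and continuous on all of $\Delta(A_i)$ exactly because the anchor $\vec{\tau}_i$ has full support (equivalently $\beta_i < \infty$), using $x\log x \to 0$ as $x\to 0^+$ to handle the boundary. I would therefore flag the full-support hypothesis on $\vec{\tau}_i$ explicitly, since without it $\KL{\cdot}{\vec{\tau}_i}$ is only lower semicontinuous and the clean limiting argument would have to be restricted to the relative interior.
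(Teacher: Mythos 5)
Your proposal is correct and takes essentially the same route as the paper: the paper's entire proof of this corollary is the remark that a standard application of the Borel--Cantelli lemma converts the high-probability, finite-time guarantee of \cref{prop:ne} into an almost-sure statement in the limit, and that is exactly what you carry out (summable $\delta_T$, verification that $\xi^T(\delta_T)\to 0$ via the $O(\log T)$ regret bound of \cref{prop:regret}, Borel--Cantelli, then continuity of the exploitability functional to pass to limit points). Your explicit flagging of the full-support hypothesis on $\vec{\tau}_i$ (equivalently $\beta_i<\infty$), which makes $\KL{\cdot}{\vec{\tau}_i}$ continuous up to the boundary of the simplex, is a detail the paper leaves implicit but that is indeed needed for the final continuity step.
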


From there, it is immediate to give guarantees with respect to the original (\emph{i.e.}, unregularized) game, and \cref{cor:exploitability} follows.
\thmexploitability*
\begin{proof}
    From \cref{cor:limit is nash}, almost surely $(\bar{\vec{\pi}}_1,\bar{\vec{\pi}}_2)$ is a Nash equilibrium of the regularized game whose players' utilities are $\mathcal{U}_1$ and $\mathcal{U}_2$, respectively. Expanding the definition of Nash equilibrium relative to Player~$1$, we have that
    \begin{align*}
        0 &=\max_{\vec{\pi}^*_1 \in \Delta(A_1)}\mleft\{\mathcal{U}_1(\vec{\pi}^*_1, \bar{\vec{\pi}}_2) - \mathcal{U}_1(\bar{\vec{\pi}}_1, \bar{\vec{\pi}}_2)\mright\}\\
        &= \max_{\vec{\pi}^*_1 \in \Delta(A_1)}\mleft\{u_1(\vec{\pi}^*_1, \bar{\vec{\pi}}_2) - \lambda_1\KL{\vec{\pi}^*_1}{\vec{\tau}_1} - u_1(\bar{\vec{\pi}}_1, \bar{\vec{\pi}}_2) + \lambda_1\KL{\bar{\vec{\pi}}_1}{\vec{\tau}_1} \mright\}\\
        &\ge \max_{\vec{\pi}^*_1 \in \Delta(A_1)}\mleft\{u_1(\vec{\pi}^*_1, \bar{\vec{\pi}}_2) - u_1(\bar{\vec{\pi}}_1, \bar{\vec{\pi}}_2) \mright\} - \lambda_1\KL{\vec{\pi}^*_1}{\vec{\tau}_1}\\
        &= \max_{\vec{\pi}^*_1 \in \Delta(A_1)}\mleft\{(u_1(\vec{\pi}^*_1, \bar{\vec{\pi}}_2) - u_1(\bar{\vec{\pi}}_1, \bar{\vec{\pi}}_2))  - \lambda_1\sum_{a\in A_1} \vec{\pi}^*_i(a) \log(\vec{\pi}^*_i(a))-\lambda_1\sum_{a\in A_1} \vec{\pi}^*_1(a) \log(1/ \vec{\tau}_1(a))\mright\}\\
        &\ge \max_{\vec{\pi}^*_1 \in \Delta(A_1)}\mleft\{(u_1(\vec{\pi}^*_1, \bar{\vec{\pi}}_2) - u_1(\bar{\vec{\pi}}_1, \bar{\vec{\pi}}_2)) -\lambda_1\sum_{a\in A_1} \vec{\pi}^*_1(a) \log(1/ \vec{\tau}_1(a))\mright\}\\
        &\ge \max_{\vec{\pi}^*_1 \in \Delta(A_1)}\mleft\{u_1(\vec{\pi}^*_1, \bar{\vec{\pi}}_2) - u_1(\bar{\vec{\pi}}_1, \bar{\vec{\pi}}_2)\mright\} - \lambda_1\beta_1,
    \end{align*}
    where the first inequality follows since the KL divergence is alwasy nonnegative, the second inequality since the negative entropy function is nonpositive on the simplex, and the third inequality follows from the definition of $\beta_1$. Symmetrically, for Player~$2$ we find that
    \[
        0 \ge \max_{\vec{\pi}^*_2 \in \Delta(A_2)}\mleft\{u_2(\bar{\vec{\pi}}_1, \vec{\pi}^*_2) - u_2(\bar{\vec{\pi}}_1, \bar{\vec{\pi}}_2)\mright\} - \lambda_2\beta_2.
    \]
    Hence, the exploitability of $\bar{\vec{\pi}}_1$ is at most $\lambda_1\beta_1$, while the exploitability of $\bar{\vec{\pi}}_2$ is at most $\lambda_2\beta_2$, which immediately implies the statement.
\end{proof}

\section{Illustrations of piKL-Hedge in Blotto}
\label{appendix:blotto}
\begin{figure}[ht!]
\centering
\includegraphics[width=6.8cm]{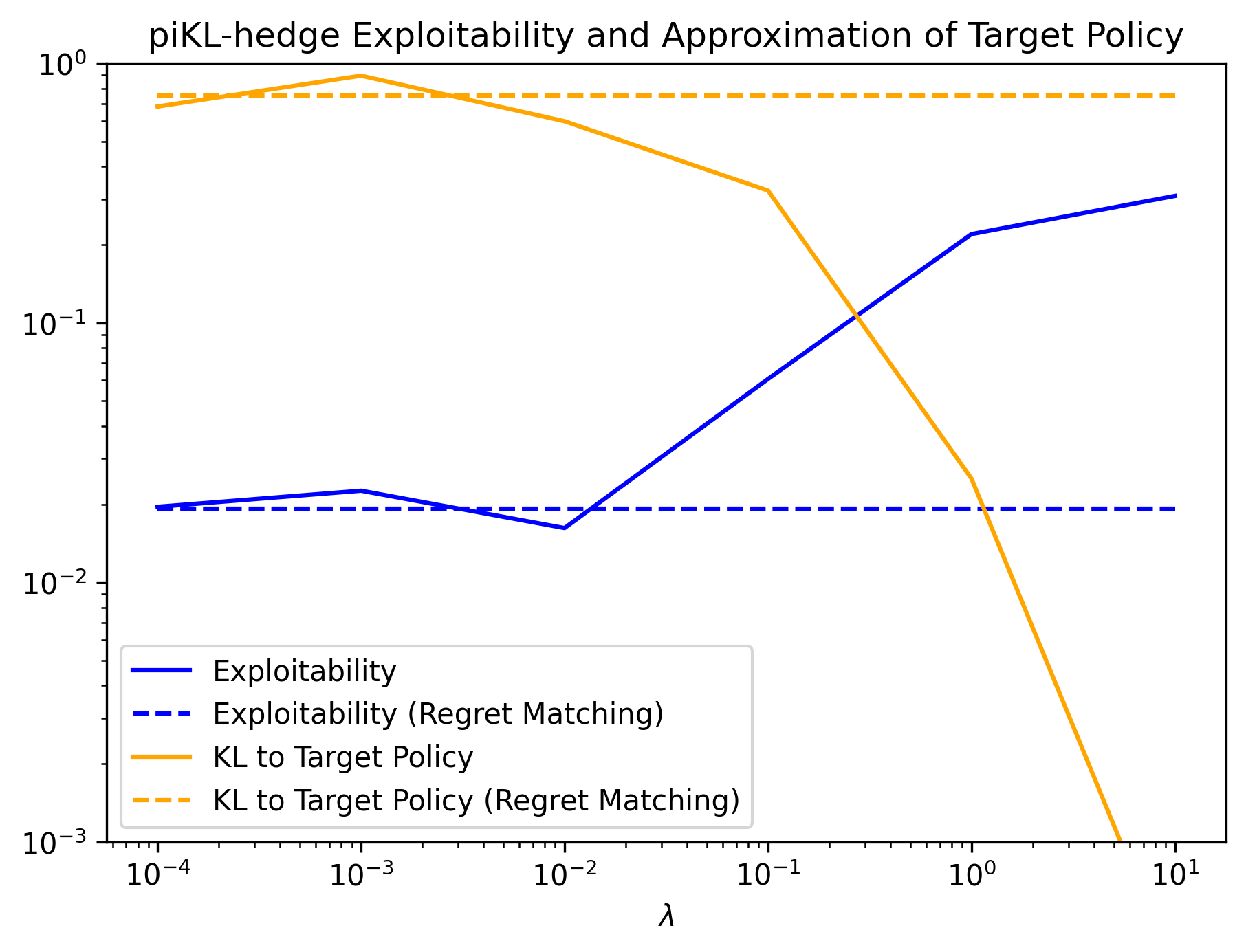}
\vspace{-0.2in}
\caption{\small Comparison of piKL and regret matching as a function of $\lambda$ in Colonel Blotto(10, 3). As $\lambda$ increases in piKL-hedge, piKL moves closer to the anchor policy at the cost of increased exploitability. The scale of $\lambda$ is related to the scale of the payoffs in the game, which are [0, 1] in Blotto. }
\label{fig:blotto}
\end{figure}

Colonel Blotto is a famous 2-player simultaneous action game that has a large action space but has rules that are short and simple. In Blotto, each player has $c$ coins to be distributed across $f$ fields. The aim is to win the most fields by allocating the player's coins across the fields. A field is won by contributing the most coins to that field (and drawn if there is a tie). The winner receives a reward of +1 and the loser receives -1. Both receive 0 in the case of a tie.

In Figure \ref{fig:blotto} we illustrate the key features of piKL-Hedge in Blotto, using a uniform anchor policy for convenience. Incidentally, piKL-Hedge with a uniform anchor policy converges to a quantal response equilibrium \cite{mckelvey1995quantal}. piKL-Hedge finds policies that play close to the anchor policy while having low regret, with $\lambda$ controlling the relative optimality of these two desiderata.

\section{Human Policy KL-Regularized Search also improves Cross-Entropy}
\label{appendix:piklcrossentropy}

\begin{table*}[ht]
\setlength{\tabcolsep}{0.5\tabcolsep}
\small
  \begin{center}
    \begin{tabular}{cc|QQQQQQ}
\toprule
Model & Dataset & (raw model) & $c_\text{puct} = 10$ & $c_\text{puct} = 5$ & $c_\text{puct} = 2$ & $c_\text{puct} = 1$ & $c_\text{puct} = 0.5$ \\
\midrule
\setcounter{MinX}{146}
\setcounter{MidX}{147}
\setcounter{MaxX}{150}
Maia1500 (Chess) & Lichess 1500 Rating Bucket & 1.476 & 1.469 & 1.465 & 1.470 & 1.504 & 1.598  \\
\setcounter{MinX}{142}
\setcounter{MidX}{144}
\setcounter{MaxX}{147}
Maia1900 (Chess) & Lichess 1900 Rating Bucket & 1.440 & 1.429 & 1.422 & 1.418 & 1.443 & 1.529  \\
\setcounter{MinX}{136}
\setcounter{MidX}{139}
\setcounter{MaxX}{141}
Our Model (Go) & GoGoD & 1.388 & 1.362 & 1.359 & 1.362 & 1.391 & 1.478 \\ 
\bottomrule
    \end{tabular}
    \caption{Cross-entropy predicting human moves in chess and Go using smooth KL optimization post-processing of MCTS with various $c_\text{puct}$. Largest standard error of any value is around 0.0033.}
   \label{table:go-cross-entropy}
  \end{center}
\end{table*}

\begin{figure*}[ht!]
\centering
\includegraphics[width=0.9\textwidth]{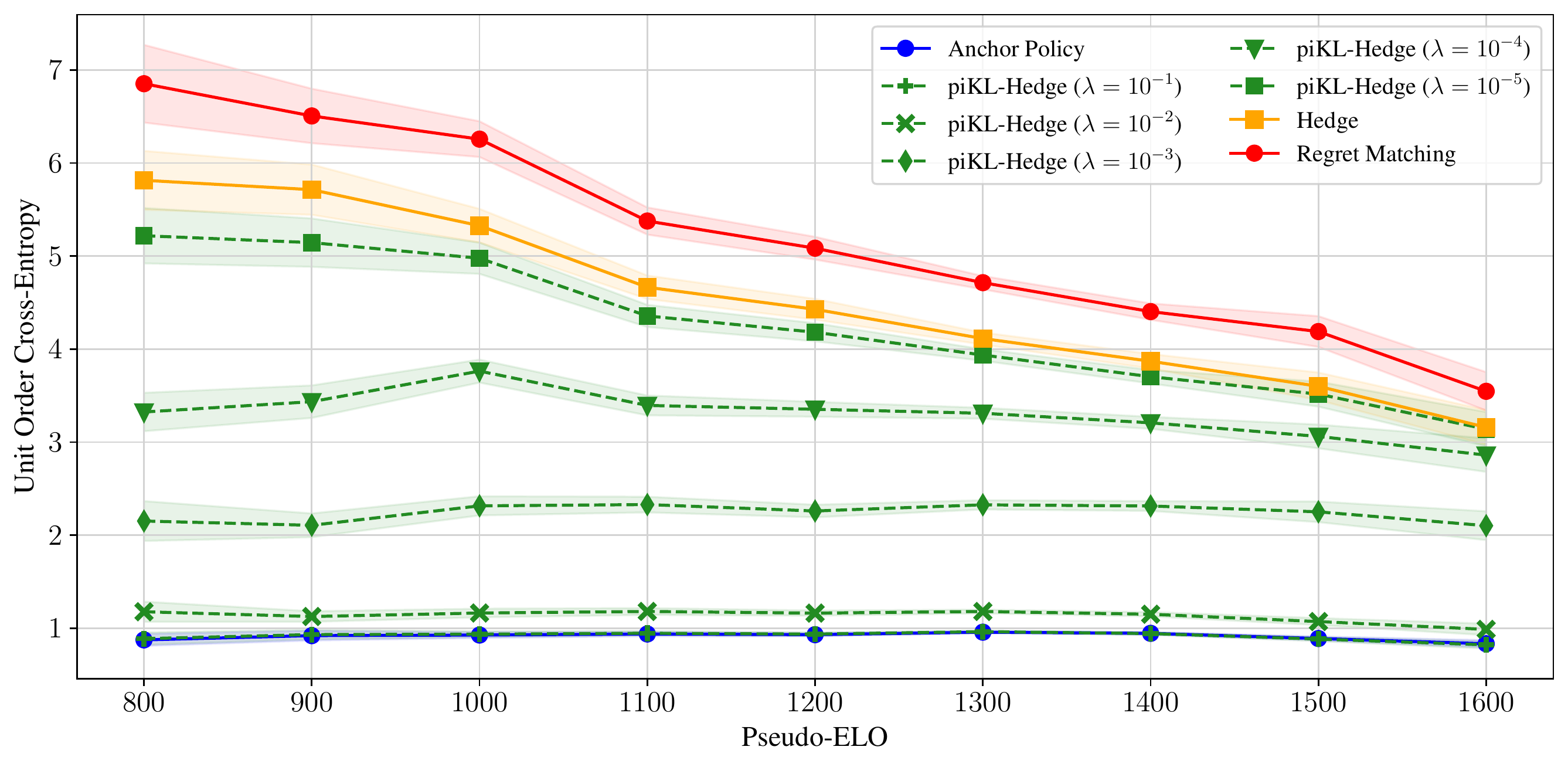}
\vspace{-0.2in}
\caption{\small Average cross-entropy of per-unit order prediction in Diplomacy, comparing the human-imitation-learned anchor policy, regret matching, hedge, and piKL-Hedge, as a function of pseudo-Elo player rating.}
\label{fig:rating_1_nopress_allplaus_avg_order_xe}
\end{figure*}

Here we show that not only does policy-regularized search improve top-1 accuracy, it also improves cross entropy for a reasonable range of parameter values in both chess and Go, as well as performing well in Diplomacy.

For Chess and Go, to obtain the policy distribution with which to compute its cross entropy with the human data, unlike for top-1 accuracy or for play we cannot directly use the MCTS visit distribution because occasionally simply due to discretization MCTS may give zero visits to the actual move that a human played, resulting in an undefined (i.e. infinite) cross-entropy. Instead, we leverage the result of \citet{grill2020monte} that PUCT-style MCTS with a policy prior can be seen as a discrete approximation to solving a smooth optimization:
$$ \argmax_{\pi} \sum_a Q(s,a)\pi(s,a) + \lambda \KL{\tau}{\pi} $$
where
$$\lambda = c_\text{puct} \frac{\sqrt{\sum_a n_a}}{(k + \sum_a n_a)}$$

where $Q$ is the current value estimate from search for each action, $\tau$ is the anchor or prior policy,  $\lambda$ controls the strength of the regularization towards that prior as a function of the number of visits, $c_\text{puct}$ is the MCTS exploration coefficient, $n_a$ is the number of times $a$ was explored and $k$ is an arbitrary constant not affecting the asymptotic results. 

We perform MCTS exactly the same as normal, the only difference is that at the very end, rather than using visit counts, we compute $\pi$ optimizing the above objective using the human imitation-learned anchor policy for $\tau$ and the MCTS-estimated Q-values for $Q$ (and using the same unweighted average child value for moves that lack a Q-value estimate due to having zero visits as described in Appendix \ref{appendix:mcts}). We use this resulting smooth $\pi$ as the final policy prediction and compute its cross entropy with the actual human moves.

Note that the authors choose $k = |A|$, the number of legal actions, for mathematical convenience, corresponding to adding one extra visit per every possible action~\citep{grill2020monte}, but since in our experiments we use only use 50 visits including the root visit (i.e. $\sum_a n_a = 49$), and the branching factor can be as large as 362 in Go, adding one extra visit per legal action in our case greatly overestimates the total number of visits, which in practice gives a less accurate correspondence between MCTS and this smoother regularized solution, so we instead choose $k = 0$.

In Table \ref{table:go-cross-entropy} we show the results. Across roughly the same parameter ranges, the regularized search policy using this smoothed MCTS postprocessing achieves lower cross-entropy with human moves than the raw imitation-learned policy without search. This suggests that not only does search improve on the raw imitation-learned policy at pinpointing the top action, it also gives a more accurate model of the overall distribution of likely human actions. 

Similarly, in \cref{fig:rating_1_nopress_allplaus_avg_order_xe}, we show that piKL-Hedge achieves better average cross entropy of unit orders in no-press Diplomacy compared to unregularized search methods, and matches that of imitation learning at $\lambda = 0.1$. This corroborates the results of Section \ref{sec:result_strong_human} and shows that piKL-Hedge provides the same benefits in modeling the overall distribution of human actions as it does on predicting the top move - outpredicting unregularized search (while playing as well or slightly better against human-like opponents), or equaling the prediction quality of imitation learning (while playing much more strongly than IL).

\section{More Experiments in Chess and Go}
\label{appendix:chessgo_experiments}

\begin{table}[H]
\centering\small
\begin{tabular}{lrrr}
\toprule
\multirow{2}{*}{\bf Game} & \multirow{2}{*}{$c_\text{puct}$} & \multicolumn{2}{c}{\bf MCTS Win\% vs raw model} \\
 &  & $\text{temp}=1$ & $\text{temp}=0.3$ \\ 
\midrule
Chess & 10.0 & 72.2\% $\pm$ 1.2\%  & 62.4\% $\pm$ 1.3\%  \\
Chess & 5.0 & 79.5\% $\pm$ 1.1\%  & 72.3\% $\pm$ 1.2\%  \\ 
Chess & 2.0 & 88.0\% $\pm$ 0.8\% & 86.1\% $\pm$ 0.9\% \\ 
Chess & 1.0 & 92.2\% $\pm$ 0.7\%  & 92.9\% $\pm$ 0.6\%  \\ 
Chess & 0.5 & 94.4\% $\pm$ 0.6\%  & 94.7\% $\pm$ 0.5\% \\ 
\midrule
Go & 10.0 & 73.2\% $\pm$ 1.4\%  & 63.3\% $\pm$ 1.5\%  \\
Go & 5.0 & 80.5\% $\pm$ 1.3\% & 74.1\% $\pm$ 1.4\%  \\
Go & 2.0 & 87.6\% $\pm$ 1.0\%  & 85.3\% $\pm$ 1.1\% \\
Go & 1.0 & 94.6\% $\pm$ 0.7\%  & 94.4\% $\pm$ 0.7\%  \\
Go & 0.5 & 96.4\% $\pm$ 0.6\%  & 97.0\% $\pm$ 0.5\%  \\
\bottomrule
    \end{tabular}
    \caption{Winrate of base model + MCTS vs base model at temperature 1 and 0.3. Base model is Maia1900 in chess, and our GoGoD model in Go. 1000 games per figure, draws count as half a win, $\pm$ indicates one standard error. Go uses Japanese rules with 6.5 komi. MCTS greatly improves strength in Chess and Go, the smallest $c_\text{puct}$ values improve it most. %
    }
    \label{table:maia-playing-strength}
\end{table}

\begin{table*}[h]
\setlength{\tabcolsep}{0.5\tabcolsep}
\small
  \begin{center}
    \begin{tabular}{lrrrrrrr}
\toprule
Model & Predicting & Main Time & Increment & Raw Model Acc \% & MCTS Acc \% & \textbf{Acc Gain from MCTS} & Approx Stderr \\
\midrule
\setcounter{MinX}{0}%
\setcounter{MidX}{15}%
\setcounter{MaxX}{90}%
Maia1500 & 1500	& 3m	& 0s	& 51.9	& 52.1	& \textbf{0.2} & 0.17 \\
Maia1500 & 1500	& 5m	& 0s	& 52.7	& 53.2	& \textbf{0.5} & 0.16 \\
Maia1500 & 1500	& 10m	& 0s	& 52.4	& 53.1	& \textbf{0.7} & 0.20 \\
Maia1500 & 1500	& 3m	& 2s	& 53.1	& 53.7	& \textbf{0.6} & 0.31 \\
Maia1500 & 1500	& 5m	& 3s	& 52.5	& 53.4	& \textbf{0.9} & 0.35 \\
Maia1500 & 1500	& 15m	& 15s	& 51.9	& 52.6	& \textbf{0.7} & 0.37 \\
\midrule						
\setcounter{MinX}{0}%
\setcounter{MidX}{75}%
\setcounter{MaxX}{200}%
Maia1900 & 1900	& 3m	& 0s	& 53.0	& 53.9	& \textbf{0.8} & 0.12 \\
Maia1900 & 1900	& 5m	& 0s	& 53.2	& 54.5	& \textbf{1.3} & 0.17 \\
Maia1900 & 1900	& 10m	& 0s	& 52.9	& 54.6	& \textbf{1.7} & 0.24 \\
Maia1900 & 1900	& 3m	& 2s	& 54.1	& 55.5	& \textbf{1.4} & 0.27 \\
Maia1900 & 1900	& 5m	& 3s	& 53.1	& 55.1	& \textbf{2.0} & 0.51 \\
Maia1900 & 1900	& 15m	& 15s	& 53.7	& 55.9	& \textbf{2.2} & 0.56 \\

\bottomrule
    \end{tabular}
    \caption{Difference in top-1 accuracy between raw model and MCTS using the best $c_\text{puct}$ in predicting human moves for chess players in rating buckets 1500 and 1900 using Maia1500 and Maia1900, split by time control of the games, excluding all time controls with fewer than 100 games. Approx Stderr indicates the rough standard error of raw accuracy values on that row given the number of games of that time control. Despite the statistical uncertainty on some individual values, overall the improvement of MCTS vs the raw model does clearly tend to be larger on the games with longer time controls.}
   \label{table:maia-by-tc}
  \end{center}
\end{table*}

This section summarizes the results of a small number of additional experiments in chess and Go. Whereas Figure \ref{fig:gochesscurve} used a temperature of 1.0 when sampling from the agent policy, we show in Table \ref{table:maia-playing-strength} that MCTS also achieves similar winrates versus the raw model when both are sampled using a much lower temperature of 0.3.

Additionally, we confirm in Go the same result that \citet{mcilroy2020aligning} reported in chess, that current top RL agents are far worse at matching human moves than even just imitation learning. We tested the current top open-source Go program KataGo~\citep{Wu2020Go} using its final 6, 10, and 15 block models\footnote{from https://katagotraining.org} which range from upper-amateur to superhuman level, and found that they achieve accuracies of about 35\%, 43\%, and 43\%, all more than 10\% lower than the results in Table \ref{table:chess-and-go-accuracy}.

Lastly, in Table \ref{table:maia-by-tc} we show that in chess the amount of improvement given by MCTS over the raw model in predicting human players on the Lichess test set tends to be larger for games with longer time controls than for shorter time controls, going from about 0.2\% to about 0.7\% between the shortest and longer time controls for 1500-1599-rated players, and going from about 0.8\% to around 2.0\% for 1900-1999-rated players. This is consistent with the intuition that humans rely more heavily on planning when they have more time to think, increasing the gain from modeling that planning.

\clearpage
\section{Baseline Model Architecture and Training for Go}
\label{appendix:gotraining}

As summarized in Section \ref{sec:chessgoarchitecture}, for training baseline imitation-learning models to play on the 19x19 board in Go, our architecture follows the same 20-block 256-channel residual net described in \citet{silver2017mastering}, except with the addition of squeeze-and-excitation layers at the end of each residual block~\citep{hu2018SE}. In particular, the following additional operations are inserted just prior to each skip connection that adds the output $R$ of a residual block to the trunk $X$:
\begin{itemize}
    \item Channelwise global average pooling of $R$ from 19 $\times$ 19 $\times$ 256 channels to 256 channels.
    \item A fully connected layer including bias from 256 channels to 64 channels.
    \item A ReLU nonlinearity.
    \item A fully connected layer including bias from 64 channels to 512 channels, which are split two vectors $S$ and $B$ of 256 channels each.
    \item Output $R$ $\times$ sigmoid($S$) + $B$ to be added back to the trunk $X$, instead of $R$ as in a normal residual net.
\end{itemize}

In other words, the final result of the residual block as a whole is ReLU($X$ + $R$ $\times$ sigmoid($S)$ + $B$) instead of ReLU($X$ + $R$).

Additionally, some games are played with a \emph{komi} (compensation given to White for playing second) that is not equal to the value of 7.5 used by \citet{silver2017mastering}. Therefore, for the final feature of the input encoding, rather than a binary-valued feature equalling 1 if the player to move is White and 0 if the player to move is Black, we instead use the real-valued feature of komi/10 if the player to move is White or -komi/10 if the player to move is Black. We additionally exclude a very tiny number of games with extreme komi values, outside of the range [-60,60].

We train using a mini-batch size of 2048 distributed as 8 batches of 256 across 8 GPUs, and train for a total of 64 epochs - roughly 475000 minibatches for the GoGoD dataset. We use SGD with momentum 0.9, weight decay coefficient of 1e-4, and a learning rate schedule of of 1e-1, 1e-2, 1e-3, 1e-4 for the first 16, next 16, next 16, and last 16 epochs respectively. 

We train both the policy and values heads jointly, minimizing the cross entropy of the policy head with respect to the one-hot move made in the actual game, and the MSE of the value head with respect to the game result of -1 or 1, except similarly to \citet{silver2017mastering} we weight the MSE value loss by 0.01 to avoid overfitting of the value head.

\section{MCTS Algorithmic Details}
\label{appendix:mcts}
In this appendix, we summarize the details of the version of MCTS used in our experiments, including one often-overlooked detail. We follow a standard MCTS implementation very similar to that of \cite{silver2017mastering}.

Each turn, the algorithm builds and expands a game tree over multiple iterations rooted at the current state for that turn. On each iteration $t$ MCTS starts at the root and descends the tree by exploring at each state $s$ an action $a$ according to some exploration method. Upon reaching a state $s_t$ not yet explored, it adds $s_t$ to the tree, queries the value function $V_i(s_t)$ for each $i$ to estimate the total expected future reward, and updates the statistics of all nodes traversed based on $V_i(s_t)$ and any intermediate rewards received. Subsequent iterations begin again from the root. For our work in chess and Go, we follow the convention where win, loss, and draw have reward 1,-1, and 0.

The statistics tracked at the node for each state $s$ where player $i$ is to move include the visit counts $N(s,a)$ which are the number of iterations that reached state $s$ and tried action $a$, and $Q(s,a)$ the average value of those iterations from the perspective of player $i$, i.e. $Q(s,a) = (1/N(s,a)) \sum_{t} V_i(s_t) + U_i(s,s_t)$ where the sum ranges only over those iterations $t$ that reached state $s$ and tried action $a$ and $U_i(s,s_t)$ is the total intermediate reward on the path from $s$ to $s_t$. When descending the tree, the exploration method is to always select the action:
\begin{equation}
\argmax_a \, Q(s,a) + c_\text{puct}\tau(s,a)\frac{\sqrt{\sum_b N(s,b)}}{N(s,a)+1}
\end{equation}
where $\tau(s,a)$ is the prior policy probability for action $a$ in state $s$, and $c_\text{puct}$ is a tunable parameter controlling the tradeoff between exploration and exploitation. The final agent policy $\pi$ is simply proportional to the visit counts for the root, i.e. $\pi(s,a) = N(s,a)/\sum_b N(s,b)$ where $s$ is the root state, or optionally we may also have $\pi(s,a) \sim N(s,a)^{1/T}$ where $T$ is a temperature parameter.

One final often-overlooked detail concerns how to evaluate states for which there is no $Q(s,a)$ estimate. Since the tree policy depends on the $Q(s,a)$ estimates of the possible actions, there is a nontrivial choice of what $Q$ value to use for an action $a$ that has been tried \emph{zero} times and therefore never estimated. Since the tree branches exponentially, deeper in the MCTS tree there will always be many actions with zero visits, and so this choice can affect the behavior of MCTS even in the limit of large amounts of search. Unfortunately, the details of this choice have sometimes been left undiscussed and undocumented in major past work, and as a result major MCTS implementations have not standardized on it, variously choosing game-loss~\citep{AZFPU}, the current running average parent $Q$ or a $Q$-value minus a heuristic offset parameter~\citep{ELFLZFPU}, or many other options. 

In our work, we use the equal-weighted average value of all actions at the parent node that have been visited at least once, i.e. $\sum_a Q(s,a) I(N(s,a) > 0) / \sum_a I(N(s,a) > 0)$ . This can be viewed as corresponding to a naive prior that the values of actions are i.i.d draws from an unknown distribution. While not perfect, our choice is simple, parameter-free, behaves in a way that is invariant to any global translation or scaling of the game's rewards, and works reasonably in practice for our purposes.

\section{Brief Description of Diplomacy}
\label{appendix:rules_diplomacy}
We briefly summarizing the rules of Diplomacy. See~\citet{paquette2019no} for a more detailed description. The board is a map of Europe partitioned into 75 regions, 34 of which are \emph{supply centers} (SCs) that players compete to control. Players command multiple units and each turn privately issue orders for each unit they own (to hold, move, support another unit, or convoy). These orders are revealed at the same time, thereby making Diplomacy a simultaneous-action game. A player wins the game by controlling a majority (18) of the SCs.
A game may also end in a draw if all remaining players agree. In this case, we use the \textbf{Sum-of-Squares (SoS)} scoring system as used in prior works \cite{paquette2019no,gray2020human,bakhtin2021no}. If no player wins, SoS defines the score of player~$i$ as $C_i^2 /\sum_{i'}C_{i'}^2$, where $C_i$ is the SC count for player~$i$.

Diplomacy is specifically designed so that a player is unlikely to achieve victory without help from other players. The full game allows unrestricted private natural-language communication between players
each turn prior to choosing orders,
but we focus on the simpler \emph{no-press} variant, in which no such communication is allowed, but modeling how opponents will behave
continues to be important.

\section{Diplomacy hyper-parameters}
\label{appendix:hyperparams}
We describe the describe the search parameters used in this work and compare it to those in previous works. As compared to \citet{gray2020human}, we use a much less expensive set of search parameters for all results in the main section of this paper (See, \cref{tab:search_params_cmp}). In \cref{tab:old_compare_agents}, we show that these tuned-down set of parameters, slightly reduces piKL-HedgeBot's performance but allows us to make similar conclusions when comparing against SearchBot \cite{gray2020human} and supervised learning-based bots from prior works. Additionally, in \cref{tab:old_compare_agents}, we also show that when our agent (piKL-HedgeBot ($\lambda = 10^{-3}$)) uses the same search parameters as \citet{gray2020human}, it outperforms SearchBot by a big margin.

In our experiments, to compare against DipNet~\citep{paquette2019no} we use the original model checkpoint\footnote{DipNet SL from \url{https://github.com/diplomacy/research}. 
} and we sample from the policy with temperature $0.1$ (as used in prior works). Similarly, to compare against SearchBot~\citep{gray2020human} agent we use the released checkpoint\footnote{blueprint from \url{https://github.com/facebookresearch/diplomacy_searchbot/releases/tag/1.0}. 
} and agent configuration\footnote{\url{https://github.com/facebookresearch/diplomacy_searchbot/blob/master/conf/common/agents/searchbot_02_fastbot.prototxt}}.

The only other search parameter unique and new to our \qre~algorithm is $\eta$ in Algorithm \ref{algo:noregret}, which we heuristically set on each Hedge iteration $t$ to $c / (\sigma \sqrt{t})$ where $\sigma$ is the standard deviation across iterations of the average utility experienced by the agent $i$ being updated. We find $c=10/3$ works well for no-press Diplomacy.

\begin{table}[ht]
\small
\center
\begin{tabular}{lr}
\toprule
\bf Model & \bf Temperature \\
\midrule
DipNet \citep{paquette2019no} & 0.1 \\
DipNet RL \citep{paquette2019no} & 0.1 \\
Blueprint \citep{gray2020human} & 0.1 \\
IL Policy (Ours) & 0.5 \\
\bottomrule
\end{tabular}
\caption{Sampling temperatures used in the models across prior works. For our imitation learning model (IL Policy), we use a temperature of 0.5 in all experiments to encourage stochasticity.}
\label{tab:sl_temperature}
\end{table}

\begin{table*}[ht]
\small
\center
\begin{tabular}{lrr}
\toprule
\bf Parameter &\bf  \cite{gray2020human} & \bf Ours \\
\midrule
Number candidate actions ($N_c$) & 50 & 30\\
Max candidate actions per unit & 3.5 & 3.5\\
Number search iterations & 256 & 512\\
Policy sampling temperature for rollouts & 0.75 & N/A \\
Policy sampling top-p & 0.95 & 0.95\\
Rollout length, move phases & 2 & 0\\
\bottomrule
\end{tabular}
\caption{Search parameters used in \citet{gray2020human} compared to the search parameters used in our work. All experiments in the main body of this work uses search settings that are much cheaper to run. We use a rollout length of 0 and $N_c = 30$ while increasing the number of search iterations to 512.}
\label{tab:search_params_cmp}
\end{table*}

\begin{table*}[h!]
\small
\resizebox{\textwidth}{!}{
\begin{tabular}{lrrrrrr}
\toprule
\bf 1x \bf $\downarrow$ 6x \bf $\rightarrow$ & \bf DipNet & \bf DipNet RL &\bf  Blueprint &\bf  BRBot & \bf SearchBot \\
\midrule
DipNet \citep{paquette2019no} & - & 6.7\% $\pm$ 0.9\% & 11.6\% $\pm$ 0.1\% & 0.1\% $\pm$ 0.1\% & 0.7\% $\pm$ 0.2\% \\
DipNet RL \citep{paquette2019no} & 18.9\% $\pm$ 1.4\% & - & 10.5\% $\pm$ 1.1\% & 0.1\% $\pm$ 0.1\% & 0.6\% $\pm$ 0.2\%\\
\midrule
Blueprint \citep{gray2020human} & 20.2\% $\pm$ 1.3\% & 7.5\% $\pm$ 1.0\% & - & 0.3\% $\pm$ 0.1\% & 0.9\% $\pm$ 0.2\%\\
BRBot \citep{gray2020human} & 67.3\% $\pm$ 1.0\% & 43.7\% $\pm$ 1.0\% & 69.3\% $\pm$ 1.7\% & - & 11.1\% $\pm$ 1.1\%\\
SearchBot \citep{gray2020human} & 51.1\% $\pm$ 1.9\% & 35.2\% $\pm$ 1.8\% & 52.7\% $\pm$ 1.3\% & 17.2\% $\pm$ 1.3\% & -\\
\midrule
piKL-HedgeBot ($\lambda=0.001$) & 54.8\% $\pm$ 1.8\% & 31.4\% $\pm$ 1.8\% & 50.3\% $\pm$ 1.8\% & 19.2\% $\pm$ 1.4\% & 16.6\% $\pm$ 1.3\%\\
\midrule
piKL-HedgeBot ($\lambda=0.001$) (\cite{gray2020human} parameters) & 60.1\% $\pm$ 1.8\% & 33.3\% $\pm$ 1.8\% & 58.1\% $\pm$ 1.8\% & 23.6\% $\pm$ 1.6\% & 20.3\% $\pm$ 1.4\% \\
\bottomrule
\end{tabular}
}
\vspace{-2mm}
\caption{Average SoS scores achieved by the 1x agent against the 6x agents.
This table compares the performance of SearchBot \cite{gray2020human} and other agents from prior work with piKL-HedgeBot ($\lambda = 10^{-3}$) that uses a much cheaper search setting. Using the much cheaper search setting, comes at relatively small cost in its performance as we use improved value and policy models (See, \cref{appendix:diplomacyarchitecture} and \cref{appendix:hyperparams} for more details). When using the same parameters as \citet{gray2020human}, piKL-HedgeBot ($\lambda = 10^{-3}$) significantly outperforms SearchBot under most settings. Note that equal performance would be 1/7 $\approx$ 14.3\%. The $\pm$ shows one standard error.}
\label{tab:old_compare_agents}
\end{table*}

\section{Diplomacy Model Architecture and Input Features}
\label{appendix:diplomacyarchitecture}
Our imitation learning policy model for Diplomacy uses the same transformer-encoder LSTM-decoder architecture as \citet{bakhtin2021no} for reinforcement learning in Diplomacy, but applied to imitation learning over human games. This architecture also resembles the architecture used by a significant amount of past work \cite{gray2020human, anthony2020learning, paquette2019no} but replaces the graph-convolution encoder with a transformer, which we find to produce good results. 

Additionally, we slightly modify the input feature encoding relative to \citet{gray2020human}, removing a small number of redundant channels and adding channels to indicate the ``home centers'' of each of the 7 powers (Austria, England, France,... etc), which are the locations where that power is allowed to build new armies or fleets. See Table \ref{table:diplomacy-features} for the new list of input features. By adding the home centers to the input encoding instead of leaving them implicit, the game becomes entirely equivariant to permutations of those powers - e.g. if one swaps all the units of England and France, and all their centers, and which centers are their home centers, the resulting game is isomorphic to the original except with the two powers renamed. 

This allows us to then augment the training data via equivariant permutations of the seven possible powers in the encoding. Every time we sample a position from the dataset for training, we also choose among all 7-factorial permutations of the powers uniformly at random, and correspondingly permute both the input and output, to reduce overfit and improve the model's generalization given the limited human data available.

\begin{table*}[h]
\center
\begin{tabular}{llr}
\toprule
\bf Feature & \bf Type & \bf Number of Channels \\
\midrule
Location has unit? & One-hot (army/fleet), or all zero & 2 \\
Owner of unit & One-hot (7 powers), or all zero & 7 \\
Buildable, Removable? & Binary & 2 \\
Location has dislodged unit? & One-hot (army/fleet), or all zero & 2 \\
Owner of dislodged unit & One-hot (7 powers), or all zero & 7 \\
Area type & One-hot (land,coast,water) & 3 \\
Supply center owner & One-hot (7 powers or neutral), or all zero & 8 \\
Home center & One-hot (7 powers), or all zero & 7 \\
\bottomrule
\end{tabular}
\caption{Per-location input features used}
\label{table:diplomacy-features}
\end{table*}

\section{Improved Value Model in Diplomacy}
\label{appendix:propval}
For no-press Diplomacy, we note that \citet{gray2020human} observed that their search agent benefits from short rollouts using the trained human policy before applying the human-learned value model to evaluate the position. Doing so appears to result in more accurate evaluations reflecting the likely outcomes from a given game state, which the raw value model may failed to learn sufficiently accurately on the limited human dataset. Since expectation of the learned value model after a short rollout appears to be better than the learned value model itself, this motivates training a model to directly approximate the former.

In a fashion broadly similar to \citet{silver2016mastering} generating rollout games to train a more accurate value head for Go, we therefore generated a large stream of data by uniformly sampling positions from the human game dataset for Diplomacy, rolling them forward between 4-8 phases of game play via the same rollout settings as \citet{gray2020human}, i.e. policy sampling temperature 0.75, top-p 0.95, and training a new value model to predict the resulting post-rollout value estimate of the old value model. Samples were continuously and asynchronously added to a replay buffer of 10000 batches, and the buffer was continuously sampled to train the same transformer-based architecture as the human-trained model from Appendix \ref{appendix:diplomacyarchitecture} initialized with the weights of that model. Training was constrained to never exceed the rate of data generation by more than a factor of 2 (i.e. using each sample twice in expectation) and proceeded for 128000 mini-batches of 1024 samples each using the ADAM optimizer with a fixed learning rate of 1e-5.

\section{More Experiments in Diplomacy}
\label{appendix:population_experiments}

This section compiles additional performance results from evaluation games.

\subsection{Head-to-Head Performance}
In Table \ref{tab:1v6-piKL}, we compare the performance of \qre in 1v6 head-to-head games against the underlying imitation anchor policy, following prior work \cite{gray2020human, bakhtin2021no, paquette2019no, anthony2020learning}. We find that the $\lambda=10^{-1}$ policy is substantially stronger than the imitation policy while matching the accuracy in predicting human moves, while the $\lambda=10^{-3}$ policy outperforms unregularized search methods while playing much closer to the human policy.

\begin{table*}[ht!]
\centering
\small
\resizebox{\textwidth}{!}{
\begin{tabular}{llr}
\toprule
 1x & 6x & \textbf{Average SoS Score} \\
\midrule
\multirow{7}{*}{\bp} & piKL-HedgeBot ($\lambda=10^{-1}$)       &      8.3$\pm$0.9\%    \\    
& piKL-HedgeBot ($\lambda=10^{-2}$)     &   2.5$\pm$0.4\%    \\   
& piKL-HedgeBot ($\lambda=10^{-3}$)     &   1.8$\pm$0.3\%    \\   
& piKL-HedgeBot ($\lambda=10^{-4}$)     &   2.1$\pm$0.3\%    \\   
& piKL-HedgeBot ($\lambda=10^{-5}$)     &   1.6$\pm$0.2\%    \\  
& HedgeBot     &   1.5$\pm$0.2\%    \\  
& \sbot           &     1.4$\pm$0.2\%    \\    
\bottomrule
\end{tabular}
\quad
\begin{tabular}{llr}
\toprule
\bf 1x & \bf 6x & \bf \textbf{Average SoS Score} \\
\midrule
piKL-HedgeBot ($\lambda=10^{-1}$)      & \multirow{7}{*}{\bp} &  21.1$\pm$1.4\%     \\   
piKL-HedgeBot ($\lambda=10^{-2}$)     &  &  44.2$\pm$1.7\%    \\      
piKL-HedgeBot ($\lambda=10^{-3}$)    &  &  52.7$\pm$1.7\%     \\  
piKL-HedgeBot ($\lambda=10^{-4}$)     &  &  49.7$\pm$1.7\%    \\      
piKL-HedgeBot ($\lambda=10^{-5}$)    &  &  46.9$\pm$1.7\%     \\
HedgeBot                   &  &  46.5$\pm$1.7\%     \\
\sbot                     &  &  46.2$\pm$1.7\%    \\  
\bottomrule
\end{tabular}
}
\caption{Average SoS score attained by the 1x agent against the 6x agent. \qre ($\lambda=10^{-1}$) policy is substantially stronger than IL Policy, while the ($\lambda=10^{-2}$) policy is almost as strong as \sbot. The $\pm$ shows one standard error. Note that equal performance would be 1/7 $\approx$ 14.3\%. 
}
\label{tab:1v6-piKL}
\end{table*}

\subsection{piKL-HedgeBot's performance in population-based experiments}

In this section, we provide all the results from the population experiments across various piKL-HedgeBot's lambda values. \cref{fig:diplomacypopulation} and \cref{tab:diplomacypopulation} show the results. piKL-HedgeBot with $\lambda=10^{-3}$ performs best across individual population experiments with an SoS score of 32.9\%. The performance drops as we continue to increase $\lambda$ past $1e-3$. Error bars indicate 1 standard error.

\begin{figure*}[ht!]
\centering
\includegraphics[width=15cm]{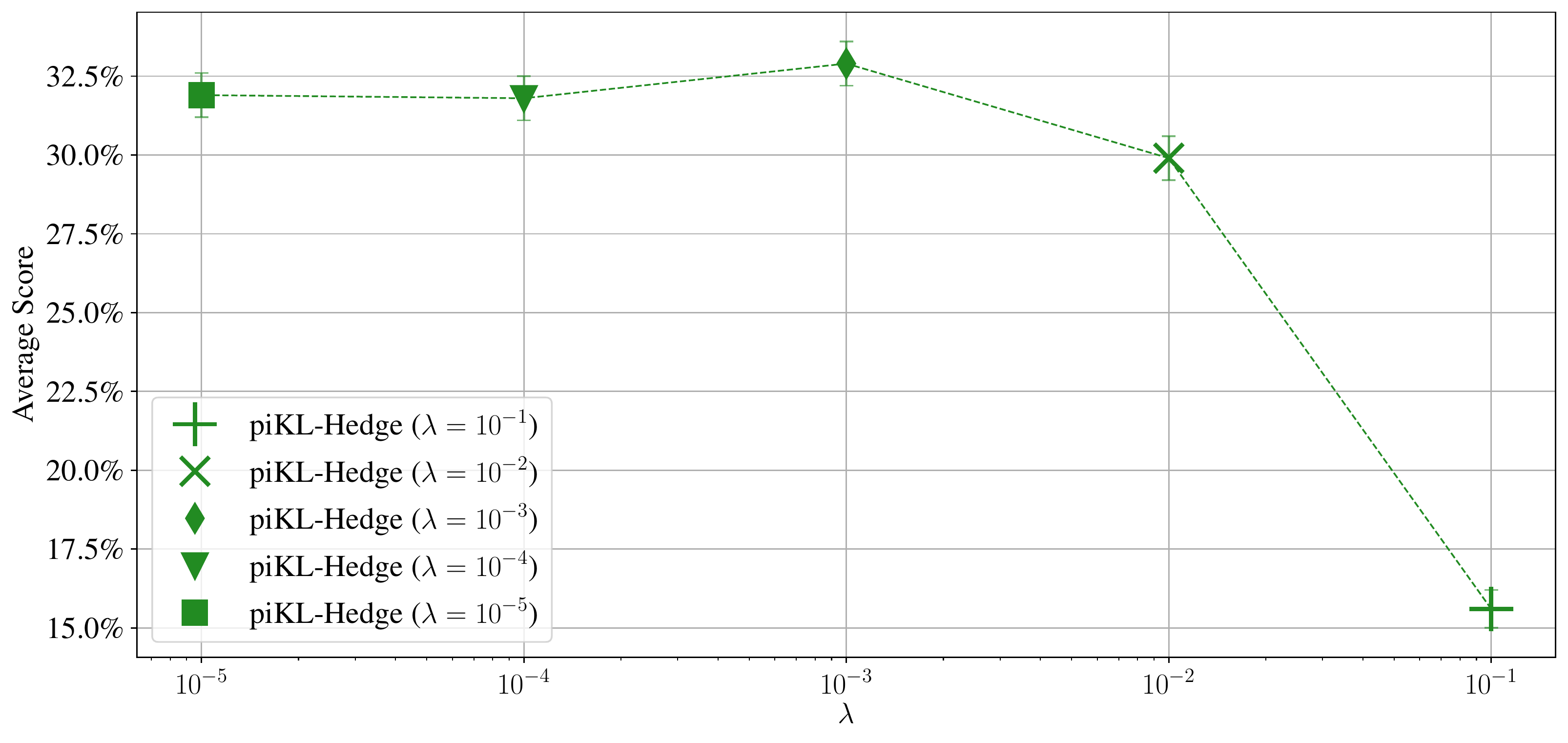}
\vspace{-0.2in}
\caption{\small Average SoS score achieved by piKL-HedgeBots in uniformly sampled pools of other agents as a function of $\lambda$. piKL-HedgeBot ($\lambda = 10^{-3}$) performs best across individual sweeps with an SoS score of 32.9\%.}
\label{fig:diplomacypopulation}
\end{figure*}

\begin{table*}
\centering
\small
\begin{tabular}{lr}
\toprule
\bf \textbf{Agent} & \bf \textbf{Average SoS Score}\\
\midrule
DipNet \citep{paquette2019no}    & 4.9\% $\pm$ 0.3\% \\
DipNet RL \citep{paquette2019no}   & 5.6\% $\pm$ 0.4\% \\
\midrule
Blueprint  \citep{gray2020human}   & 7.1\% $\pm$ 0.4\% \\
BRBot \citep{gray2020human}       & 18.2\% $\pm$ 0.6\% \\
SearchBot \citep{gray2020human}  & 36.1\% $\pm$ 0.8\% \\
\midrule
\bp   & 10.2\% $\pm$ 0.6\% \\
\sbot  & 36.8\% $\pm$ 1.1\% \\
\midrule
piKL-HedgeBot ($\lambda=10^{-1}$)      &      15.6\% $\pm$ 0.6\% \\
\bottomrule
\end{tabular}
\quad
\begin{tabular}{lr}
\toprule
 \textbf{Agent} & \textbf{Average SoS Score}\\
\midrule
DipNet \citep{paquette2019no}    & 3.8\% $\pm$ 0.3\% \\
DipNet RL \citep{paquette2019no}   & 4.2\% $\pm$ 0.3\% \\
\midrule
Blueprint  \citep{gray2020human}   & 5.8\% $\pm$ 0.4\% \\
BRBot \citep{gray2020human}       & 16.3\% $\pm$ 0.6\% \\
SearchBot \citep{gray2020human}  & 14.1\% $\pm$ 0.6\% \\
\midrule
\bp   & 8.5\% $\pm$ 0.4\% \\
\sbot  & 31.7\% $\pm$ 0.8\% \\
\midrule
piKL-HedgeBot ($\lambda=10^{-2}$)      &      29.9\% $\pm$ 0.7\% \\
\bottomrule
\end{tabular}
\quad
\begin{tabular}{lr}
\toprule
 \textbf{Agent} & \textbf{Average SoS Score}\\
\midrule
DipNet \citep{paquette2019no}    & 3.7\% $\pm$ 0.3\% \\
DipNet RL \citep{paquette2019no}   & 4.7\% $\pm$ 0.3\% \\
\midrule
Blueprint  \citep{gray2020human}   & 4.9\% $\pm$ 0.3\% \\
BRBot \citep{gray2020human}       & 16.1\% $\pm$ 0.6\% \\
SearchBot \citep{gray2020human}  & 13.4\% $\pm$ 0.5\% \\
\midrule
\bp   & 7.9\% $\pm$ 0.4\% \\
\sbot  & 31.3\% $\pm$ 0.7\% \\
\midrule
piKL-HedgeBot ($\lambda=10^{-3}$)      &      32.9\% $\pm$ 0.7\% \\
\bottomrule
\end{tabular}
\quad
\begin{tabular}{lr}
\toprule
 \textbf{Agent} & \textbf{Average SoS Score}\\
\midrule
DipNet \citep{paquette2019no}    & 3.5\% $\pm$ 0.3\% \\
DipNet RL \citep{paquette2019no}   & 4.6\% $\pm$ 0.3\% \\
\midrule
Blueprint  \citep{gray2020human}   & 5.7\% $\pm$ 0.3\% \\
BRBot \citep{gray2020human}       & 14.3\% $\pm$ 0.6\% \\
SearchBot \citep{gray2020human}  & 13.6\% $\pm$ 0.5\% \\
\midrule
\bp   & 8.8\% $\pm$ 0.4\% \\
\sbot  & 31.8\% $\pm$ 0.7\% \\
\midrule
piKL-HedgeBot ($\lambda=10^{-4}$)     &      31.8\% $\pm$ 0.7\% \\
\bottomrule
\end{tabular}
\quad
\begin{tabular}{lr}
\toprule
 \textbf{Agent} & \textbf{Average SoS Score}\\
\midrule
DipNet \citep{paquette2019no}    & 3.6\% $\pm$ 0.3\% \\
DipNet RL \citep{paquette2019no}   & 4.4\% $\pm$ 0.3\% \\
\midrule
Blueprint  \citep{gray2020human}   & 5.3\% $\pm$ 0.3\% \\
BRBot \citep{gray2020human}       & 15.0\% $\pm$ 0.6\% \\
SearchBot \citep{gray2020human}  & 13.0\% $\pm$ 0.5\% \\
\midrule
\bp   & 8.9\% $\pm$ 0.4\% \\
\sbot  & 32.2\% $\pm$ 0.7\% \\
\midrule
piKL-HedgeBot ($\lambda=10^{-5}$)      &      31.9\% $\pm$ 0.7\% \\
\bottomrule
\end{tabular}
\quad
\begin{tabular}{lr}
\toprule
 \textbf{Agent} & \textbf{Average SoS Score}\\
\midrule
DipNet \citep{paquette2019no}    & 3.6\% $\pm$ 0.3\% \\
DipNet RL \citep{paquette2019no}   & 3.9\% $\pm$ 0.3\% \\
\midrule
Blueprint  \citep{gray2020human}   & 5.5\% $\pm$ 0.3\% \\
BRBot \citep{gray2020human}       & 14.7\% $\pm$ 0.6\% \\
SearchBot \citep{gray2020human}  & 14.1\% $\pm$ 0.6\% \\
\midrule
\bp   & 8.7\% $\pm$ 0.4\% \\
\sbot  & 32.2\% $\pm$ 0.7\% \\
\midrule
HedgeBot     &      31.7\% $\pm$ 0.7\% \\
\bottomrule
\end{tabular}

\caption{Average SoS score achieved by agents in uniformly sampled pools of other agents. piKL-HedgeBot with $\lambda=10^{-3}$ performs best across individual sweeps with an SoS score of 32.9\%. The $\pm$ shows one standard error.}
\label{tab:diplomacypopulation}
\end{table*}

\clearpage
\section{Dec-POMDP Games: Policy-regularized SPARTA on Hanabi}
\label{appendix:hanabi}

In this section, we extend KL-regularized search to decentralized partially observable Markov decision processes (Dec-POMDP) and test it on the Hanabi benchmark~\cite{bard2020hanabi}. We first train an imitation learning policy (IL policy) from human data. We then use the IL policy as the blueprint policy in SPARTA~\cite{lerer2020improving}, a search technique for Dec-POMDPs, and apply KL-regularization toward the IL policy in SPARTA. We call this new algorithm piKL-SPARTA. We show that piKL-SPARTA matches or even slightly improves the original IL policy in human move prediction accuracy while greatly improving self-play performance.

\subsection{Background}

A Dec-POMDP is an $N$-player fully cooperative game with state space $S$ that is partially observed by each player $i$ through their individual observation function $o_i = \Omega_i(s)$ for $s \in S$, with joint action space $A = A_1 \times A_2 \times \cdots \times A_N$ and transition function $\mathcal{T} \colon S \times A \to P(S)$ that returns the distribution of next state given current state and joint action. 
The reward function $R \colon S \times A \to \mathbb{R}$ assigns a scalar reward for the \textit{entire} team at each time step. 
A trajectory is denoted as $\tau^t = (s^0, \textbf{a}^0, r^0, \ldots, s^t)$ while the action-observation history (AOH) of each player is defined as $\tau^t_i = (o^0, \textbf{a}^0, r^0, \ldots, o^t)$. A full trajectory or full AOH that reaches the terminal state may be denoted more simply as $\tau$ or $\tau_i$ respectively. The policy for each individual player $\pi_i(a_i^t | \tau^t_i)$ takes as input the AOH and returns a distribution over valid actions. The joint policy $\pi = (\pi_1, \ldots, \pi_N)$ is a tuple containing all players' policies. The goal is to find a policy to maximize the expected total return $\pi^{*} = \argmax_{\pi} J(\pi) =  \mathbb{E}_{\tau \sim P(\tau | \pi)} R^0(\tau)$ where $R^t(\tau) = \sum_{t' \geq t} \gamma^{(t' - t)} r_t'$ is the forward looking return with optional discount factor $\gamma \leq 1$.

Hanabi is a well-established large-scale Dec-POMDP benchmark~\cite{bard2020hanabi}. It is a 2 to 5 player card game with a deck of 50 cards equally divided into 5 color suits. Each color consists of five ranks with three 1s, two 2s, two 3s, two 4s, and one 5. Each player draws five cards from a randomly shuffled deck to start the game. The goal of the team is to play cards in order of increasing rank from 1 to 5 for every color suit. Players take turns to either play a card, discard a card, or give a hint to another player about their cards. When giving a hint, the acting player picks a receiver of the hint and a rank or color of any card in the receiver's hand. The recipient will then learn exactly which cards in their hand match the given rank or color. Hinting costs one information token. The team starts with eight information tokens and they can recoup one information token after discarding a card or successfully playing a 5 of any color. If a player makes an invalid play, e.g. playing a red 3 when a red 2 is not played yet, the team loses one life token. After each play or a discard, a player draws a new card if possible. The game ends when three life tokens are lost, in which case the team receives 0 points, or one round after the entire deck is exhausted, in which case the score equals the number of cards successfully played. 

The majority of existing works in the Hanabi domain focus on learning human compatible policies without using human data~\cite{bard2020hanabi, siu2021, hu2021off}. Fewer works have explored better modeling human policies using human game data, partly due to the lack of publicly available datasets. To the best of our knowledge, the strongest supervised learning agent trained from human data was done by~\cite{hu2021off} where the authors use it as an unseen test-time partner agent in evaluation to estimate how well their agents might collaborate with humans. No prior work has been done to better predict human moves in Hanabi.

A few search techniques such as SPARTA~\cite{lerer2020improving} and RL-search~\cite{rlsearch} have been proposed for large scale Dec-POMDPs and have specifically been applied in Hanabi. In this work we choose SPARTA as our backbone for simplicity, while noting that our KL-regularization methods may also be generalized to other search techniques. SPARTA is a test-time policy improvement algorithm that can be applied on top of any policy. SPARTA assumes that a \textbf{blueprint} policy (BP) $\pi$ is common knowledge in the Dec-POMDP and all players play their part of the blueprint unless they should deviate according to the SPARTA rule. Here we briefly discuss the single-agent variant where the search agent assumes that other agents will always play the blueprint. Given blueprint $\pi$, SPARTA first defines a belief function that tracks the distribution of the real trajectory given the search agent's own AOH $\mathcal{B}_i(\tau^t) = P(\tau^t | \tau_i^t, \pi)$. Then the search agent $i$ computes the expected value for each action $a$ using Monte Carlo rollouts: 
\begin{align}
Q_{\pi}(\tau_i^t, a) &= \mathbb{E}_{\tau^t \sim \mathcal{B}_i(\tau^t)} Q_{\pi}(\tau^t, a), \label{eq:sparta-q}
\end{align}
where:
\begin{align*}
Q_{\pi}(\tau^t, a) &= \mathbb{E}_{\tau \sim P(\tau |\mathcal{T}, \tau^t, a_i^t = a, a^t_{j\neq i} \sim \pi, \textbf{a}^{t' > t} \sim \pi)} R^t(\tau)
\end{align*}
is the expected forward looking return on $\tau^t$ assuming that the search agent will perform the action $a$ for the current step and follow the blueprint afterwards while other agents always follow the blueprint. SPARTA computes the belief $\mathcal{B}_i(\tau_t)$ analytically. It maintains a distribution of all possible trajectories and adjusts it at every step by removing the trajectories that contradict public knowledge or would have led to different joint actions according to the known joint policy $\pi$.

\subsection{Method}
We start with an imitation learning policy (IL policy) trained from human gameplay data collected from an online Hanabi game platform. The IL policy is used as both the baseline for predicting human moves as well as the blueprint policy for our piKL-SPARTA. 
The analytical belief update procedure in SPARTA requires full knowledge of the partners' policy. This is challenging when predicting human moves and playing with humans because our IL policy models the average behavior of the entire population of players in the training dataset and a player at test time may perform actions that the IL policy will never do, leading to a null belief space that will terminate the search. Therefore, we follow the practice in~\cite{lbs} and train an approximate neural network belief model on self-play data generated by the IL policy. 

Given the IL policy $\pi$ as blueprint and the approximate belief model $\hat{\mathcal{B}}_i$ that replaces the $\mathcal{B}_i$ in Eq.~\ref{eq:sparta-q}, piKL-SPARTA selects actions for the search player $i$ following
\begin{align}
    P(a) \propto \pi(a | \tau_i^t) \cdot \exp \left[\frac{Q_{\pi}(\tau_i^t, a)}{\lambda}\right].
    \label{eq:pikl-sparta}
\end{align}

\subsection{Experimental Setup}
We use a similar dataset acquired from \url{en.boardgamearena.com} as in~\cite{hu2021off}. The dataset consists of 240,954 2-player Hanabi games. We randomly sample 1,000 games to create a validation set and another 4,000 games for the test set. The training set contains the remaining 235,954 games with an average score of 15.88. Each game records the AOH $\tau_i, i\in \{1, 2\}$ for both players. The IL policy $\pi_\theta$ is parameterized by a neural network $\theta$ and is trained to minimize the cross-entropy loss 
$$ \mathcal{L(\theta)} = - \mathbb{E}_{\tau_i \sim D}\sum_{t=0}^{T} \pi_\theta(a_i^t|\tau_i^t) $$
with stochastic gradient descent. The training set $D$ is dynamically augmented with color shuffling~\cite{hu2020other} where a random color permutation that changes both observation and action space is applied to each trajectory $\tau_i$ sampled from the dataset before feeding it to the network. \citet{hu2021off} shows that this data augmentation method greatly reduces overfitting and leads to better policies. The network $\theta$ uses the Public-LSTM structure in~\cite{lbs}, which eliminates the need to re-unroll LSTM on sampled trajectories from the beginning of the game as they share the same public observations as the real trajectory. 

To construct the approximate belief model $\hat{\mathcal{B}}$, we train a neural network $\phi$ to predict each player's own hand, which is the only hidden information in Hanabi. 
The network predicts each card in hand from oldest to newest auto-regressively. We denote the hidden cards as $\{h_i^{j}\}$ with $h^1_i$ being the oldest and $h^m_i$ being the newest, $m \leq 5$. Then the cross-entropy loss for $\phi$ in an $N$-player setting becomes
\begin{align*}
    \mathcal{L(\phi)} = -\mathbb{E}_{\tau \sim \pi_\theta}\left[ \frac{1}{N}\sum_{i=1}^{N} \sum_{t=1}^{T} \sum_{j=1}^{m} \log P_{\phi}(h^j_i | \tau_i^t, h^1_i, \ldots, h_i^{j-1}) \right].
\end{align*}
In Hanabi, it is sufficient to reconstruct $\tau^t$ given $\tau^t_i$ and $\{h^i_j\}$.
The model is trained on infinite stream of data generated by $\pi_\theta$ through self-play to avoid overfitting. We train two variants of the belief model, one with sampled action $\textbf{a} \sim \pi_\theta$ while the other with greedy action $\textbf{a} = \argmax \pi_\theta$. They are used by piKL-SPARTA and piKL-SPARTA-G(reedy) respectively.

We first run piKL-SPARTA on the test set to compare its ability to predict human moves against the IL policy. For each $\tau_i^t$ in the test set, we sample $\frac{K}{|A_i|}$ hands from the belief model $P_{\phi}$ where $K$ is the total number of searches for this step and ${|A_i|}$ is the number of legal actions of the search player. In all our experiments we set $K=10,000$. In practice, we sample $\frac{2K}{|A_i|}$ hands and take the top $\frac{K}{|A_i|}$ samples not contradicting the public knowledge. If the belief model fails to produce any samples that comply with the public knowledge, we revert back to the blueprint. We then compute the expected value for each search action by unrolling the IL policy until the end of the game for each sampled trajectory and compare $a_{\text{pred}} = \argmax_a \pi_{\theta}(a | \tau_i^t) \cdot \exp [\frac{Q_{\pi}(\tau_i^t, a)}{\lambda}]$ against the human move $a^t_i$ from the data. We experiment with different $\lambda$ to study its effect on prediction accuracy.

We also evaluate piKL-SPARTA in self-play to compare its performance under different $\lambda$ against the IL policy. We run piKL-SPARTA and the IL policy on 4,000 games with different seeds for the deck. At each step, the IL policy acts following $a_t \sim \pi_\theta(a | \tau^t_i)$ while piKL-SPARTA acts following Eq.~(\ref{eq:pikl-sparta}) again using $K=10,000$ rollouts.

The $\lambda$ in these experiments are significantly higher than those in Diplomacy or implicitly in MCTS in Chess and Go\footnote{Via the relationship $\lambda \approx c_{\text{puct}} \sqrt{N}$ where $N$ is the number of MCTS iterations} because $\lambda$ represents the scale of utility difference that offsets a particular KL penalty, and the range of the utilities $Q_{\pi}(\tau_i^t, a)$ in Hanabi is $[-25, 25]$ whereas the range in other games are either $[0, 1]$ or $[-1, 1]$. 

We experiment with both 1p piKL-SPARTA, where one player uses piKL-SPARTA and the other follows the blueprint, and 2p piKL-SPARTA, where both players run the same single-agent version of piKL-SPARTA independently. The latter is not theoretically sound and 2p SPARTA has in past papers produced worse performance than 1p SPARTA~\cite{lerer2020improving} because 1p SPARTA assumes that partners are playing according to the blueprint policy while in actuality the partners are playing according to a SPARTA policy. Despite the lack of theoretical soundness, we are interested in whether 2p piKL-SPARTA can obtain empirical improvement over the 1p version, since piKL-SPARTA regularizes towards the blueprint and therefore the mismatch between assuming the partner follows the blueprint and the policy they actually play would likely be less severe. 

Lastly, in Dec-POMDPs, it is common to select actions greedily in self-play instead of sampling from a mixed policy, since in fully cooperative settings there is no need to avoid being deterministic or predictable to an adversary. Therefore, we also experimented with piKL-SPARTA-G(reedy) where every agent, including the baseline IL policy, play according to the $\argmax$ of their action distributions at every step, and the belief model for piKL-SPARTA-G is trained on trajectories produced by a greedy IL policy.

\subsection{Results}
\begin{figure}[ht!]
    \centering
    \includegraphics[width=10cm]{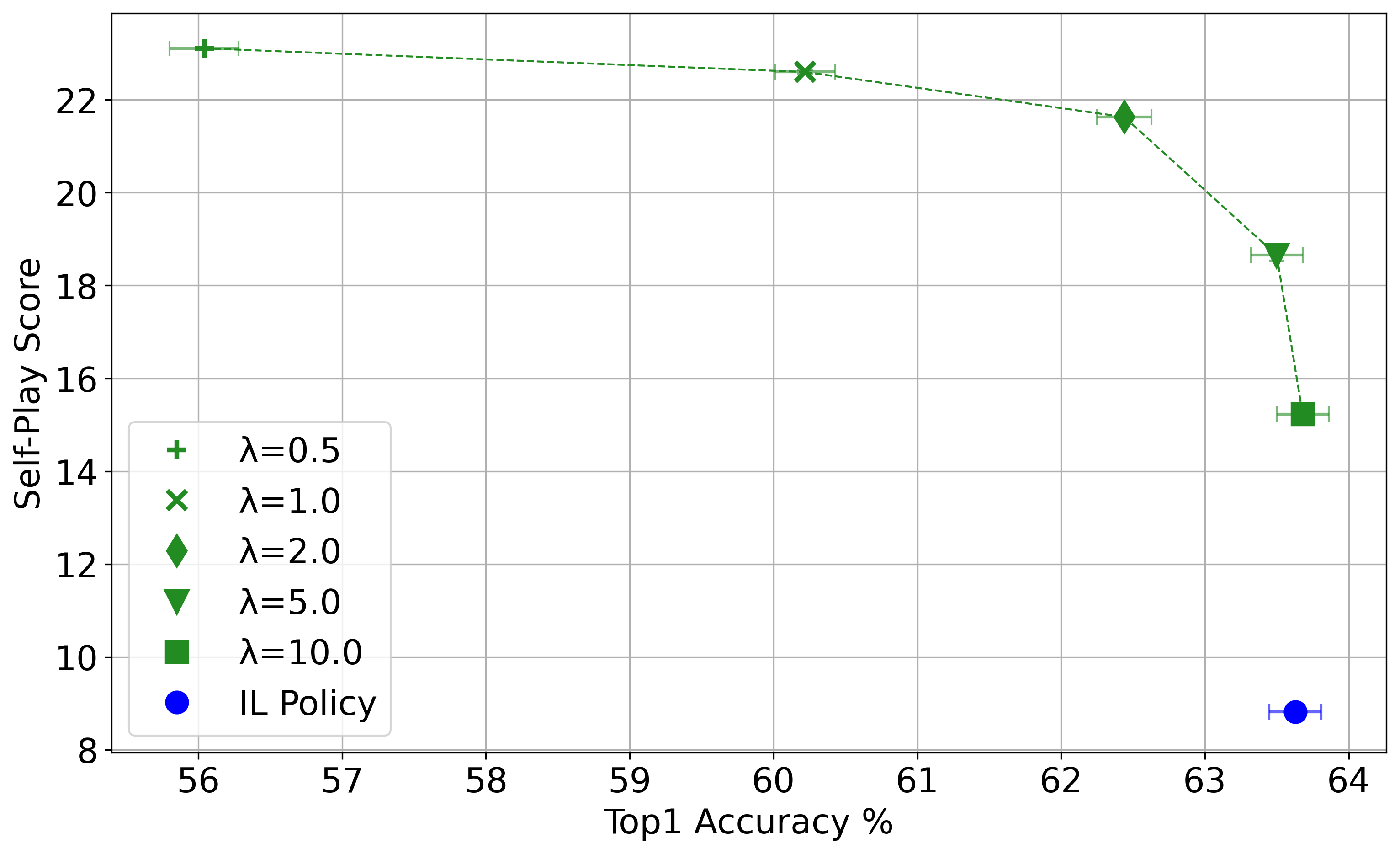}
    \caption{Top-1 test accuracy and self-play score of IL policy and piKL-SPARTA in Hanabi. Blue dot is the IL policy and green dots are piKL-SPARTA with different $\lambda$. The self-play score is evaluated with sampling based 2p piKL-SPARTA and sampling based SL policy. Additional evaluations with more $\lambda$ and more algorithm variants are presented in Table~\ref{tab:hanabi-pred} and Table~\ref{tab:hanabi-score}. Error bar is 1 standard error.}
    \label{fig:hanabi-result}
\end{figure}

\begin{table}[]
    \small
    \centering
    \begin{tabular}{ c c c c c | c}
    \toprule
    Subset of Test Set & $\lambda=$0 & $\lambda=$0.5 & $\lambda=$1 &$\lambda=$2 & IL Policy \\
    \midrule
    All games  &  25.30\% $\pm$ 0.16\%  &  56.04\% $\pm$ 0.24\%  & 60.22\% $\pm$ 0.21\% & 62.44\% $\pm$ 0.19\% & 63.63\% $\pm$ 0.18\% \\
    Games w/score $\geq$ 10 & 27.01\% $\pm$ 0.18\% & 58.86\%  $\pm$ 0.25\% & 62.62\% $\pm$ 0.22\% & 64.51\% $\pm$ 0.20\% & 65.29\% $\pm$ 0.19\% \\
    Games w/score $\geq$ 20 & 28.87\% $\pm$ 0.19\% & 61.86\% $\pm$ 0.25\% & 65.21\% $\pm$ 0.23\% & 66.81\% $\pm$ 0.21\% & 67.39\% $\pm$ 0.19\% \\
    \bottomrule
    \toprule
    Subset of Test Set & $\lambda=$5 & $\lambda=$10 & $\lambda=$20 & $\lambda=$50 & IL Policy \\
    \midrule
    All games  &  63.50\% $\pm$ 0.18\% & 63.68\% $\pm$ 0.18\% & \textbf{63.71\% $\pm$ 0.18\%} & 63.68\% $\pm$ 0.18\% & 63.63\% $\pm$ 0.18\% \\
    Games w/score $\geq$ 10 & 65.33\% $\pm$ 0.19\% & \textbf{65.43\% $\pm$ 0.19\%} & 65.42\% $\pm$ 0.19\% & 65.35\% $\pm$ 0.19\% & 65.29\% $\pm$ 0.19\% \\
    Games w/score $\geq$ 20 & 67.48\% $\pm$ 0.19\% & \textbf{67.54\% $\pm$ 0.19\%} & 67.53\% $\pm$ 0.19\% & 67.45\% $\pm$ 0.19\% & 67.39\% $\pm$ 0.19\% \\
    \bottomrule
    \end{tabular}
    \caption{Human prediction accuracy of unregularized SPARTA ($\lambda = 0$) and of piKL-SPARTA with different $\lambda$ and the IL policy on test set. Each row represents their accuracy on a subset filtered by the final score of the games. pikl-SPARTA achieves similar prediction accuracy as IL for most $\lambda$ and is far more accurate than unregularized SPARTA.}
    \label{tab:hanabi-pred}
\end{table}

\begin{table}[]
    \setlength{\tabcolsep}{1.2mm}
    \small
    \centering
    \begin{tabular}{ c c c c c c c | c}
    \toprule
    & $\lambda=$0 & $\lambda=$0.5 & $\lambda=$1 & $\lambda=$2 & $\lambda=$5 & $\lambda=$10 & IL Policy \\
    \midrule
    1p piKL-SPARTA & 20.56 $\pm$ 0.05 & 21.16 $\pm$ 0.06 & 20.02 $\pm$ 0.09 & 17.71 $\pm$ 0.12 & 13.53 $\pm$ 0.16 & 11.04 $\pm$ 0.17 & \multirow{2}{*}{8.81 $\pm$ 0.15} \\
    2p piKL-SPARTA & 20.23 $\pm$ 0.04 & \textbf{23.11 $\pm$ 0.03} & 22.60 $\pm$ 0.03 & 21.63 $\pm$ 0.05 & 18.65 $\pm$ 0.11 & 15.23 $\pm$ 0.15 & \\
    \midrule
    1p piKL-SPARTA-G & 22.78 $\pm$ 0.03 & 23.07 $\pm$ 0.03 & 22.76 $\pm$ 0.03 & 22.41 $\pm$ 0.04 & 21.91 $\pm$ 0.05 & 21.45 $\pm$ 0.06 & \multirow{2}{*}{19.72 $\pm$ 0.10} \\
    2p piKL-SPARTA-G & 19.98 $\pm$ 0.04 & \textbf{23.72 $\pm$ 0.02} & 23.39 $\pm$ 0.03 & 22.93 $\pm$ 0.03 & 22.36 $\pm$ 0.03 & 21.98 $\pm$ 0.04 &  \\
    \bottomrule
    \end{tabular}
    \caption{Performance of unregularized SPARTA ($\lambda = 0$) and piKL-SPARTA under different $\lambda$ and IL policy evaluated on 4,000 self-play games, reported $\pm$ one standard error. In the top two rows, both piKL-SPARTA and IL policy sample actions according to their action distribution respectively. In the bottom two rows, both algorithms take greedy actions; ``-G" is short for ``-Greedy". All numbers shown in each table section use the same learned belief model for fair comparison. At the high lambdas that maintain or improve human accuracy, piKL-SPARTA significantly improves playing strength over IL, while at lower lambda values piKL-SPARTA outscores unregularized SPARTA. For reference, unregularized greedy 1p SPARTA with exact beliefs rather than learned beliefs gets 23.49 $\pm$ 0.02.}
    \label{tab:hanabi-score}
\end{table}

Table~\ref{tab:hanabi-pred} summarizes the results for human move prediction. On the full test set, piKL-SPARTA with $\lambda = 10$ and $\lambda = 20$ outperform the IL policy slightly, although not statistically confidently, and vastly outperform unregularized SPARTA. We also investigate the prediction accuracy on games with score $\geq 10$ or $\geq 20$ that more predominantly come from more experienced human players. The improvement of piKL-SPARTA over the IL policy may be larger on these games, although this result is noisy and at best should be considered only mildly suggestive since filtering on final score also introduces other major confounding factors as well.

In Table~\ref{tab:hanabi-score}, we show the self-play performance piKL-SPARTA and IL policy. The top two rows show the results of the sampling version while the bottom two rows show those of the greedy version. The conclusions are consistent across both cases. Both 1p and 2p variants of piKL-SPARTA outscore IL for all $\lambda$ tested. Together with Table~\ref{tab:hanabi-pred} we find with $\lambda = 10$, piKL-SPARTA maintains IL prediction accuracy and outscores IL greatly in self-play, an overall improvement without any tradeoff. For smaller $\lambda = 2$ or $\lambda = 5$, piKL-SPARTA further outscores IL while losing some prediction accuracy on human moves, but remains vastly more accurate than unregularized SPARTA ($\lambda = 0$). 

Through qualitative analysis of the games played by both methods, we find that IL makes many mistakes due to both sampling low probability actions and the fact that the training set contains bad moves. piKL-SPARTA, on the other hand, avoids many of these mistakes while still otherwise following the same strategies and humanlike signaling conventions. It is particularly good at preventing catastrophic failures where the agents lose all life tokens and points since the $Q$ values for good and bad actions differ much more widely in those cases. 

The 2p piKL-SPARTA variants, despite being theoretically unsound, result in a further score improvement. Meanwhile, we notice that 2p versions of the plain SPARTA ($\lambda = 0$) underperform their 1p variants, which is consistent with the observations from~\cite{lerer2020improving} despite us using an approximate learned belief model instead of exact belief. This suggests that regularization towards the blueprint IL policy is successful in keeping  the trajectories similar enough to those from the blueprint that the unsound assumption that the partner plays the blueprint does not cause major problems, while still allowing both players to deviate enough to correct major blunders that they would otherwise make.

Lastly, we notice that 1p piKL-SPARTA with $\lambda=0.5$ outperforms the unregularized version in self-play. The reason could be that the quality of the samples from the learned belief model worsen as the trajectories it sees at test time becomes more off-distribution for smaller $\lambda$. For reference, we also run the original SPARTA which does not have this problem as it computes beliefs analytically. The original SPARTA gets 23.49 $\pm$ 0.02, which is better than 1p piKL-SPARTA with $\lambda=0.5$ but worse than 2p piKL-SPARTA with the same $\lambda$. The computational cost of 1p SPARTA with exact beliefs and 2p piKL-SPARTA with approximate learned beliefs is roughly the same because the exact beliefs computation accounts for 50\% of the entire computation. Therefore, 2p piKL-SPARTA could be a preferable option to improve performance in Dec-POMDPs via self play without incurring the huge cost of joint SPARTA or RL-search~\cite{rlsearch}.

\end{document}